\renewcommand{\texttt}[1]{\xspace{\normalfont{\oldtexttt{#1}}}\xspace}
\newtheorem{fact}{Fact}
\renewcommand{\paragraph}[1]{\vspace{3pt}{\it #1}} 
\newcommand{\cint}{c_{int}\xspace}
\newcommand{\defn}[1]{\textbf{\emph{\boldmath #1}}}
\newcommand{\passive}{P\xspace}
\newcommand{\jams}{\mathcal{J}\xspace}
\newcommand{\clow}{C_{\rm low}\xspace}
\newcommand{\chigh}{C_{\rm  high}\xspace}
\newcommand{\whp}{w.h.p.\xspace}
\newcommand{\polylog}{\operatorname{polylog}}
\newcommand{\poly}{\text{poly}}
\newcommand{\OurAlg}{\textsc{Low-Sensing Backoff}\xspace} 
\renewcommand{\epsilon}{\varepsilon}
\newcommand{\secref}[1]         {Section~\ref{sec:#1}}
\newcommand{\thmref}[1]         {Theorem~\ref{thm:#1}}
\newcommand{\thmreftwo}[2]      {Theorems~\ref{thm:#1} and~\ref{thm:#2}}
\newcommand{\thmlabel}[1]   {\label{thm:#1}}
\newcommand{\lemlabel}[1]   {\label{lem:#1}}
\newcommand{\lemref}[1]         {Lemma~\ref{lem:#1}}
\newcommand{\lemreftwo}[2]{Lemmas~\ref{lem:#1} and~\ref{lem:#2}}
\newcommand{\lemrefthree}[3]    {Lemmas~\ref{lem:#1}, \ref{lem:#2} and~\ref{lem:#3}}
\newcommand{\seclabel}[1]{\label{sec:#1}}
\newcommand{\corref}[1]         {Corollary~\ref{cor:#1}}
\newcommand{\wmin}{w_{\rm min}}
\newcommand{\wmax}{w_{\max}(t)\xspace}
\newcommand{\win}{d_1}
\newcommand{\lose}{d_2}
\newcommand{\Con}{C(t)\xspace} 
\newcommand{\first}{N\xspace}
\newcommand{\second}{H\xspace}
\newcommand{\third}{L\xspace}
\newcommand{\psuccess}{p_{\rm suc}\xspace}
\newcommand{\pempty}{p_{\rm emp}\xspace}
\newcommand{\pcollision}{p_{\rm noi}\xspace}
\newcommand{\psuccessgood}{p_{{\rm suc}|{\rm good}}\xspace}
\newcommand{\pemptygood}{p_{{\rm emp}|{\rm good}}\xspace}
\newcommand{\pemptylow}{p_{{\rm emp}|{\rm low}}\xspace}
\newcommand{\pcollisionhigh}{p_{{\rm noi}|{\rm high}}\xspace}
\newcommand{\E}{\mathbb{E}}
\newcommand{\expect}[1]         {{\rm E}\left[ #1 \right]}
\newtheorem{observation}[theorem]{Observation}
\newif\ifcomments
\newcommand{\john}[1]{\ifcomments {\noindent \scriptsize  \textcolor{blue} {John: {#1}}} \fi{}}
\newcommand{\jeremy}[1]{\ifcomments {\noindent \scriptsize  \textcolor{blue} {Jeremy: { #1}}} \fi{}}
\newcommand{\michael}[1]{\ifcomments {\noindent \scriptsize  \textcolor{purple} {Michael: {#1}}} \fi{}}
\newcommand{\maxwell}[1]{\ifcomments {\noindent \scriptsize  \textcolor{magenta} {Maxwell: {#1}}} \fi{}}
\newcommand{\mab}[1]{\michael{#1}}  
\title{Fully Energy-Efficient Randomized Backoff: Slow Feedback Loops Yield Fast Contention Resolution\thanks{A preliminary version of this work appeared in the proceedings of the  43rd ACM Symposium on Principles of Distributed Computing (PODC'24) \cite{bender2024fully}. \funding{This research was supported in part by the National Science Foundation (NSF) grants CCF-1918989, CCF-2106759, CCF-2144410, CNS-2210300, 
CCF-2247577, 
CCF-2106827, 
and by Singapore MOE-T2EP20122-0014.}}}
\author{Michael A. Bender\thanks{Department of Computer Science, Stony Brook University, Stony Brook, NY, USA  (\email{bender@cs.stonybrook.edu}); corresponding author.}
\and Jeremy T. Fineman\thanks{Department of Computer Science, Georgetown University, Washington, DC, USA   (\email{jfineman@cs.georgetown.edu}).} 
\and Seth Gilbert\thanks{School of Computing, National University of Singapore, Singapore (\email{seth.gilbert@comp.nus.edu.sg}).} 
\and\hspace{10cm}John Kuszmaul\thanks{Electrical Engineering \&
Computer Science Department, Massachusetts Institute of Technology, Cambridge, MA, USA, and Google, Cambridge, MA, USA (\email{john.kuszmaul@gmail.com}).} 
\and Maxwell Young\thanks{Department of Computer Science and Engineering, Mississippi State University, Mississippi State, MS, USA (\email{myoung@cse.msstate.edu}).} 
}
\begin{document}







\setcounter{page}{1}

\maketitle


\setlength\emergencystretch{\hsize}

\begin{abstract}
Contention resolution addresses the problem of coordinating access to a shared communication channel. Time is discretized into synchronized  slots, and a packet transmission can be made in any slot. A packet is successfully sent if no other packet is also transmitted during that slot. If two or more packets are sent in the same slot, then these packets collide and fail.  Listening on the channel during a slot provides ternary feedback, indicating whether that slot had (0)~silence, (1)~a successful transmission, or (2+)~noise. 
No other feedback or exchange of information is available to packets.
Packets are (adversarially) injected into the system over time. 
A packet departs the system once it is successfully sent. The goal is to send all packets while optimizing throughput, which is roughly the fraction of successful slots.  

Most prior contention resolution algorithms with constant throughput require 
a short feedback loop, in the sense that a packet’s sending probability in slot $t + 1$ is fully determined by its internal state at slot $t$ and the channel feedback at slot $t$. 
This paper answers the question of whether these  short feedback loops are necessary; that is, how often must listening and updating occur in order to achieve constant throughput?
We can restate this question in terms of energy efficiency:  given that both  listening and sending consume significant energy,
is it possible to have a contention-resolution algorithm with ternary feedback that is efficient for both operations?

A shared channel can also suffer random or adversarial noise, 
which causes any listener to hear noise, even when no packets are actually sent. Such noise arises due to hardware/software failures or malicious interference (all modeled as ``jamming''), which can have a ruinous effect on the throughput and energy efficiency. 
How does  noise affect our goal of long feedback loops/energy efficiency? 

Tying these questions together, we ask: \emph{what does a contention-resolution algorithm have to sacrifice to reduce channel accesses?} Must we give up on constant throughput? What about robustness to noise? Here, we show that we need not concede anything by presenting an algorithm with the following guarantees. Suppose there are $N$ packets arriving over time and $\jams$ jammed slots, where the input is determined by an adaptive adversary. With high probability in $N+\jams$, our algorithm guarantees $\Theta(1)$ throughput and $\polylog(N+\jams)$ channel accesses (sends or listens) per packet. We also have analogous guarantees when the input stream is infinite---we prove implicit throughput bounds of $\Omega(1)$ for all time slots $t$, and this translates to $\Theta(1)$ guaranteed throughput for any slot $t$ where the implicit throughput is sufficiently small in $\Theta(1)$. As a special case, these throughput results give rise to adversarial-queuing theory guarantees.
\end{abstract}

\maketitle

\section{Introduction}\label{sec:intro}

Since the 1970s, randomized 
\defn{backoff}  protocols 
such as binary exponential backoff~\cite{MetcalfeBo76}, have been used for managing contention on a shared communication channel.
Originally used in the ALOHA system~\cite{ALOHAnet} and Ethernet~\cite{MetcalfeBo76}, randomized backoff plays an important role in a wide range of applications, including  WiFi \cite{802.11-standard}, wireless sensor networks \cite{khanafer2013survey},  transactional memory~\cite{scherer2005advanced,HerlihyMo93}, and congestion control~\cite{wierman2003unified}. The salient feature of the communication
channel is that it supports only one message transmission at a time: if more than one message is sent simultaneously, there is a \defn{collision} resulting in indecipherable noise~\cite{Goldberg-notes-2000,kurose:computer,xiao:performance,DBLP:journals/jacm/GoldbergMPS00,MetcalfeBo76}. 

This \defn{contention-resolution problem} is formalized as follows. 
There are $N$ packets  
arriving over time, and each packet needs to succeed (be the only packet sending) on the channel.\footnote{For ease of exposition, we slightly abuse terminology and have the packets themselves taking action (e.g., sending themselves on the channel, listening on the channel), rather than introducing ``agents''/``devices''/``senders'', where each one  arrives on the scene with a packet to transmit.} 
Time is divided into synchronized \defn{slots}, each of which is sized to fit a single packet. To succeed, the packet requires exclusive access to the channel 
that is, the packet must be the only one transmitted during that slot.
Otherwise, if two or more packets are transmitted in the same slot, the result is a collision, where none of the transmitted packets succeed. 
A packet departs the system once it succeeds. There is no {\it a priori} coordinator or central authority; packet transmissions are scheduled in a distributed manner. The objective is to have all packets succeed while optimizing the \defn{throughput} of the channel, which is roughly the fraction of successful slots.

A popular contention-resolution protocol is 
binary exponential backoff~\cite{MetcalfeBo76} (see \cite{kelly:decentralized,kelly1987number,rosenkrantz:some,aldous:ultimate,GoodmanGrMaMa88,GoodmanGrMaMa88,Al-Ammal2000,al2001improved,song:stability}). 
Informally, under binary exponential backoff, a packet that has been in the system for $t$ slots is sent with probability $\Theta(1/t)$.

\paragraph{Feedback loops: short versus long (versus no feedback).} 
An elegant, but ultimately problematic, feature of classical binary exponential backoff  is that it is \defn{oblivious}---{packets  remaining in the system do not use channel feedback to adjust their behavior:} 
a packet with age $t$ sends with probability $\Theta(1/t)$ until it succeeds, regardless of channel history.
The unfortunate result is that 
with adversarial packet arrivals, 
binary exponential backoff  supports only a subconstant throughput---specifically, $O(1/\ln{N})$; in fact, even for the batch case where all $N$ packets arrive at the same time, the throughput of binary exponential backoff is only $O(1/\ln N)$~\cite{BenderFaHe05}.

In contrast, contention-resolution protocols can exploit frequent channel feedback to achieve $\Theta(1)$ throughput under adversarial arrivals~\cite{BenderKoKu20,
BenderFiGi19,
BenderKoPe16,
ChangJiPe19,
awerbuch:jamming}. These protocols are not oblivious---packets \defn{listen} on the channel and adjust their sending probabilities up or down based on this feedback.

The primary model for channel feedback is the \defn{ternary-feedback model}~\cite{BenderFiGi19,DBLP:conf/spaa/AgrawalBFGY20,gilbert:near,DBLP:journals/dc/KingPSY18,chlebus2009maximum,chlebus2007stability,ChlebusKoRo06,fineman:contention2,anantharamu2019packet,greenberg1985lower}. In this model, a packet can listen on the channel in each slot and learn whether that slot is
(0)~\defn{empty}, if no packets send,
(1)~\defn{successful}, if exactly one packet sends, or
(2+)~\defn{noisy}, if two or more packets send.
Based on this feedback, the packet can decide when to attempt to send next. 
Once a packet succeeds, it immediately departs the system.

Most constant-throughput algorithms with ternary-feedback (e.g., \cite{BenderKoKu20,
BenderFiGi19,
BenderKoPe16,
ChangJiPe19,DeMarcoSt17,awerbuch:jamming,richa:jamming2,richa:jamming4,richa:jamming3,ogierman:competitive}) listen on the channel in every slot (or every constant number of slots).  That is, these algorithms have a  short feedback loop: a packet's sending  probability in slot $t+1$ is fully determined by its internal state at slot $t$ and the channel feedback at slot $t$. For example, the algorithm by Chang, Jin, and Pettie~\cite{ChangJiPe19}, 
listens in every slot $t$ and 
multiplicatively updates the sending probability in slot  $t+1$ based on whether it heard silence, a successful transmission, or a collision in slot $t$.  
 
A key question is as follows: 
\emph{how frequently does the packet need to listen on the channel and update its behavior  in order to achieve constant throughput?}  
In every slot? 
In a vanishingly small fraction of slots? 
How short a feedback loop is necessary for good throughput?

As an analogy, one cannot navigate a ship without a control feedback loop: monitoring the surroundings and correcting course. One option is to  constantly monitor  and continuously update the heading to avoid obstacles. 
But the analogous question is whether one can still safely navigate with only a vanishingly small amount of course correction.
In contention resolution, the monitoring is via the channel sensing and the 
course being corrected is the transmission probabilities.


\paragraph{Robustness to noise.} Finally, in addition to the factors discussed above, much of the recent work on contention resolution (see \cite{anantharamu2019packet,Chlebus:2016:SWM:2882263.2882514,DBLP:conf/spaa/AgrawalBFGY20,ChenJiZh21,jiang:robust,awerbuch:jamming,richa:jamming2,richa:jamming4,richa:jamming3,ogierman:competitive,BenderFiGi19,ChangJiPe19}) has sought to address an additional factor, which is that the real world is often noisy. 
Sometimes interference prevents transmissions in a slot and listeners hear noise even if nobody actually sends. Noisy channels arise due to hardware/software failures, co-located devices, or malicious jamming  \cite{chiwewe2015using,liu2017interference,pelechrinis:denial,mpitziopoulos:survey,xu:feasibility}. 
Regardless of the source of the noise in a slot, we may think of these slots as being ``jammed'' by an adversary.
Jamming has evolved from a mostly-theoretical risk into a credible threat to systems over the past decade, with several publicized examples~\cite{nicholl,FCC:humphreys,marriot,priest}.

After a long line of work, there are now many contention-resolution protocols that achieve constant throughput in the presence of noise (e.g.~\cite{awerbuch:jamming,richa:jamming2,richa:jamming4,richa:jamming3,ogierman:competitive,BenderFiGi19,ChangJiPe19}); however, again,  these protocols listen to the channel in every slot and update their sending probabilities accordingly.  Our goal is to eliminate short feedback loops not just in the classical model, but also with jamming. 

\paragraph{Minimizing channel accesses = energy efficiency.}
Up until now, we have discussed whether one can minimize listening and thus avoid providing immediate feedback for a contention-resolution algorithm. Another way of viewing this problem is through the lens of energy efficiency. Each channel access---whether for sending or listening---consumes energy. Most work on contention resolution is \defn{sending efficient}, but is not \defn{listening efficient} (e.g., ~\cite{awerbuch:jamming,richa:jamming2,richa:jamming3,richa:jamming4,BenderFiGi19,BenderKoKu20,DeMarcoSt17,jiang:robust,de2022contention}).   
That is, most protocols optimize how frequently a packet sends, but allow a packet to listen  in every slot ``for free''. 

In fact, \emph{both} sending and listening are expensive operations (e.g., \cite{polastre:telos,feeney2001investigating, yang2017narrowband,lauridsen2018empirical}), which should be minimized if devices are energy-constrained (e.g., battery-powered devices). Minimizing energy usage by having devices sleep as much as possible has been a long-standing and popular strategy to maximize the network lifetime (for example, the development of duty-cycle  protocols \cite{merlin2010duty,ye2006ultra,li2005energy}). 
Why optimize listening in contention resolution protocols if packets must receive messages (for other network applications)? Informally, there are two types of listening: listening to receive messages, and listening to execute a contention resolution protocol (which is what enables sending messages).  To receive messages, packets do not need to listen in every time slot; there exist methods for optimizing this first type of listening. Although this is beyond the scope of this paper, as one example, in many wireless settings, a central base station can monitor the channel constantly and facilitate message exchange~\cite{feng2012survey}.

We call a protocol \defn{fully energy-efficient} if it is both sending efficient and listening efficient. 
By definition, such a protocol cannot have short feedback loops, since it can access the channel only rarely.

\paragraph{Past work: Minimizing listening by allowing for explicit synchronization.} 
Currently, the only known path to full energy efficiency is via \emph{explicit synchronization}. This means that the model is extended so that packets can send synchronization messages to each other whenever they broadcast \cite{BenderKoPe16,DeMarcoSt17,marco:time}. In~\cite{BenderKoPe16}, these synchronization messages have size $\Theta(\ln N)$ bits each, which means that an arbitrary polynomial amount of communication can be expressed in a slot. In~\cite{DeMarcoSt17,marco:time}, the synchronization messages are smaller ($O(1)$ bits), but some packets are permitted to stick around as ``Good Samaritans'' in order to serve as long-term coordinators (that send many messages via many broadcasts). 
We highlight that the model used in \cite{DeMarcoSt17, marco:time} is directly comparable to our model. The language used in those works frames the problem as devices attempting to successfully send packets on a shared channel. In that language, the devices are allowed to remain in the system issuing additional coordination messages after successfully sending their packet.

Using $\Theta(\ln N)$-bit synchronization messages, Bender, Kopelowitz, Pettie, and Young~\cite{BenderKoPe16} give an algorithm with  $\Theta(1)$ throughput and expected $O(\ln (\ln^*N))$ channel accesses per packet. Using Good-Samaritan packets, De Marco and Stachowiak~\cite{DeMarcoSt17} and De Marco, Kowalski, and Stachowiak~\cite{marco:time} provide a constant-throughput algorithm that is \emph{sending} efficient ($O(\ln N)$ transmissions per packet) and conjecture that their techniques can be extended to achieve fully energy efficiency ($O(\polylog N)$ channel accesses per packet), with high probability and even without collision detection. 

In all of these cases, even with the help of explicit synchronization, it remains open whether one can achieve such results in the presence of adversarial noise. Indeed, noise has the potential to be extra-problematic for energy-efficient algorithms since these algorithms listen to the channel less frequently and can potentially be thrown off by a small amount of well-placed noise. 

\paragraph{This paper.} We show that it is indeed possible to achieve constant throughput with full energy efficiency  while being robust to adversarial noise. Moreover, these results hold in the standard ternary-feedback model, without requiring the addition of any sort of explicit synchronization.

Our algorithm belongs to a natural family of multiplicative-weight-update algorithms (e.g., \cite{ChangJiPe19,awerbuch:jamming,richa:jamming2,richa:jamming3,ogierman:competitive,richa:jamming4}). 
When a packet hears silence, it multiplicatively increases both its listening and sending probabilities. Conversely, when a packet hears noise, it multiplicatively decreases these probabilities. 
There are no control messages,  no Good Samaritan packets, and no leaders elected.  

What makes our algorithm/analysis interesting is that we are able to support a multiplicative-weight-update framework while having each packet `cover its eyes' almost all of the time. This is in stark contrast to prior constant-throughput algorithms, which adjust sending probabilities in every slot. Because each packet listens to so few slots, the different packets that are in the system at the same time may end up with very different perspectives on the world from each other. In order to analyze the `herd behavior' of the packets in this potentially chaotic setting, substantially new techniques end up being required. These techniques are also what allow us to handle the additional chaos that adversarial jamming adds to the system.

\subsection{Model}\label{sec:model}

A \emph{finite} or \emph{infinite} stream of indistinguishable packets arrives over time; the number of arrivals is unknown to the algorithm. Each packet must be sent on the \defn{multiple-access channel}. Time is divided into synchronized slots, each of which is sufficiently large to send a single packet.
An \defn{adversary} (specified below) controls how many packets are injected into the system in each slot. When a packet succeeds, it departs the system. There is no universal numbering scheme for the slots; that is, there is no global clock from which a packet could infer the system lifetime or slot parity. Additionally, the packets do not receive any additional information about how many packets have arrived or will arrive. Instead, packets only receive information through the ternary feedback model.

We now describe the \defn{ternary feedback model}. 
In each time slot, each packet in the system can take one of three actions: (i)~sleep, (ii)~send, or  (iii)~listen to the channel.
Packets that take actions~(ii) or~(iii) are said to \defn{access the channel}.
If no packets choose to send during a  slot, then that  (non-jammed) slot is \defn{empty}/\defn{silent}.
A packet that listens during a slot (action~iii)  learns whether 
the slot was (0)~empty, (1)~successful, or (2+)~noisy. 
A packet that sleeps during a slot (action~i) learns nothing about the state of the slot.
A packet that sends (action~ii), either \defn{succeeds} and leaves the system, or \defn{collides}  and remains in the system.\footnote{For ease of presentation, in our algorithms, we say that a packet can listen and send simultaneously, but any packet that is sending actually does not need to listen to determine the state of the channel. 
If the packet is still in the system after sending in slot $t$, then slot $t$ was  noisy.}

We now add jamming to the picture; an adversary determines which slots are jammed. 
To jam a particular slot, the adversary broadcasts \defn{noise} into that slot.
All jammed slots are thus \defn{full}  and \defn{noisy}.
A packet that listens in a jammed slot hears that the slot was noisy, but does not know whether that noise came from jamming or was merely a collision between two or more packets.
A packet that sends during a jammed slot \defn{collides} and thus remains in the system.

An \defn{adversary} determines, for each slot $t$, how many packets to inject in slot $t$ and whether to jam in that slot. This paper considers an \defn{adaptive adversary}, which bases its decision on the  entire state of the system so far, i.e., up to the end of slot $t-1$, but not the outcomes of future coin tosses. Thus, if at slot $t$, a packet $p$ decides whether to send based on a coin flip, the adaptive adversary does \emph{not} get to see that coin flip until after slot $t$.  

\paragraph{A basic metric: (overall) throughput.} 
The main objective of contention resolution is to optimize  \defn{throughput}, defined next. A slot is \defn{active} if at least one packet is in the system during that slot; \defn{inactive} slots can be  ignored in our analysis. Similarly, a packet is \defn{active} if has been injected into the system and has not yet departed. Without loss of generality, assume throughout that the first slot is active.  Without jamming, the \defn{throughput at time $t$} is defined as $T_t/S_t$, where \defn{$T_t$} is the number of successful transmissions during slots $1,2,\ldots,t$, and \defn{$S_t$} is the number of active slots in the same time interval.\footnote{By assumption that the first slot is active, we have $S_t \geq 1$ and hence throughput is always well defined.} On a finite input, the \defn{(overall) throughput} is defined with respect to the final active slot $t$; at this point, the overall throughput is $N/S$, where $N=T_t$ is the total number of packets and $S=S_t$ is the total number of active slots.  Overall throughput is not well-defined on an infinite execution.

A deficiency of the throughput metric (when defined na\"\i{}vely) is that even if an algorithm guarantees $\Theta(1)$ overall throughput, it is not possible to achieve $\Theta(1)$ throughput uniformly across time. For example, if there is a burst of $N$ packets at time $0$, and $N$ is unknown to the algorithm, then the throughput will be $0$ for a superconstant number of slots (e.g., see~\cite{willard:loglog,ChangKoPe19}), and this is \emph{provably unavoidable} regardless of the backoff strategy being used. Thus (overall) throughput is only meaningful at the end of the execution, or at points in time where there are no packets in the system.

\paragraph{A stronger metric: implicit throughput \cite{BenderKoKu20}.}
In order to support a meaningful notion of throughput, even at intermediate points in time, Bender, Kopelowitz, Kuszmaul, and Pettie~\cite{BenderKoKu20} propose a refined definition that they term ``implicit throughput''~\cite{BenderKoKu20}. The \defn{implicit throughput} at time $t$ is defined as $N_t/S_t$, where $N_t$ is the total number of packets that arrive at or before time~$t$ and $S_t$ is the total number of active slots so far.\footnote{Notice that while $N_t$ depends on the adversary, the number of active slots depends on the  algorithm, whose goal it is to make slots inactive by completing packets.}

One perspective on implicit throughput is that it is an \emph{analytical tool}. Indeed, whenever we reach a point in time where overall throughput is meaningful (i.e., there are no packets left in the system), the two metrics become provably equal. This includes both at the end of any finite execution or during quiet periods of infinite executions. 
Another perspective on implicit throughput is that it is a stronger metric that offers a meaningful guarantee even at intermediate points in time: what constant implicit throughput means is that the number of active slots used so far should never be asymptotically larger than the number of packets that have arrived. 

Since we ignore slots where no packets are in the system, the implicit throughput of our algorithm (see Section \ref{sec:our-algorithm}) is guaranteed (w.h.p.) to be $\Omega(1)$. Moreover, for every slot $t$ where the implicit throughput up to that slot is sufficiently small in $\Theta(1)$, the standard notion of throughput is also $\Theta(1)$.

Throughout the rest of the paper, we focus exclusively on implicit throughput. We should emphasize, however, that this only makes our results stronger---the results also imply the standard constant-throughput guarantees that one would normally strive for.

\paragraph{Extending to adversarial jamming.} We next extend the definitions of throughput and implicit throughput for the case of jamming following~\cite{BenderFiGi19}. 
An algorithm wastes a slot if that slot has silence or a collision, and throughput measures the fraction of slots that the algorithm could have used but instead wasted. Let $\jams_t$ denote the number of jammed slots through slot $t$. Then, the throughput of an execution ending at time $t$ is defined to be $(T_t + \jams_t)/S_t$, and the implicit throughput at slot $t$ is defined to be $(N_t + \jams_t)/S_t$.

\paragraph{Some useful properties of implicit throughput, and applications to adversarial queuing theory.}
We conclude the section by summarizing several useful properties of implicit throughput.

\begin{observation}[\hspace{.05em}\cite{BenderKoKu20}]\label{obs:implicit-one}
  Consider any inactive slot $t$, i.e., where there are no active packets in the system. Then the implicit throughput and throughput are the same at slot $t$.
\end{observation}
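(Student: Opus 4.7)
The plan is to reduce the claim to the single identity $N_t = T_t$, since the jamming count $\jams_t$ and the active-slot count $S_t$ appear identically in both definitions $(T_t + \jams_t)/S_t$ and $(N_t + \jams_t)/S_t$. So the goal becomes a purely combinatorial accounting of packets.

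To establish $N_t = T_t$, I would invoke the departure rule from the model: a packet enters the system on its injection slot and leaves \emph{only} upon a successful transmission. Hence the set of packets counted by $T_t$ is exactly the set of packets that arrived by slot $t$ and have already departed, giving $T_t \le N_t$ immediately. For the reverse inequality, I would use the hypothesis that slot $t$ is inactive, i.e., no packet is in the system during slot $t$. Any packet counted by $N_t$ has arrived at or before slot $t$; if such a packet had not yet succeeded, it would still be present and therefore be an active packet in slot $t$, contradicting inactivity. Thus every packet counted by $N_t$ has already successfully transmitted, which gives $N_t \le T_t$.

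Combining the two inequalities yields $N_t = T_t$ and therefore
\[
\frac{T_t + \jams_t}{S_t} \;=\; \frac{N_t + \jams_t}{S_t},
\]
which is exactly the assertion. I do not anticipate any genuine obstacle here; the only point that requires a sentence of care is the corner case of a packet that is injected during slot $t$ itself, since such a packet would sit in the system during $t$ and render the slot active. The hypothesis of inactivity rules this out, so $N_t$ consists entirely of packets injected no later than slot $t$ that have all already departed via successful transmission, and the accounting $N_t = T_t$ goes through cleanly.
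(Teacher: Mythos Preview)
Your proof is correct. The paper does not supply its own proof of this observation; it simply states it with a citation to \cite{BenderKoKu20}. Your accounting argument---that at an inactive slot every injected packet has already departed via a successful transmission, hence $N_t = T_t$---is exactly the intended reasoning and is the standard justification for this fact.
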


\begin{observation}[\hspace{.05em}\cite{BenderKoKu20}]\label{obs:implicit-two}
    Let $\delta$ be any lower bound on the implicit throughput of an algorithm; that is, suppose the algorithm achieves implicit throughput of at least $\delta$ at all times.  Let $N_t$ and $S_t$ denote the total number of packet arrivals and active slots, respectively, at or before time $t$. Then $S_t \leq N_t/\delta$.  Consequently:
    
    \begin{itemize}[leftmargin=13pt,noitemsep,nosep]
    \item {\bf \em Overall throughput.} Suppose that there are a total of $N\geq 1$ packet arrivals.  Then the total number of active slots is at most $N/\delta$, and hence the overall throughput is at least $\delta$. 

    \item {\bf \em Backlog reprieve.} 
    If $N_t < \delta t$, then there exists an inactive slot $t' \leq t$. Thus, all packets that arrived before $t'$ have completed before~$t'$, and hence the throughput at time $t'$ is at least $\delta$.
    \end{itemize}
\end{observation}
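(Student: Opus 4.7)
The plan is to prove the displayed inequality first, and then to derive the two bulleted consequences from it. For the inequality itself, I would simply unfold the definition: the implicit throughput at slot $t$ is defined just above the statement to be $N_t/S_t$, and the hypothesis says this quantity is at least $\delta$ at every time $t$; rearranging gives $S_t \leq N_t/\delta$. This reduces the entire observation to book-keeping on top of a one-line inequality.

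For the first bullet (overall throughput), my plan is to let $t^{*}$ denote the final active slot of the execution, which is well defined because the input is finite with $N \geq 1$ arrivals. No active slot occurs after $t^{*}$, and every arrival is at or before $t^{*}$, so $N_{t^{*}} = N$ and the total number of active slots equals $S_{t^{*}}$. Applying the main inequality at $t = t^{*}$ yields $S_{t^{*}} \leq N/\delta$, and hence the overall throughput, $N/S_{t^{*}}$, is at least $\delta$.

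For the second bullet (backlog reprieve), my plan is a short pigeonhole argument by contradiction. If every one of the slots $1, 2, \ldots, t$ were active, then $S_t = t$, and the main inequality would force $N_t \geq \delta S_t = \delta t$, contradicting the hypothesis $N_t < \delta t$. So some slot $t' \leq t$ is inactive, which by definition means no packet is present at the start of $t'$; in particular, all packets that arrived strictly before $t'$ have already departed. Invoking \obsref{implicit-one} at the inactive slot $t'$, the implicit throughput and the (ordinary) throughput agree there, so the throughput at $t'$ is at least $\delta$, as required.

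The main obstacle is not mathematical but definitional: one has to be careful that ``at or before time $t$'' is used consistently for both $N_t$ and $S_t$, that the notion of ``active slot'' used in the pigeonhole really matches the one required by \obsref{implicit-one}, and that nothing happens after the final active slot to spoil the overall-throughput count. Once those points are pinned down, the whole statement amounts to the one-line rearrangement of $N_t/S_t \geq \delta$ together with a trivial counting step.
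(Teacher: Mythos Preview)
The paper does not supply its own proof of this observation; it is stated with a citation to \cite{BenderKoKu20} and treated as an immediate consequence of the definitions. Your argument is correct and is exactly the intended unpacking: the inequality $S_t \le N_t/\delta$ is a one-line rearrangement of the definition of implicit throughput, the first bullet follows by instantiating at the final active slot, and the second bullet is the obvious pigeonhole (if all of $1,\ldots,t$ were active then $S_t=t$ and $N_t \ge \delta t$) combined with Observation~\ref{obs:implicit-one}. There is nothing to compare against beyond this.
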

 
Finally, we observe that there are several natural settings in which implicit-throughput guarantees directly imply strong guarantees on packet backlog, even for infinite input sequences. Suppose, in particular, that packets arrive according to the \defn{adversarial queuing theory} arrival model (described below), which parameterizes the ``burstiness'' of packet-arrival in infinite streams. In adversarial queuing 
theory~\cite{BenderFaHe05,ChlebusKoRo12,ChlebusKoRo06,BorodinKlRaSu01}, the adversary is restricted from injecting too many packets/jammed slots over a set of consecutive slots of length $S$, where $S$ is a parameter of the model that we refer to as the \defn{granularity}. For granularity $S$, the number of packet arrivals plus the number of jammed slots is limited to $\lambda S$, where the \defn{arrival rate} $\lambda$ is a constant less than one. 
On the other hand, how the packet arrivals are distributed within each $S$-sized window is adversarial: 
no restrictions are placed on how the (at most) $\lambda S$ packets and jammed slots are distributed. 
By showing that the implicit throughput of all active slots is $\Omega(1)$, we obtain as a corollary a strong bound of $O(S)$ on the number of packets backlogged in the system at all times, as long as $\lambda$ is a sufficiently small constant.

\subsection{Main results}\label{sec:main-results}

We present an algorithm that with high-probability guarantees $\Omega(1)$ implicit throughput in all active slots and full energy efficiency. 
Thus, we  resolve two open questions in contention resolution: 
First, we show that full-energy efficiency is feasible, even with only ternary feedback.  Second, we show that these guarantees are achievable, even in the presence of adversarial jamming.

Our algorithm relies on the natural multiplicative-weight approach to backoff---with a careful choice of probabilities and update rules.
In contrast to some previous approaches, we do not rely on packet batching (to turn the online problem into a series of batch problems), 
leader election, busy tones, population estimation, dividing the channel into simulated subchannels (e.g., odd and even slots) for packet coordination, 
or other approaches seen in many modern algorithms; alas, these seem problematic in our setting. Given the  simplicity of our algorithm, the technical innovation lies in choosing the update parameters, analyzing the underlying combinatorial process, and proving that it is fast, robust, and fully energy-efficient.

Our algorithm \OurAlg guarantees the main theorems below.\footnote{An event occurs \defn{with high probability (w.h.p.) in $x$} if for any fixed constant $c \geq 1$, the probability of the event is at least $1-x^{-c}$.} 

\begin{theorem}[\bf Implicit throughput, infinite packet streams]\label{thm:infinite_arrival_throughput} At the $t$-th active slot, the implicit throughput is $\Omega(1)$ w.h.p.\ in~$t$.
\end{theorem}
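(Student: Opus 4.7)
The goal is to show that, with probability at least $1 - t^{-c}$, after $t$ active slots we have $N_t + \jams_t = \Omega(S_t) = \Omega(t)$, so that the implicit throughput $(N_t + \jams_t)/S_t$ is $\Omega(1)$. The natural handle is a potential function $\Phi(t)$ equal to the sum of current sending probabilities over all packets in the system. Because \OurAlg is multiplicative-weight, $\Phi$ grows (on average) when listening packets hear silence and shrinks when they hear noise; arrivals add at most $\wmin$ to $\Phi$, and each success removes at most a constant from $\Phi$.

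First I would partition active slots into three regimes depending on $\Phi(t)$: \emph{low} ($\Phi < c_1$), \emph{good} ($c_1 \le \Phi \le c_2$), and \emph{high} ($\Phi > c_2$). In the good regime, a standard Poisson-type calculation shows each slot is successful with $\Omega(1)$ probability, so good slots alone give $\Omega(1)$ implicit throughput against the arrivals $N_t$. It then suffices to bound the number of \emph{wasted} slots (silent or non-jammed-collision) in the low and high regimes by $O(N_t + \jams_t)$.

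For the high regime, I would show that the \emph{expected} number of listening packets that hear noise during a high-contention slot is $\Omega(\Phi \cdot \ell)$ where $\ell$ is the listening probability, each of which multiplicatively shrinks its own weight by a constant factor. Summing over time gives a telescoping bound: the total number of high-contention slots is $O(N_t + \jams_t)$, since the only sources that can re-inflate $\Phi$ are packet arrivals (each contributing $\wmin$) and false noise signals from jammed slots (each contributing to $\jams_t$). For the low regime, the key observation is that silence is self-correcting: a listening packet that hears silence doubles its probability, so starting from any weight $\Phi \geq \wmin$, $\Phi$ geometrically rises back to the good regime within $O(\polylog(N_t))$ slots per packet; otherwise the system is genuinely empty and the slot is inactive. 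The low and high segments between good stretches are thus amortized against arrivals and jams.

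Finally, all of the above are statements in expectation, and the theorem demands a high-probability bound in $t$. I would wrap the weight dynamics in an Azuma/Freedman-style martingale argument: conditioned on the history up to slot $s$, the difference between the observed outcome (silence/success/noise) and its conditional probability forms a bounded martingale difference, and the adaptive adversary cannot see the packets' coin flips before fixing its jamming and arrivals. Concentration over any window of length $\polylog(t)$ then shows the $\Phi$-based bookkeeping holds, uniformly, with probability $1 - t^{-c'}$, and a union bound over the $O(t)$ windows gives $1 - t^{-c}$.

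The main obstacle is the interaction between long feedback loops and adversarial jamming in the low regime. Because packets listen only in a vanishing fraction of slots, the adversary can in principle align bursts of jamming with the rare listening events of the few currently-active packets, keeping $\Phi$ artificially low without producing enough silent slots to trigger the geometric ramp-up. Overcoming this requires showing that each packet's listening schedule is sufficiently private and desynchronized that the adaptive adversary cannot concentrate its jamming budget on the listeners; formally, the expected number of jammed slots that coincide with a given packet's listens is $O(\jams_t \cdot \ell)$, so the jam-induced shrinkage of $\Phi$ is itself charged against $\jams_t$. Once this ``jamming cannot persistently suppress $\Phi$'' lemma is established, the three-regime decomposition closes the argument.
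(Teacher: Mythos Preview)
There is a genuine gap in your low-regime analysis. You claim that in low contention ``$\Phi$ geometrically rises back to the good regime within $O(\polylog(N_t))$ slots per packet.'' This is false for \OurAlg: a packet with window $w$ listens with probability $\Theta(\ln^3 w / w)$ and, upon hearing silence, shrinks its window by only a $(1+1/(c\ln w))$ factor, not a constant factor. Halving the window therefore requires $\Theta(\ln w)$ silent listens, which takes $\Theta(w/\ln^2 w)$ slots in expectation. If after a long high-contention phase a single packet remains with window $w = \Theta(N)$, recovering to good contention costs $\Theta(N/\polylog N)$ active slots, not $O(\polylog N)$. Your potential $\Phi = \sum_u 1/w_u$ cannot distinguish ``one packet with a huge window'' from ``system nearly empty,'' so the amortization you sketch does not account for these wasted slots at all. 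The obstacle you flag (jamming aligned with rare listens) is real but secondary; even with zero jamming the low-regime bookkeeping does not close.

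The paper's fix is precisely to add a term $\third(t) = \wmax/\ln^2(\wmax)$ to the potential, which directly measures the low-regime recovery cost, and to abandon per-slot analysis in favor of \emph{intervals} of adaptive length $\tau = \Theta(\max\{\third(t),\sqrt{\first(t)}\})$. Over each interval the full potential drops by $\Omega(\tau)-O(\mathcal{A}+\mathcal{J})$ w.h.p.\ in $\tau$, with the low-contention case handled by showing $\third$ shrinks by a constant factor when most slots are silent. Because both the interval sizes and the failure probabilities are adversarially adaptive, a simple union bound or Azuma wrap does not suffice; the paper instead couples the whole process to a betting game in which bet sizes correspond to interval lengths and passive income to arrivals plus jamming, and proves the bettor goes broke in $O(P)$ slots after $P$ dollars of passive income. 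Your per-slot martingale would need a replacement for this entire layer, and even in expectation the argument does not close without tracking $\wmax$.
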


\begin{corollary}[\bf Throughput, finite packet streams]
Consider an input stream of  $N$ packets with $\mathcal{J}$ jammed slots. The throughput for the  execution is $\Theta(1)$ w.h.p.\ in $N + \mathcal{J}$.
\label{cor:finite_arrival_throughput}
\end{corollary}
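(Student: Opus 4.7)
The plan is to derive the finite-stream statement from \thmref{infinite_arrival_throughput} by evaluating the implicit-throughput guarantee at the very last active slot of the execution and then transferring the high-probability bound from that slot's index to $N+\jams$. Let $t^\star$ denote the time of the final active slot and let $S=S_{t^\star}$ be its active-slot index. Because the stream is finite and the algorithm eventually clears every packet, every arrival must have occurred and every packet must have successfully transmitted by the end of $t^\star$; consequently $N_{t^\star}=T_{t^\star}=N$, so the throughput $(T_{t^\star}+\jams_{t^\star})/S$ coincides exactly with the implicit throughput $(N_{t^\star}+\jams_{t^\star})/S$ at slot $t^\star$. Thus the corollary reduces to a statement purely about implicit throughput at the $S$-th active slot.

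Next, I pin down a deterministic lower bound on $S$ in terms of the adversarial parameters. Every one of the $N$ successful transmissions occupies a distinct active slot, so $S\ge N$. Jammed slots that fall during inactive periods have no effect on the algorithm and can be discarded without loss of generality, so $\jams$ effectively counts only active jams and hence $S\ge \jams$. Combining gives $S\ge (N+\jams)/2$, a deterministic inequality that lets any bound polynomially small in $S$ transfer to one polynomially small in $N+\jams$.

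Now apply \thmref{infinite_arrival_throughput} via a union bound over all possible identities of the final active slot. For each $s\ge (N+\jams)/2$, the theorem gives probability at most $s^{-c}$ that the implicit throughput at the $s$-th active slot falls below its $\Omega(1)$ lower bound, for any fixed constant $c$. Summing yields $\sum_{s\ge (N+\jams)/2}s^{-c}=O((N+\jams)^{-(c-1)})$, which is polynomially small in $N+\jams$ once $c$ is chosen large. On the complementary event, the implicit throughput at $t^\star$ is $\Omega(1)$, and by the collapse noted above, so is the throughput. The matching $O(1)$ upper bound on throughput is immediate from $(N+\jams)/S\le 2$. The main care needed is not combinatorial but bookkeeping: one must align the ``w.h.p.\ in $s$'' statement, which quantifies over a random active-slot index, with the corollary's ``w.h.p.\ in $N+\jams$,'' which quantifies over adversary-controlled parameters, and the deterministic inequality $S\ge (N+\jams)/2$ is precisely what makes this conversion lossless.
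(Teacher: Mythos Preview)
Your proposal is correct and follows the same route the paper takes: the paper simply asserts that Corollary~\ref{cor:finite_arrival_throughput} is a ``special case'' of \thmref{infinite_arrival_throughput} without spelling out the conversion, and you have supplied exactly the missing bookkeeping---identifying throughput with implicit throughput at the last active slot, establishing the deterministic lower bound $S\ge (N+\jams)/2$, and union-bounding over the random value of $S$ to transfer ``w.h.p.\ in $S$'' to ``w.h.p.\ in $N+\jams$.'' The only minor remark is that your WLOG step (restricting to jams in active slots) is implicit in the paper's conventions as well, and is indeed needed for the $O(1)$ upper bound on throughput to make sense.
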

 
\begin{corollary}[\bf Bounded backlog for adversarial-queuing arrivals]\label{cor:backlog-intro} Consider adversarial-queuing-theory arrivals with a sufficiently small constant arrival-rate $\lambda$ and granularity $S$; i.e.,  in any interval of length $S$, the total number of packet arrivals and jammed slots is at most $\lambda S$. Then, for any given slot $t$, the number of packets currently in the system is at most $O(S)$ w.h.p.\ in $S$.  
\end{corollary}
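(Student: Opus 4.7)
The plan is to reduce bounding the backlog $B_t$ at slot $t$ to controlling the length of the busy period containing $t$, and then to close off that busy period via a hypothetical execution so that \thmref{infinite_arrival_throughput} applies. Let $\delta>0$ be the implicit-throughput constant furnished by \thmref{infinite_arrival_throughput}, and assume $\lambda$ is small enough that $\lambda<\delta$. Let $t^*\le t$ denote the most recent inactive slot (taking $t^*=0$ if none exists). Since the system is empty at $t^*$, every packet currently alive must have arrived during $(t^*,t]$, so adversarial-queuing theory on that interval gives
\[
 B_t \;\le\; N_t-N_{t^*}\;\le\;\lambda(t-t^*)+\lambda S.
\]
It therefore suffices to show $t-t^*=O(S)$ \whp in $S$.

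To bound $t-t^*$, I will introduce the hypothetical execution $H_t$ that agrees with the actual run through slot $t$ but receives no arrivals and no jammed slots thereafter. Because the algorithm keeps only per-packet state (no global clock, no persistent state across quiescent periods) and only finitely many packets remain, the system in $H_t$ becomes empty at some slot $t_{\rm end}\ge t$; set $L=t_{\rm end}-t^*$. All slots of $H_t$ in $(t^*,t_{\rm end}]$ are active, and since the system is empty at $t^*$ with no retained state, the restriction of $H_t$ to $(t^*,t_{\rm end}]$ is indistinguishable from a standalone execution whose inputs are precisely the $N_{\rm sub}=N_t-N_{t^*}$ arrivals and $\jams_{\rm sub}=\jams_t-\jams_{t^*}$ jammed slots that occurred in the original $(t^*,t]$.

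If $L<S$, then $t-t^*\le L<S$ already yields the claim. Otherwise $L\ge S$, and I invoke \thmref{infinite_arrival_throughput} on the above sub-execution at its $L$-th active slot (which is $t_{\rm end}$), obtaining $(N_{\rm sub}+\jams_{\rm sub})/L\ge\delta$ with probability at least $1-L^{-c}\ge 1-S^{-c}$ for any constant $c$. AQT on the original interval $(t^*,t]$ (of length at most $L$) supplies $N_{\rm sub}+\jams_{\rm sub}\le\lambda(t-t^*)+\lambda S\le\lambda L+\lambda S$. Combining,
\[
 \delta L\;\le\;\lambda L+\lambda S \quad\Longrightarrow\quad L\;\le\;\frac{\lambda}{\delta-\lambda}\,S\;=\;O(S),
\]
so $t-t^*\le L=O(S)$, and substituting into the Step~1 inequality gives $B_t=O(S)$. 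A standard union bound over $L\ge S$ preserves the \whp-in-$S$ guarantee.

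The principal obstacle is that \thmref{infinite_arrival_throughput} delivers only a \emph{global} implicit-throughput guarantee: a direct application at slot $t$ in the original execution bounds the cumulative number of active slots since slot~$1$ but cannot rule out a single long unfinished busy period ending at $t$, because the lower bound $S_{t^*}\ge N_{t^*}$ available at the previous inactive slot leaves a slack of $(1/\delta-1)N_{t^*}$ that grows with time. The hypothetical-execution device is exactly what converts the ongoing busy period into a finite execution on which \thmref{infinite_arrival_throughput} yields a local length bound; the hypothesis $\lambda<\delta$ then supplies the drift needed to absorb the additive $\lambda S$ coming from the AQT granularity, and the case split $L<S$ versus $L\ge S$ handles the regime in which the probabilistic guarantee alone would be too weak.
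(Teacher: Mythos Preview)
Your proof is correct and shares the paper's core idea (stated as \lemref{queuing_nearby_inactive} and \corref{backlog}): restart the implicit-throughput analysis at the last inactive slot $t^*$ and combine \thmref{infinite_arrival_throughput} with the AQT bound on $(t^*,t]$ to force $t-t^*=O(S)$. The hypothetical extension $H_t$ is an unnecessary detour, however: since the sub-execution starting at $t^*$ is already a valid instance (no algorithm state persists across inactive slots, and adversary adaptivity is permitted), the paper applies \thmref{infinite_arrival_throughput} directly at its $(t-t^*)$-th active slot, namely $t$, obtaining $(N_{\rm sub}+\jams_{\rm sub})/(t-t^*)\ge\delta$ versus the AQT bound $N_{\rm sub}+\jams_{\rm sub}\le 2\lambda(t-t^*)$ when $t-t^*\ge S$, which already yields the contradiction without constructing $H_t$, invoking almost-sure termination, or taking the union bound over the random~$L$.
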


\begin{restatable}[\bf Energy, finite executions]{theorem}{thmfiniteaccesses}
Consider a finite execution with $N$ total packet arrivals and $\mathcal{J}$ total jammed slots against an adaptive adversary. 
Any given packet accesses the channel $O(\polylog$ $(N + \mathcal{J}))$ times w.h.p.\ in $N+\mathcal{J}$.
\label{thm:adaptive_energy_finite}
\end{restatable}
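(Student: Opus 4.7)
The plan is to combine the constant-throughput bound from \corref{finite_arrival_throughput} with a potential-function analysis of the multiplicative-weight updates to bound per-packet channel accesses.

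By \corref{finite_arrival_throughput}, the execution consists of $S=O(N+\jams)$ active slots w.h.p.\ in $N+\jams$. Conditioning on this event, any single packet $p$ is present only during active slots and hence for at most $S$ of them, and in slots in which $p$ sleeps no access occurs. Therefore it suffices to bound the number $K$ of slots during $p$'s lifetime in which $p$ actually accesses the channel (sends or listens).

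The key structural fact about \OurAlg is that its per-packet access probability $q_t$ is adjusted multiplicatively only in slots in which $p$ itself accesses the channel, and is clamped to an interval $[q_{\min},q_{\max}]$ whose endpoints satisfy $\log(q_{\max}/q_{\min})=O(\polylog(N+\jams))$. Define the logarithmic potential $\Phi_t=\log(1/q_t)$. Then $\Phi_t$ is constant across sleep slots and changes by $\pm\Theta(1)$ across access slots---upward on noise or collision, downward on silence---and is pinned to an interval of width $W=O(\polylog(N+\jams))$. I would analyze the sequence of $\pm\Theta(1)$ increments as a bounded-difference process under the filtration generated by all channel outcomes so far. A one-step increment analysis---drawing on the same channel invariants used to prove \corref{finite_arrival_throughput}---would show that, conditional on the history, at each access of $p$ the probability of hearing silence (equivalently, the probability that $p$ succeeds, shortening its lifetime) is bounded below by a positive constant. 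A Freedman-type stopping-time inequality applied to $\Phi_t$ then bounds $K$ by $O(W)=O(\polylog(N+\jams))$ w.h.p.\ in $N+\jams$, because each excursion of $\Phi_t$ toward $\log(1/q_{\max})$ forces $p$ into a constant-probability-of-success regime and terminates its lifetime after an additional geometric tail of accesses.

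The main obstacle is establishing the drift (or geometric-tail) bound against the adaptive adversary, which can choose arrivals and jamming based on the full history and may try to keep the silence and noise rates observed by $p$ almost perfectly balanced to prolong $p$'s residence in the interior of $[\log(1/q_{\max}),\log(1/q_{\min})]$. I would resolve this by invoking the invariants from the throughput analysis: a constant fraction of active slots must be either silent or successful regardless of the adversary's strategy, and because $p$'s own access decisions are independent coin flips given the channel state, this global property translates into a per-access conditional constant lower bound on the driving event that the adversary cannot suppress. Combining this drift bound with the interval-width bound yields $K=O(\polylog(N+\jams))$ channel accesses by $p$ w.h.p.\ in $N+\jams$, which is exactly the statement of \thmref{adaptive_energy_finite}.
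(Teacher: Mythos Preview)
Your outline has a genuine gap at its core. You assert that ``at each access of $p$ the probability of hearing silence (equivalently, the probability that $p$ succeeds, shortening its lifetime) is bounded below by a positive constant.'' These two events are \emph{not} equivalent: a packet that listens and hears silence has, by definition, \emph{not} sent in that slot, so it has not succeeded. Moreover, the per-access silence probability is not bounded below by a constant---during high-contention stretches it is exponentially small in the contention, and nothing in the throughput analysis rules out the adversary steering $p$'s accesses toward such stretches. The global fact that a constant fraction of active slots are silent or successful does not transfer to a per-access conditional bound, because $p$'s access times are themselves history-dependent.

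There is also a quantitative slip: in \OurAlg the multiplicative update is $1+\Theta(1/\ln w)$, so your potential $\log(1/q_t)$ moves by $\Theta(1/\ln w)$ per access, not $\pm\Theta(1)$. This matters because it takes $\Theta(\ln w)$ accesses to change the window by a constant factor, so even a correctly bounded interval of log-width $W$ does not immediately yield $O(W)$ accesses via a drift argument.

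The paper's proof avoids these issues by exploiting a structural feature of \OurAlg that you do not use: sending is \emph{nested inside} listening, so conditional on listening, the packet also sends with probability $\Omega(1/\ln^3 w)$. Consequently, every backon (listen + hear silence) is an event where the packet \emph{could have} sent and succeeded instead, with probability $\Omega(1/\ln^3 W^*)$ where $W^*$ is the global window bound obtained from the potential-function analysis (Corollary~\ref{cor:phi-bound-corollary}). The argument then reduces to: at most $O(\log^2 W^*)$ backoffs can occur without matching backons, and after $O(\log^4 W^*)$ backons the packet has succeeded w.h.p.\ because each backon had a $\Omega(1/\log^3 W^*)$ chance of being a success. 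No per-access silence bound is needed; the proof counts only backon events, which are silent by definition.
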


\begin{restatable}[\bf Energy, adversarial queuing]{theorem}{thminfiniteaccesses}
Consider a (finite or infinite) packet stream with adversarial-queuing arrivals with granularity $S$ and where the arrival rate is a sufficiently small constant. Then
any given packet  accesses the channel $O(\polylog(S))$ times w.h.p.\ in $S$ against an adaptive adversary.
\label{thm:adaptive_energy}
\end{restatable}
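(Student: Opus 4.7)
The plan is to reduce the adversarial-queuing case to the finite-execution bound of \thmref{adaptive_energy_finite}, using the bounded-backlog result of \corref{backlog-intro} to argue that every packet's ``effective lifetime'' depends only on $\Theta(S)$ other packets and $\Theta(S)$ jammed slots. Concretely, fix a packet $p$ arriving at some slot $t_0$. By \corref{backlog-intro}, the number of packets in the system at time $t_0$ is $O(S)$ w.h.p.\ in $S$, and by the adversarial-queuing constraint, in any window of $\Theta(S)$ slots the adversary injects at most $\lambda S$ packets/jams.

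Next I would convert the implicit-throughput guarantee (\thmref{infinite_arrival_throughput}) into a bound on $p$'s lifetime. Suppose $p$ remains in the system for $T$ active slots. During these slots, all successful transmissions must come from packets present in the system, namely the $O(S)$ packets present at $t_0$ plus the newly arriving packets. Since at most $\lambda T + O(S)$ packets are ever candidates for completion within $p$'s lifetime, and implicit throughput $\Omega(1)$ forces $\Omega(T)$ of these active slots to be successful, one obtains $\Omega(T) \leq \lambda T + O(S)$. For $\lambda$ a sufficiently small constant this yields $T = O(S)$ w.h.p.\ in $S$. Thus $p$'s entire lifetime is covered by a window of $O(S)$ active slots, in which at most $O(S)$ new packet arrivals and $O(S)$ jammed slots occur, and at most $O(S)$ ``pre-existing'' packets are ever contemporaneous with $p$.

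Finally, the plan is to invoke \thmref{adaptive_energy_finite} on this localized window. The key observation is that $p$'s channel accesses are a function only of $p$'s internal state, its arrival time, and the ternary feedback it sees while alive; hence from $p$'s perspective the execution is indistinguishable from a finite execution with $N' = O(S)$ total packets and $\mathcal{J}' = O(S)$ total jammed slots (obtained, e.g., by freezing the adversary after $p$'s departure and treating the backlog at $t_0$ as fresh arrivals in a hypothetical restart). Applying \thmref{adaptive_energy_finite} to this equivalent finite execution yields an $O(\polylog(N'+\mathcal{J}')) = O(\polylog(S))$ bound on $p$'s channel accesses, w.h.p.\ in $S$.

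The main obstacle I expect is the last step: justifying that \thmref{adaptive_energy_finite} can be applied in a black-box way to the localized window. A pre-existing packet $q$ at $t_0$ arrives with nonzero ``age'' and with sending/listening probabilities already tuned to channel history predating $p$'s window, which is not the same as starting the protocol from scratch on $q$. The cleanest resolution is to note that the proof of \thmref{adaptive_energy_finite} really bounds accesses against \emph{any} adaptively chosen worst-case initial configuration of size $O(S)$; alternatively, one can absorb the unusual starting conditions into the adversary's choices, since \thmref{adaptive_energy_finite} is already stated against an adaptive adversary. Either route relies on the fact that the polylog overhead in the finite-execution analysis scales with the number of packets and jams that the tagged packet can possibly encounter, which is $O(S)$ here.
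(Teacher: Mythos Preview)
Your high-level strategy---reduce to the finite-execution energy bound (\thmref{adaptive_energy_finite}) by localizing the packet's lifetime to a window of size $O(S)$---is exactly the paper's strategy. The difference is in the tool you use for the localization, and that difference is precisely the source of the obstacle you flag at the end.

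You localize via \corref{backlog-intro} (bounded backlog), which only tells you that at the packet's arrival time there are $O(S)$ other packets in the system. As you correctly note, those packets carry non-fresh internal state (large windows, etc.), so the ``localized window'' is not a genuine finite instance of the protocol, and \thmref{adaptive_energy_finite} cannot be invoked as a black box. Your two proposed workarounds are not convincing as stated: the finite-case proof really does rely on bounding the maximum window $W^*$ via the potential, and that bound is tied to the number of arrivals and jams \emph{since the last time the system was empty}, not since an arbitrary moment with $O(S)$ live packets of arbitrary window size. Absorbing the initial state into the adaptive adversary does not work either, since the adversary controls arrivals and jamming but not packets' internal states.

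The paper sidesteps this entirely by using a sharper localization lemma (its \lemref{queuing_nearby_inactive}): for sufficiently small $\lambda$, every slot $t$ has an \emph{inactive} slot within $O(S)$ slots before it and another within $O(S)$ slots after it, w.h.p.\ in $S$. (The proof is essentially the implicit-throughput contradiction you sketch for bounding $p$'s lifetime, applied not to $p$ but to the gap between consecutive inactive slots.) Between two inactive slots the system is literally empty at both endpoints, so that segment \emph{is} a standalone finite execution with all packets arriving fresh, and \thmref{adaptive_energy_finite} applies directly with $N+\mathcal{J}=O(S)$. No initial-state issue arises. In short: replace your use of \corref{backlog-intro} with the inactive-slot lemma and the obstacle disappears.
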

 
\begin{theorem}[\bf Energy, infinite executions]
    Consider an infinite packet stream, and let $N_t$ and $\jams_t$ denote the number of arrivals and jammed slots, respectively, up until time~$t$ against an adaptive adversary. Then any given packet accesses the channel $O(\polylog(N_t+\jams_t))$ times before time $t$, w.h.p. in $N_t+\jams_t$.
    \label{thm:infinite_energy}
\end{theorem}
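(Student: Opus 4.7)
The plan is to reduce to the finite case (Theorem~\ref{thm:adaptive_energy_finite}) via a truncation-and-coupling argument. Fix a packet $p$ and a slot $t$; I will bound the channel accesses made by $p$ in slots $1, \ldots, t$. If $p$ has not yet arrived by slot $t$, the count is zero and there is nothing to prove, so assume $p$ arrives by slot $t$.

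First, I would define a truncated adversary $\mathcal{A}'$ that behaves identically to the given adaptive adversary $\mathcal{A}$ through slot $t$ and then injects no further arrivals and no further jammed slots. Since $\mathcal{A}$ is adaptive (its choices in slot $s$ depend only on history through slot $s-1$, together with the algorithm's coin flips up through slot $s-1$), and since the algorithm's behavior in slot $s$ likewise depends only on the input through slot $s$, coupling the two executions by reusing the same algorithmic randomness makes the histories through slot $t$ identical almost surely. In particular, $p$'s listens and sends in slots $1, \ldots, t$ are identical under $\mathcal{A}$ and under $\mathcal{A}'$, and the random quantities $N_t$ and $\jams_t$ have the same realization in both executions.

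Second, I would observe that the $\mathcal{A}'$-execution is a finite execution in the sense needed by Theorem~\ref{thm:adaptive_energy_finite}: its total arrival count is $N_t$ and its total jamming count is $\jams_t$. The execution does terminate, because Theorem~\ref{thm:infinite_arrival_throughput} and Observation~\ref{obs:implicit-two} imply that w.h.p.\ the total number of active slots is $O(N_t + \jams_t)$, after which no packets remain. Applying Theorem~\ref{thm:adaptive_energy_finite} with $N = N_t$ and $\jams = \jams_t$ yields that $p$'s total channel accesses in the truncated execution are $O(\polylog(N_t + \jams_t))$, with high probability in $N_t + \jams_t$. Finally, since $p$'s accesses in slots $1, \ldots, t$ under $\mathcal{A}$ equal its accesses in those same slots under $\mathcal{A}'$, which are in turn at most its total accesses under $\mathcal{A}'$, the same bound transfers to the infinite execution.

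The main obstacle I expect is getting the coupling statement completely rigorous against an adaptive adversary: one needs to argue that $\mathcal{A}'$ is a legal adaptive adversary in the truncated world, that the w.h.p.\ bound from Theorem~\ref{thm:adaptive_energy_finite} really holds for the same random variables $N_t, \jams_t$ that appear in the target statement, and that Theorem~\ref{thm:adaptive_energy_finite} applies even though the truncated execution may run for many slots past $t$ while residual packets drain (this is fine because the bound depends only on $N + \jams$, not on execution length). A minor secondary point is to handle the degenerate regime where $N_t + \jams_t$ is so small that ``w.h.p.\ in $N_t+\jams_t$'' is vacuous; there the bound holds trivially since $\polylog(N_t+\jams_t) = O(1)$ dominates any constant number of accesses $p$ could have performed.
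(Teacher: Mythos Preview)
Your proposal is correct and is essentially the same argument the paper gives: extend the execution past time $t$ with no further arrivals or jamming, apply the finite-case energy bound (Theorem~\ref{thm:adaptive_energy_finite}) with $N=N_t$ and $\jams=\jams_t$, and observe that restricting to slots before $t$ can only reduce the number of accesses. Your write-up is more careful about the coupling and the edge cases than the paper's three-sentence proof, but the underlying reduction is identical.
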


The above statements are simplified slightly for ease of presentation, while the formal statements are presented later in Section~\ref{sec:analysis}. Specifically, 
\thmref{infinite_arrival_throughput} is proved in \secref{bettinggame} where it appears as Corollary \ref{cor:implicit-throughput}, and Corollary~\ref{cor:backlog-intro} corresponds to Corollary~\ref{cor:backlog}.
The remaining theorems are proved in \secref{energy}. In particular, 
\thmref{adaptive_energy_finite} corresponds to \thmref{finite_adaptive_accesses}, and 
\thmref{adaptive_energy} corresponds to  \thmref{queuing_adversarial_accesses}.  Finally, \thmref{infinite_energy} is included in \thmref{infinite_accesses}.  

\subsection{Extensions to Reactive Adversary}

A \defn{reactive adversary}~\cite{richa:jamming3,wilhelm2011short,DBLP:journals/dc/KingPSY18} has an instantaneous reaction time; that is, this adversary listens to the channel and can decide whether to jam and/or inject new packet(s) in slot $t$ based on what it hears in slot $t$ itself.  In contrast, the standard adaptive adversary would not know whether any packet chooses to send in slot $t$ until slot $t+1$. This allows a reactive adversary to cheaply prevent any particular packet $p$ from  succeeding by jamming only those slots where $p$ makes transmission attempts. Thus, against a reactive adversary, the total number of channel accesses required is at least linear in the amount of jamming. For  exponential backoff, the situation is more dire: for any $T$ a reactive adversary can also drive the throughput down to $O(1/T)$ by jamming a single packet a mere $\Theta(\ln T)$ times.  

Note that reactivity and adaptivity are somewhat orthogonal. Reactivity addresses how quickly the adversary can react to the detectable channel state---importantly, only sending is revealed, since it is detectable. In contrast, an adaptive adversary knows all of the internal state and random choices of packets up to the previous slot, and in particular this adversary also knows if packets choose to listen. 

In addition to our main results on purely adaptive adversaries, we also address an adversary that is {\it both} adaptive and reactive. It turns out that the reactive adversary does not impact our implicit-throughput bounds for our algorithm (our analysis applies whether or not the adversary can see the channel activity at the current time). Reactivity thus only impacts the number of channel accesses. Roughly speaking, the theorem states that the reactive adversary has nontrivial impact on the worst-case number of channel accesses and thus energy, which is to be expected, but it does not have significant impact on the average.

\begin{theorem}[\bf Energy, reactive adversary] 
    The following  apply to a reactive and adaptive adversary.
    \begin{enumerate}[leftmargin=15pt,noitemsep]
    \item {\bf Finite streams.}  Consider a finite execution with $N$ total packet arrivals and $\mathcal{J}$ total jammed slots. Any given packet accesses the channel $O( (\mathcal{J}+1) \polylog(N))$ times w.h.p.\ in $N+\mathcal{J}$.
    Moreover, the average number of channel accesses is only $O((\jams/N + 1) \polylog(N + \mathcal{J}))$ times w.h.p.\ in $N+\mathcal{J}$.
    
    \item {\bf Adversarial queuing.}
    Consider a (finite or infinite) packet stream with  adversarial-queuing arrivals with granularity $S$ and where the arrival rate is a sufficiently small constant. Then any given packet accesses the channel at most $O(S)$ times, w.h.p.\ in $S$.  In addition, the average number of accesses per slot is $O(\polylog(S))$, w.h.p.\ in~$S$. 

    \item {\bf Infinite executions.}
    Consider an infinite packet stream, and let $N_t$ and $\jams_t$ denote the number of arrivals and jammed slots, respectively, up until time~$t$. Then any given packet accesses the channel $O( (\jams_t +1 )\polylog(N_t + \jams_t))$ times before time $t$, w.h.p. in $N_t + \jams_t$. 
       Moreover, the average number of channel accesses is $O((\jams_t/N_t + 1)\polylog(N_t+\jams_t))$.
    \end{enumerate}
    \label{thm:reactive_energy_all}
\end{theorem}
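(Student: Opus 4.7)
The plan is to reduce each case of \thmref{reactive_energy_all} to the corresponding non-reactive bound in \thmref{adaptive_energy_finite}, \thmref{adaptive_energy}, and \thmref{infinite_energy} by quantifying the one new mechanism reactivity introduces. As noted in the paragraph preceding the theorem, the implicit-throughput analysis goes through unchanged against a reactive adversary, so the number of active slots remains $O(N+\jams)$ in the finite setting, $O(S)$ in the adversarial-queuing setting, and $O(N_t+\jams_t)$ in the infinite setting. The only genuinely new effect a reactive adversary can exert is \emph{targeted noise}: having observed that some packet $p$ chooses to send in slot $t$, the adversary may jam slot $t$ and convert what would have been a successful transmission into a collision.

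The first step is a coupling between the reactive execution and a purely-adaptive one that uses the same random coins of \OurAlg and the same non-reactive jam/arrival schedule. Under this coupling I would establish the invariant that the noise count witnessed by any packet $p$ is at most its non-reactive count plus the number of reactive jams that $p$ itself witnessed (i.e., that coincided with $p$'s own send or listen). Combined with the structural fact---already used in the proof of \thmref{adaptive_energy_finite}---that each noisy observation by $p$ can trigger only $O(\polylog(N+\jams))$ additional channel accesses during $p$'s multiplicative-weight recovery, the invariant immediately yields the worst-case bounds: if all $\jams$ jams are aimed at a single packet, its access count is blown up by a factor of at most $\jams+1$, giving $O((\jams+1)\polylog(N))$ in part~(1). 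Parts~(2) and~(3) follow by the same argument, with part~(2) additionally invoking \corref{backlog-intro} to cap any packet's lifetime by $O(S)$ slots in the adversarial-queuing regime.

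The average-case bounds come from a charging argument. The accounting principle---inherited from the non-reactive analysis---is that the combined extra work induced across all packets by a single additional noisy slot is at most $\polylog(N+\jams)$ w.h.p., because the expected number of packets that access the channel in any slot is bounded by $\polylog(N+\jams)$ under the implicit-throughput and energy constraints. Charging each reactive jam to its witnesses therefore contributes a global total of $O(\jams\cdot\polylog(N+\jams))$ extra accesses; adding the $O(N\polylog(N+\jams))$ baseline from the non-reactive bound and dividing by $N$ yields the $O((\jams/N+1)\polylog(N+\jams))$ average in part~(1), and the queuing and infinite variants follow analogously.

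The main obstacle is making the coupling monotone in the right direction: I must argue that a reactive jam never \emph{decreases} any packet's future channel-access count, despite the multiplicative-decrease step shrinking the packet's sensitivity and thereby its per-slot access probability. The subtle point is that reduced sensitivity also lengthens the packet's time-to-success, so fewer accesses per slot must be traded against more slots alive. I expect to settle this by tracking a potential that combines log-sensitivity with remaining expected lifetime and showing that each reactive jam only increases it, which in turn lower-bounds the total channel accesses spent during recovery.
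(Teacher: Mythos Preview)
Your proposal takes a different route from the paper, and the route has a genuine obstacle that you yourself flag but do not resolve.

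The paper avoids coupling entirely. For part~(1) it revisits the non-reactive proof at the single step where reactivity matters: bounding the number $k$ of backons before a packet succeeds. The crucial observation is that the reactive adversary can see \emph{sends} but not \emph{listens}. Thus, every time a packet listens to an otherwise-empty slot, it independently chooses to send with probability $\Omega(1/\log^3 W^*)$, and this choice is invisible to the adversary until after it is made. After $\rho\log^3 W^*$ such listens the packet has attempted $\Omega(\rho)$ sends w.h.p.; the adversary can jam at most $\jams$ of them, so $\rho=\Theta(\jams+\log W^*)$ suffices. This gives $k=O((\jams+\log W^*)\log^3 W^*)$ directly, with no comparison to a non-reactive run. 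For the average bound the paper argues globally: collectively the packets need $N+\jams$ would-be-successful sends, and the per-success cost is the same $O(\log^4)$ as in the adaptive case. Parts~(2) and~(3) reduce to part~(1) via \lemref{queuing_nearby_inactive} and truncation, much as you outline.

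Your coupling approach is problematic precisely where you suspect. Once a reactive jam causes packet $p$'s window to diverge from its non-reactive twin, the same coin stream no longer induces the same listen/send pattern, so the invariant ``reactive noise count $\le$ non-reactive noise count plus witnessed reactive jams'' is not well-defined slot-by-slot. The potential you sketch (log-sensitivity plus expected remaining lifetime) would need a fresh analysis of its own; it does not obviously reduce to anything already proved. Separately, your average-case charging rests on the claim that at most $\polylog(N+\jams)$ packets access the channel in any slot, but this is false: with $N$ packets at window $\wmin$ the expected number of listeners is $\Theta(N)$. The paper's global ``$N+\jams$ would-be successes'' accounting sidesteps this by never needing a per-slot listener bound.
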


The items of \thmref{reactive_energy_all} are each proved separately as Theorems~\ref{thm:finite_reactive_accesses}, \ref{thm:queuing_reactive_accesses}, and~\ref{thm:infinite_accesses} in \secref{energy}.


\section{Related Work}
\label{sec:related}

Contention resolution has been an active area of research for several decades. Here, we discuss related results from the literature. We highlight that a preliminary version of this work has appeared previously \cite{bender2024fully}. 

\paragraph{Robustness.} Early work on jamming resistance by Awerbuch, Richa, and Scheideler \cite{awerbuch:jamming} considers the case of $N$ devices, each of which always has a packet to ready to send; that is, rather than changing over time, the number of packets contending for channel access is always $N$. For any interval of at least $T$ consecutive slots, at most a $(1-\lambda)$-fraction may be jammed, where the constant $\lambda>0$ is unknown. In this setting, Awerbuch, Richa, and Scheideler\cite{awerbuch:jamming} showed that constant throughput is possible with a polylogarithmic number of transmissions per packet. Several subsequent results have built on this result, addressing multi-hop~\cite{richa:jamming2,richa:competitive-j} and co-located networks~\cite{richa:jamming4}, reactive jamming~\cite{richa:jamming3,richa:efficient-j}, and aspects of signal propagation~\cite{ogierman:competitive}. However, this series of results relies on devices knowing rough bounds on $N$ and $T$; in particular, devices must set a parameter whose value is $O(1/(\ln\ln(N) + \ln(T)))$.  

A similar restriction on jamming is explored by  Anantharamu et al.~\cite{anantharamu2019packet,anantharamu2011medium}, where $\rho t+b$ slots in any window of consecutive $t$ slots can be jammed, for $0<\rho \leq 1$, and a non-negative integer, $b$. Probabilistic jamming has also been considered by Chlebus et al.~\cite{Chlebus:2016:SWM:2882263.2882514} in a multi-channel setting, and by Agrawal et al.~\cite{DBLP:conf/spaa/AgrawalBFGY20} where packets have delivery deadlines.
 
When the amount of jamming can be unbounded, Bender et al.~\cite{BenderFiGi19,BenderFiGi16} showed expected constant throughput is possible over non-jammed slots using a $\polylog(N)$ number of sending attempts per packet.  Chang, Jin, and Pettie \cite{ChangJiPe19} give an elegant algorithm that achieves the same asymptotic result, but  drastically improves the expected throughput, rivaling the $1/e$ result achieved by the well-known slotted ALOHA~\cite{binder:aloha}, despite the more-challenging adversarial setting. 

There is a growing body of work~\cite{ChenJiZh21,jiang:robust,DeMarcoSt17,BenderKoKu20} 
on how to design contention resolution protocols where packets listening on the channel during a given slot learn whether or not that slot was successful, but not whether the slot was empty or noisy; that is, packets lack the ability to detect collisions. As mentioned earlier, the results by De Marco and Stachowiak~\cite{DeMarcoSt17},  De Marco, Kowalski, and Stachowiak~\cite{marco:time}, and Bender et al.~\cite{BenderKoKu20} address this setting.  Dealing with jamming in the absence of collision detection is challenging. Here, results by Chen, Jiang, and Zheng \cite{ChenJiZh21} and Jiang and Zheng~\cite{jiang:robust} show how jamming can still be tolerated, although throughput is $\Theta(1/\ln N)$ and packets are allowed to listen on the channel for free.

There is a large body of work on applied security considerations with respect to jamming. The impact of these attacks has been evaluated under a variety of wireless settings  and adversarial strategies~\cite{bayraktaroglu:performance,xu:feasibility,lin:link,aschenbruck:simulative}, and several defenses have been proposed. For example, spread-spectrum techniques involve devices coordinating their communication over different portions of the spectrum in order to evade collisions~\cite{spread:foiling,navda:using,deejam:wood}.  Another defense is to have devices map the areas impacted by jamming and then routing traffic around those areas~\cite{wood2003jam,han2018hierarchical}.  Specialized antenna technology can be used to filter out a jamming signal, which is a process referred to as null steering~\cite{zheng2018design,mcmilin2015gps}. 

\paragraph{Energy efficiency.} As discussed in Section~\ref{sec:intro}, there are a handful of results on contention resolution that achieve constant throughput and address full energy efficiency. To reiterate, Bender et al.~\cite{BenderKoPe18} achieve expected $O(\ln (\ln^*N))$ channel accesses per packet, which Chang et al.~\cite{ChangKoPe19} prove is tight. 
De Marco and Stachowiak~\cite{DeMarcoSt17} and De Marco, Kowalski, and Stachowiak~\cite{marco:time} give sending-efficient algorithms and conjecture that it is possible to extend the results in their papers so that each packet listens to the channel a polylogarithmic number of times (see Section 7 Discussion and open problems in \cite{marco:time}).
However, these results deviate significantly from the ternary model (requiring control messages and, in the case of~\cite{DeMarcoSt17,marco:time}, Good Samaritan packets).

We also note that full energy efficiency has drawn attention outside the domain of contention resolution. Many of these results address wireless networks where, in order to extend the network's operational lifetime,  low-power devices may spend significant time in an energy-efficient sleep state, waking up rarely to access the channel (e.g.,~\cite{wang2019connectivity,deng2005scheduling,li2005energy,ye2006ultra,ullah2009study,merlin2010duty}). Several algorithmic results address energy efficiency in the context of traditional distributed computing problems(e.g., \cite{BarenboimM21,chatterjee2020sleeping,ghaffari2023distributed,dufoulon2023distributed,chang2018energy,king:sleeping,dani2021wake}).

\paragraph{Dynamic packet arrivals.} Much of the initial work on contention resolution addresses arrival times that are dictated by a stochastic process~\cite{MetcalfeBo76,Goldberg-notes-2000,hastad:analysis,goldberg2004bound,shoch1980measured,abramson1977throughput,kelly1987number,aldous:ultimate,GoodmanGrMaMa88,mosely1985class,RaghavanU98,RaghavanU95,GOLDBERG1999232}. The intricacies of whether the contention-resolution algorithm knows local or global history, can adapt its actions over time, can listen on the channel, knows the number of packets in the system, etc. have been explored in depth~\cite{tsybakov1979ergodicity,ferguson1975control,fayolle1977stability,rosenkrantz1983instability,kelly:decentralized,paterson1995contention,DBLP:journals/jacm/GoldbergMPS00,tsybakov1978free-russian,tsybakov1978free,hayes:adaptive,massey1981collision,gallager,tsybakov1980random,verdu1986computation,tsybakov1987upper}. A survey of much of the literature in this setting is provided by Chlebus~\cite{chlebus2001randomized}.

In contrast to stochastic arrivals, packets may arrive at times scheduled by an adversary. The special case where all packets arrive simultaneously has been explored, and $O(n)$ makespan is possible~\cite{Gereb-GrausT92,GreenbergL85,FernandezAntaMoMu13,BenderFaHe05,BenderFiGi06}. More generally, when packets are injected over time in a worst-case fashion,  models from adversarial queuing theory~\cite{BorodinKlRa96,cruz-one,andrews2001universal} have been used to derive stability results~\cite{chlebus2009maximum,chlebus2007stability,anantharamu2010deterministic,aldawsari:adversarial,anantharamu2019packet,anantharamu2011medium,anantharamu2015broadcasting,hradovich2020contention,bienkowski:distributed,garncarek2018local,garncarek2019stable}. Several results explore deterministic approaches for contention resolution (e.g., \cite{deMarco:fast,greenberg1985lower,DBLP:conf/icdcs/MarcoKS19,DEMARCO20171,de2013contention,anantharamu2009adversarial,anantharamu2017adversarial,Kowalski05}, although such approaches often have bounds that are asymptotically inferior to their corresponding randomized solutions.

\paragraph{ Other computation on a shared channel.} A closely-related problem to contention resolution is the wake-up problem, which addresses how long it takes until a  {\it single} successful transmission \cite{jurdzinski2015cost,jurdzinski2005probabilistic,jurdzinski2002probabilistic,jurdzinski2002probabilistic,farach2007initializing,farach2015initializing,Chlebus:2016:SWM:2882263.2882514,ChrobakGK07,ChlebusGKR05,DBLP:journals/dam/MarcoPS07,DBLP:conf/ictcs/MarcoPS05,willard:loglog,fineman:contention,fineman:contention2}; this is in contrast to having all packets succeed. 

More generally, several fundamental distributed computing challenges have been studied involving communication on a multiple access channel, and often even in the presence of adversarial jamming. These include   broadcast~\cite{koo2,gilbert:near,gilbert:making,chen:broadcasting}, leader election~\cite{gilbert:malicious,gilbert2009malicious,DBLP:journals/talg/AwerbuchRSSZ14}, gossip~\cite{gilbert:interference,dolev:gossiping}, node discovery~\cite{meier:speed}, and simple point-to-point communication~\cite{king:conflict,DBLP:journals/dc/KingPSY18}, packet scheduling~\cite{DBLP:conf/waoa/JurdzinskiKL14,DBLP:journals/tcs/AntaGKZ17}.


\section{\OurAlg Algorithm}\label{sec:our-algorithm}

This section presents the \OurAlg algorithm; see Figure~\ref{fig:OurAlg}. For ease of presentation, we describe our algorithm as listening whenever it sends. However, the packet need not actually do both; observe that any packet that is sending  does not need to listen to determine the state of the channel, since if the packet is still in the system after sending in slot $t$, then slot $t$ was noisy.

\begin{figure}[t!]
\begin{tcolorbox}[standard jigsaw, opacityback=0]

\noindent{}\hspace{-3pt}{\bf \OurAlg for packet \boldmath $u$ }\medskip

\noindent{\hspace{-2pt}\it{}Key Variables:} 
\begin{itemize}[leftmargin=19pt,itemsep=1pt]
\item $w_u(t)$: window size of $u$ in slot $t$.\\
 If  $u$ is injected at time slot $t$, then  $w_u(t)=\wmin$.
  \item $c$: a sufficiently large positive constant.
\end{itemize}
\smallskip

\noindent In every slot $t$, packet $u$ executes the following four steps with probability {\small $\displaystyle \frac{c\ln^3 \left(w_u(t)\right)}{w_u(t)}$}:\vspace{-2pt}
\begin{itemize}[leftmargin=19pt,itemsep=0pt,topsep=8pt,after=\vspace{3pt}]
\item \textbf{Listen}
\vspace{10pt}
\item \textbf{Send} with probability 
$\displaystyle \frac{1}{c\ln^3(w_u(t))}$
\vspace{10pt}
\item  If $u$ heard a silent slot, then 
$w_u(t+1) \leftarrow \max{\left\{ \mbox{\small $\displaystyle \frac{w_u(t)}{1+1/(c\ln (w_u(t)))}$}, \, \wmin \right\}}$ 
\vspace{10pt}
\item  If $u$ heard a noisy slot, then $w_u(t+1)\leftarrow  w_u(t)\cdot\left( \mbox{\small $\displaystyle 1+\frac{1}{c\ln(w_u(t))}$} \right)$
\end{itemize}

\end{tcolorbox}
\vspace{-8pt}\caption{\OurAlg algorithm.}  
\label{fig:OurAlg}\vspace{-0pt}
\end{figure}

The probabilities for sending and listening in \OurAlg are determined by a single parameter, which we call \defn{packet $u$'s window size}.
Let $c$ be a sufficiently large positive constant. Let $w_u(t)$ denote packet $u$'s window size at time slot $t$.
When $u$ is injected into the system, its window size is set to the minimum allowed value: {\boldmath $\wmin$}, a sufficiently large positive constant (certainly satisfying $\wmin > 2$ and $\wmin/\ln^3 \wmin \geq c$, the latter inequality ensuring that that $c \ln^3 w_u(t) / w_u(t)$ is at most $1$). 
 The sending and listening rules are as follows. First, packet $u$ listens with probability  $c\ln^3 \left(w_u(t)\right)/w_u(t)$. Then, conditioned on listening, $u$ sends with probability $1/(c\ln^3(w_u(t)))$.

A packet $u$ only has the option to change its window size when it accesses the channel.
Specifically, if at time $t$, packet $u$ listens to the channel and learns that the slot $t$ is busy, then the window size increases (or {\it backs off}) by a \defn{backoff factor} of {\small $\displaystyle 1+1/(c\ln (w_u(t)))$};  that is,
$w_u(t+1) \leftarrow w_u(t)( 1+ 1/(c\ln (w_u(t)))  )$.
Similarly, if at time $t$, packet $u$ accesses the channel and learns that the slot $t$ is empty, then the window size shrinks (or {\it backs on}) by a \defn{backon factor} of 
$1+ 1/c\ln (w_u(t))$, or until it gets back down to $\wmin$, that is,
$w_u(t+1) \leftarrow \max\left\{ w_u(t)/(1+1/(c\ln w_u(t))), \, \wmin \right\}$. 


\section{Technical Overview}

This section gives a technical overview.
In \secref{contention-intro},  we introduce the notion of contention.
In  \secref{potential}, we introduce our potential function $\Phi(t)$. \secref{overview-analysis} gives the main structure of our analysis in terms of intervals. 
Finally, \secref{proof-organization} provides a synopsis of the main analytical results achieved and how they are deployed to make progress towards our main results (\secref{main-results}). 
That is, we describe the main point of each of the technical sections; namely, Sections~\ref{sec:preliminaries}--\ref{sec:energy}.

\subsection{Contention}
\label{sec:contention-intro}
 
For any slot $t$, we define the \defn{contention $\Con$} $= \sum_{u}  1 / w_u(t)$ to be the sum of the sending probabilities in that slot, i.e., the expected number of packets that attempt to send during that slot. 
We say contention is \defn{high} when $\Con > \chigh$, where {\boldmath{$\chigh$}}$>1$ is some fixed positive constant. Conversely, we say that contention is \defn{low} when $\Con < \clow$, where we define {\boldmath{$\clow$}} to be some fixed positive constant such that $\clow \leq 1/\wmin$. Otherwise, if contention is in $[\clow, \chigh]$, then we say that contention is \defn{good}. We also refer to these three windows of contention as \defn{contention regimes}.


\subsection{Our potential function} \label{sec:potential}

Throughout the execution of \OurAlg, we  maintain a potential function $\Phi(t)$ that captures the state of the system at time $t$ and measures the progress toward delivering all packets. When a slot $t$ is inactive, $\Phi(t)=0$. We will see that packet arrivals increase $\Phi(t)$ by $\Theta(1)$ per newly arrived packet, that packets succeeding decrease $\Phi(t)$ by $\Theta(1)$ per packet, that a jammed slot increases $\Phi(t)$ by $O(1)$,  and that on average each slot decreases the potential by $\Theta(1)$, ignoring newly arrived packets. 

For any slot {\boldmath{$t$}}, {\boldmath{$N(t)$}} is the number of packets in the system, 
{\boldmath{$w_u(t)$}} is $u$'s window size, {\boldmath{$\wmax$}} is the largest window size over all packets, and {\boldmath{$\alpha_1$}}, {\boldmath{$\alpha_2$}}, and {\boldmath{$\alpha_3$}} are positive constants. 
Our potential function consists of three terms. Implicitly, the third term is 0 if there are no packets in the system:
$${\Phi(t)} = \alpha_1 N(t)\, +\, \alpha_2 \sum_u \frac{1}{\ln (w_u(t))}\, +\, \alpha_3 \frac{\wmax}{\ln^2(\wmax)} \,.  $$

\noindent  We abbreviate $\Phi(t)$ as:
$$\Phi(t) = \alpha_1 \first(t) + \alpha_2 \second(t) + \alpha_3 \third(t),$$
 where $\alpha_1$, $\alpha_2$, and $\alpha_3$ may be set so that $\Phi(t)$ will decrease as time progresses for all values of contention $\Con$. The notation {\boldmath{$H(t)$}} and {\boldmath{$L(t)$}} is used to highlight that these terms capture the impact on $\Phi$ from high contention and low contention, respectively. 

\paragraph{Why these terms?} There are three main features of the state of the system that are captured by the potential function: the number of packets, the contention, and the size of the windows.  (Note that these are not independent, as larger windows correspond to lower contention.)  Roughly speaking, when there are many active packets, potential should be high, and when there are no packets, the potential should be 0.  The $\first(t)$ term captures this idea directly by counting the number of packets.  

The $\second(t)$ term is chosen so that the expected change to $\second(t)$ in a slot is proportional to the contention. When the contention is high (and the slot is most likely to have a collision), in expectation $\second(t)$ decreases proportional to the contention (due to the update rule on noisy slots). On the other hand, when the contention is low (and the slot is most likely to have silence), $\second(t)$ increases proportional to the contention. 
Overall, this is good news: when contention is high, $\second(t)$ is likely to decrease by a lot.  When contention is low, there is a small expected increase in $\second(t)$, but that increase is counterbalanced by the (small) reduction in $\first(t)$ in expectation due to packet successes.  Choosing $\alpha_1 > \alpha_2$ makes the net effect a decrease. 

Finally, the $\third(t)$ term allows us to cope with 
the situation that the contention is low but  some packets in the system have large windows (e.g., there is a single packet with a very large window).  As it is likely to take a long time for the packet to succeed, the potential should be high.  $\third(t)$ is roughly the expected time for a packet with window $\wmax$ to decrease its window size to a constant if all slots are silent. 
The analysis then needs to show that any increases to $\third(t)$ are counterbalanced by decreases in the other terms, ensured by $\alpha_1 > \alpha_2 > \alpha_3$. 


\paragraph{Challenge with $\third(t)$.}
The $\first(t)$ and $\second(t)$ terms are well-behaved in the sense that they change on a {\it per-slot basis}, while the $\third(t)$ term cannot. To see why, consider the case that several packets with window size $\wmax$ remain.   The $\third(t)$ term only decreases after \emph{all} of those packets have chosen to listen and observed silence.  On any slot that multiple such packets remain, it is extremely unlikely that all of the packets choose to listen.  Thus, $\third(t)$ does not decrease by a constant in expectation.  Instead, we need a coarser granularity to understand the behavior of $\third(t)$.
 

\subsection{Analyzing intervals}\label{sec:overview-analysis}

Our analysis divides the execution into disjoint \defn{intervals} of time. The first interval starts at the first slot with an active packet. An interval starting at time $t$ has size $\tau = (1/\cint) \max\{\third(t), \sqrt{\first(t)}\}$, where $\cint$ is a constant.  If any active packets remain, the next interval starts immediately after the previous interval ends.  (Otherwise, an interval begins the next time there is an active packet.)

A key technical theorem is the following.  Let $\mathcal{A}$ and $\mathcal{J}$ denote the number of arrivals and jammed slots, respectively, in the size-$\tau$ interval.  For $\mathcal{A}=\mathcal{J}=0$, the lemma states that the potential decreases by $\Omega(\tau)$ across the interval, with high probability in $\tau$, meaning a decrease of $\Omega(1)$ per slot. For general $\mathcal{A},\mathcal{J}$, the potential decreases by $\Omega(\tau) - O(\mathcal{A}+\mathcal{J})$. 

\begin{restatable*}[\textbf{\boldmath Decrease in $\Phi(t)$ over  interval $\mathcal{I}$ w.h.p.\ in $|\mathcal{I}|$}]{theorem}{phimain}\thmlabel{phi-main}
Consider an interval $\mathcal{I}$ starting at $t$, and ending at $t'$, of length $|\mathcal{I}|= \tau = (1/\cint) \cdot \max\big\{\frac{\wmax}{\ln^2( \wmax)}, \first(t)^{1/2}\big\}$. Let $\mathcal{A}$ and $\mathcal{J}$ be the number of packet arrivals and jammed slots in $\mathcal{I}$. With high probability in $\tau$,  $\Phi$ decreases over $\mathcal{I}$ by at least $\Omega(\tau) - O(\mathcal{A} + \mathcal{J})$.  That is,
\[ 
\Pr\left[(\Phi(t') - \Phi(t)) 
\,\geq\,  \Theta(\mathcal{A}+\mathcal{J}) - \Omega(\tau)\right] \,\leq 
\, (1/\tau)^{\Theta(1)} \ .
\]
\end{restatable*}

Our proof of \thmref{phi-main} is broken into several lemmas according to the level of contention.   Specifically, we have separate cases for high contention, good contention, and low contention. In each of the cases, absent arrivals and jamming, we argue that there is a net decrease in potential, with high probability, but the contributing term is different in each case. The interplay between $\first(t)$ and $\second(t)$ is tight enough that we analyze the net effect on the sum of these terms together, but we analyze $\third(t)$ separately. A more detailed summary is provided next in \secref{proof-organization}. 

A significant complication is that (1) the probability stated in \thmref{phi-main} depends on the size of the interval, and (2) the interval sizes are determined adaptively by actions of the adversary.  To analyze the full process, we model an execution as a specific biased random walk that we set up as a betting game (\secref{bettinggame}).  The bounds provided by the betting game translate into high-probability bounds with respect to the total number of packets. 

Throughout the paper, standard Chernoff bounds sometimes cannot be used for two reasons. First the adversary can adaptively influence the length of an interval. Moreover within each interval, the adversary can influence which slots are high, low, and good contention. To be able to analyze these slots separately, we must instead apply a generalization of Azuma's inequality (see \thmreftwo{adversarial_azumas_upper}{adversarial_azumas_lower} in Section~\ref{sec:preliminaries}, taken
from~\cite{KuszmaulQi21})  
that gives us Chernoff-like bounds but with adaptively chosen probability distributions. 

Finally, good upper bounds for $\Phi(t)$ enable us to characterize the (implicit and standard) throughput and energy consumption of \OurAlg in all its variety of settings (finite versus infinite executions, arbitrary infinite versus  infinite with adversarial-queuing arrivals, adaptive adversaries that are reactive versus non-reactive).
The most direct application of $\Phi(t)$ is to bound implicit throughput. 
$\Phi(t)$ also gives us an upper bound on the maximum window size  $\wmax$, specifically, $\wmax=O\big(\Phi(t)\ln^2(\Phi(t))\big)$, which we use to prove energy bounds in \secref{energy}. 


\subsection{Proof organization}
\label{sec:proof-organization}

The main analysis in this paper, including all of the proofs, appears in \secref{analysis}. This section summarizes the proof structure, highlighting the key lemma statements.


\paragraph{Overview of \secref{preliminaries}: Preliminaries.} This section lists several well-known  inequalities that are used throughout our analysis. We review  bounds on the probability that a slot is noisy, empty, or contains a successful transmission as a function of contention (\lemrefthree{p-suc}{p-emp}{p-noi}). The lemmas in this section allow us to immediately obtain constant bounds on the probabilities of empty slots, successful slots, and noisy slots in different contention regimes.  

Theorems~\ref{thm:adversarial_azumas_upper} and~\ref{thm:adversarial_azumas_lower} give upper and lower bounds for the sum of random variables, where the distribution of each subsequent random variable is determined by an adaptive adversary. This adversarial, multiplicative version of Azuma's inequality is a powerful tool from~\cite{KuszmaulQi21} that allows us to analyze the performance of our algorithm in situations where a simpler Chernoff-bound-style argument does not appear to work, given the adaptive nature of our adversary.
 

\paragraph{Overview of \secref{second-analysis}:~$\first(t) + \second(t)$ over single slots and intervals,  when contention 
is low, high, and good.} This section addresses the behavior of $\first(t)$ and $\second(t)$. When contention is high, we expect to see a decrease in $\second(t)$, which should be large enough that its reduction outweighs any increase from $\third(t)$, and thus $\Phi(t)$ decreases. 

\lemref{B_term_increase_inverse_log_cubed} 
shows how much $\second(t)$ changes as a result of a \emph{specific} packet listening during a slot~$t$---that is, how much   $\second(t)$ increases when the slot is  silent and decreases when the slot is noisy.
\lemref{AB_low_mid} 
analyzes the change to $\first(t) + \second(t)$  due to the low and good contention slots in an arbitrary interval.
We highlight that the adaptive adversary exerts some control over which slots have low and good contention, since it can inject packets and/or jam in slot $t+1$ based on the packets' random choices in slot~$t$.
In particular, let $|\mathcal{G}|$ denote the number of good slots in the interval, then  \lemref{AB_low_mid} shows the following. 
Over \emph{good-contention} slots, $\first(t)+\second(t)$ decreases by $\Omega(|\mathcal{G}|)$, minus the number of packet injections, jammed slots, and a polylog term in the length of the interval, \whp in the interval length. 
\lemref{AB_low_mid} also  shows that over the \emph{low-contention slots}, $\first(t)+\second(t)$ increases by at most the number of packet injections, jammed slots, and a polylog term in the length of the interval, again \whp in the interval length.  

\begin{restatable*}[\bf Increase/decrease in {\boldmath{$\second(t)$}} due to a silent/noisy slot]{myLemma}{Bterminc}\lemlabel{B_term_increase_inverse_log_cubed}
When packet $u$ listens to a silent slot $t$, $\second(t)$ increases by $\Theta\big(\frac{1}{c\ln^3 w_u}\big)$ due to packet $u$. 
When a packet $u$ listens to a noisy slot, $\second(t)$ decreases by $\Theta\big(\frac{1}{c\ln^3w_u}\big)$ due to packet $u$.  
\end{restatable*}

\begin{restatable*}[\bf Net delta in contribution of {\boldmath $\first(t)$} and {\boldmath $\second(t)$} to potential over low and good contention slots]{myLemma}{ABlowmid}\lemlabel{AB_low_mid}
Let $\mathcal{I}$ be an arbitrary interval 
starting at time $t$, and ending at $t'$, with length $|\mathcal{I}|=\tau$. 
Let $\mathcal{L}$ be the set of slots in $\mathcal{I}$ during which $\Con < \clow$. 
Let $\mathcal{G}$ be the set of slots in $\mathcal{I}$ during which $\Con \geq \clow$ and $\Con \leq \chigh$. 
Let $\mathcal{A}_\mathcal{L}$ be the number of packet arrivals in time slots in $\mathcal{L}$.
Let $\mathcal{A}_\mathcal{G}$ be the number of packet arrivals in time slots in $\mathcal{G}$.
Let $\mathcal{J}_\mathcal{L}$ be the number of jammed slots in $\mathcal{L}$.
Let $\mathcal{J}_\mathcal{G}$ be the number of jammed slots in $\mathcal{G}$.
Define:
\begin{itemize}
    \item the $\defn{net delta}$ over $\mathcal{L}$ to be the sum of the changes in $\first(t)$ and $\second(t)$ during the slots in $\mathcal{L}$, i.e., \[\sum_{t' \in \mathcal{L}}\Big(\alpha_1\big(\first(t'+1) - \first(t')\big) + \alpha_2\big(\second(t'+1) - \second(t')\big)\Big).\]
\item the $\defn{net delta}$ over $\mathcal{G}$ to be the sum of the changes in $\first(t)$ and $\second(t)$ during the slots in $\mathcal{G}$, i.e., 
\[\sum_{t' \in \mathcal{G}}\Big(\alpha_1\big(\first(t'+1) - \first(t')\big) + \alpha_2\big(\second(t'+1) - \second(t')\big)\Big).\]
\end{itemize}

\noindent Then, for proper choices of $\alpha_1$ and $\alpha_2$:
\begin{itemize}[noitemsep]
\item The net delta over $\mathcal{L}$ is at most $O(\ln^2 \tau) + \alpha_1( \mathcal{A}_\mathcal{L} + \mathcal{J}_\mathcal{L})$
w.h.p.\ in~$\tau$.
\item The net delta over $\mathcal{G}$ is at most $O(\ln^2 \tau) + \alpha_1 (\mathcal{A}_\mathcal{G} + \mathcal{J}_\mathcal{G}) - \Omega(|\mathcal{G}|)$
w.h.p.\ in~$\tau$.
\end{itemize}
\end{restatable*}

\lemref{AB_high} provides a symmetric high-probability bound on $\first(t) + \second(t)$ over the \emph{high-contention} slots in an arbitrary interval. For the high-probability bound, in this case, we have that $\first(t) + \second(t)$ will decrease by $\Omega(|\mathcal{H}|)$, where $|\mathcal{H}|$ is the number of high-contention slots, up to the usual additional terms of jamming, packet injections, and a polylog term in terms of the interval length.

\begin{restatable*}[\bf Net delta in contribution of {\boldmath $\first(t)$} and {\boldmath $\second(t)$} to potential over high  contention slots]{myLemma}{ABhigh}
\label{lem:AB_high}
Let $\mathcal{I}$ be an arbitrary interval starting at time $t$, and ending at $t'$, with length $|\mathcal{I}|=\tau$.
Let $\mathcal{H}$
be the set of slots in $\mathcal{I}$ during which $\Con > \chigh$. 
Let $\mathcal{A}_{\mathcal{H}}$ be the number of packet arrivals in time slots in $\mathcal{H}$. 

Define the $\defn{net delta}$ over $\mathcal{H}$ to be the sum of the changes in $\first$ and $\second$ during the slots in $\mathcal{H}$, i.e., \[\sum_{t' \in \mathcal{H}}\Big(\alpha_1\big(\first(t'+1) - \first(t')\big) + \alpha_2\big(\second(t'+1) - \second(t')\big)\Big).\]

\noindent Then the net delta over $\mathcal{H}$ is at most $O(\ln ^3 \tau ) + \alpha_1 \mathcal{A}_\mathcal{H} - \Omega(|\mathcal{H}|)$ w.h.p.\ in~$\tau$.  
\end{restatable*}

It is worth noting that the proofs of 
\lemreftwo{AB_low_mid}{AB_high} are  technically involved. One of the reasons for this is that these lemmas contain our main applications of Theorems \ref{thm:adversarial_azumas_upper} and \ref{thm:adversarial_azumas_lower}. 
This is necessary because the potential-function terms behave very differently in the three contention regimes---and because  the  adaptive adversary has the ability to change the contention in a slot on the fly.

\lemref{AB_not_high_and_decrease} then collates Lemmas~\ref{lem:AB_low_mid} and~\ref{lem:AB_high} to show that over an arbitrary interval of length $\tau$, it is either the case that almost all of the slots are low contention slots, or the first two terms decrease by $\Omega(\tau)$ with high probability (again, up to terms for packet insertions and jamming). 
\lemref{AB_not_high_and_decrease}  considers  all slots, rather than only those of a particular contention regime. This lemma is the only one from this subsection that will be used later in the analysis, but the earlier lemmas in the subsection are necessary to build up to it.

\begin{restatable*}[\textbf{Unless most slots have low contention,  \boldmath $\alpha_1\first(t)+\alpha_2\second(t)$ decreases}]{myLemma}{ABnothigh}
Let $\mathcal{I}$ be an arbitrary interval of length $\tau > \Omega(1)$ with $\mathcal{A}$ packet arrivals and $\mathcal{J}$ jammed slots. With high probability in $\tau$, at least one of the following two conditions holds:

\begin{itemize}[noitemsep]
    \item Less than $1/10$ of slots satisfy $\Con \geq \clow$.
    \item $\alpha_1\first(t) + \alpha_2\second(t)$ decreases by $\Omega(\tau) - O(\mathcal{A} + \mathcal{J})$ over $\mathcal{I}$.
\end{itemize}

\noindent Additionally, $\alpha_1\first(t) + \alpha_2\second(t)$ increases by at most $O(\ln^3 \tau + \mathcal{A} + \mathcal{J})$ w.h.p.\ in $\tau$.

\label{lem:AB_not_high_and_decrease}
\end{restatable*}


\paragraph{Overview of \secref{third-analysis}: Amortized behavior of \boldmath $\third(t)$.}  This section analyzes $\third(t)$'s behavior over intervals of length   
$|\mathcal{I}|= \tau = (1/\cint) \cdot \max\{\frac{\wmax}{\ln^2( \wmax)}, \first(t)^{1/2}\}$.
The two main things that we want to show are that $\third(t)$ does not increase by much, regardless of the contention regime, and that when there are many low-contention slots, $\third(t)$ exhibits a substantial decrease. 

\lemref{window-change} argues that a packet with large-enough window size is unlikely to have its window change by much during the interval. This lemma is instrumental when considering packets across an interval (notably in the proof of \lemref{C-term}) as it means that their probability of listening also does not change by much. 

\lemref{tail-bound-third} provides one of the main results of the section: a tail bound, and hence also a high probability bound, on how much $\third$ increases over the interval regardless of contention.  The proofs for both \lemreftwo{window-change}{tail-bound-third} amount to arguing that an individual packet is unlikely to listen to the channel too many times, which means that its window size also cannot change by very much.  Because we are pessimistically counting the number of listens, the actual state of the channel does not appear in the proofs, and thus the number of jammed slots is irrelevant.  

\begin{restatable*}[\textbf{Bounds on the factor that a large window can grow/shrink}]{myLemma}{Wchange}\label{lem:window-change}
    Consider any packet during an interval $\mathcal{I}$ with $\tau = |\mathcal{I}|$.  
    Let $Z$ satisfy $Z / \ln^2(Z) = \tau$.  And let $W\geq \wmin$ be the initial size of the packet's window. Let $W^-$ be the smallest window size the packet has while still active in the interval, and let $W^+$ be the biggest window size the packet achieves during the interval. Then for large enough choice of constants $\wmin$ and $c$ and any constant parameter $\gamma>0$ and $k\geq 2$:\\
    If $W = \Theta(kZ)$, then 
    $$\Pr\left[W^+ \geq e^{\gamma} W\,\mbox{ or }\, W^- < W/e^{\gamma}\right] \leq 1/\tau^{\Theta(c \gamma \ln (\gamma k))}
    \,.
    $$
    If $W = O(Z)$, then
    \[ \Pr[W^+ \geq \Omega(e^{\gamma} Z)] \leq 1/\gamma^{\Theta(c \gamma \ln(\gamma ))}.\]
    \john{Add to proof last sentence.}
\end{restatable*}

\begin{restatable*}[\textbf{Tail bound on increase in \boldmath $\third(t)$}]{myLemma}{Ltail}
\label{lem:tail-bound-third}
Consider an interval $\mathcal{I}$ with length $\tau = |\mathcal{I}|$ starting from time~$t$ and ending at time $t' = t+\tau$, where $\tau = (1/\cint)\max\left\{L(t), \sqrt{\first(t)}\right\}$.  Let $\mathcal{A}$ be the number of arrivals during the interval.
Then for large-enough constant $c$ in the algorithm and any $k \geq 2$:
\[
    \Pr\left[\third(t') \geq \Theta(\mathcal{A} + k\tau)\right] \leq 2^{-\Theta(c(\ln \tau \cdot \ln k + \ln^2 k))}
    \,.
\]
\end{restatable*}

\lemref{C-term} is the other main result of the section.  This lemma says that as long as most slots have low contention, then $\third$ decreases by $\Omega(\tau)$, minus the number of packet arrivals and jammed slots.   The proof focuses on packets with large windows, i.e., window closes to $\wmax$. The main idea of the proof is to give a high-probability lower bound on the number of times each such packet listens and hears silence as well as an upper bound on how many times the packet listens and hears noise. As long as the former is larger by a constant factor, the packet is likely to decreases its window size by a constant factor.  Taking a union bound across packets is enough to conclude that all packets with large windows have their window sizes decrease, with high probability. 

\begin{restatable*}[\textbf{Mostly low contention implies decrease in
\boldmath $\third(t)$}]{myLemma}{Cterm}
\label{lem:C-term}
Consider an interval $\mathcal{I}$ starting at $t$ of length $\tau$, where $\tau = (1/\cint)$ $\max\left\{L(t), \sqrt{\first(t)}\right\}$. 
Let $t_1 = t + \tau$, and let $\mathcal{A}$ and $\mathcal{J}$ denote the number of packet arrivals and jammed slots, respectively, over $\mathcal{I}$. Then, with high probability
in $\tau$, either 
\begin{itemize}
    \item $\third(t_1) \leq \third(t) / d + O(\mathcal{A})$,
    where $d>1$ is a constant, 
        or
    \item At least a $1/10$-fraction of the slots $t'$ in the interval $\mathcal{I}$ are either jammed or have contention $C(t') \geq \clow$. 
    \end{itemize}  
    Incorporating the fact that $\tau \geq \third(t)/\cint$, 
   it follows that if at least a $9/10$ fraction of slots in the interval have contention at most $\clow$, then $(L(t_1)-L(t)) \leq O(\mathcal{A}+\mathcal{J})-\Omega(\tau)$.  
\end{restatable*}


\paragraph{Overview of \secref{combining-analysis}: Combining the analyses of  $\first(t)$, $\second(t)$, and $\third(t)$, to analyze $\Phi(t)$.}
This section combines all three terms of the potential function to characterize the overall behavior of $\Phi(t)$.  The key tools established in this section are 
\thmref{phi-main} (stated previously in Section~\ref{sec:overview-analysis}) and \thmref{tail-bound}, which allow us to argue that the potential will decrease (most of the time) and that, when this fails to occur, the amount by which it increases is bounded. Specifically, consider a size-$\tau$ interval with $\mathcal{A}$ packet arrivals and $\mathcal{J}$ jammed slots.  \thmref{phi-main} shows that $\Phi(t)$  decreases by $\Omega(\tau) - O(\mathcal{A} + \mathcal{J})$ w.h.p.\ in $\tau$.  \thmref{tail-bound} establishes tail bounds, proving that even when the high-probability bound of \thmref{phi-main} fails, the probability that $\Phi(t)$ increases by
more than $k\tau^2 + O(\mathcal{A} + \mathcal{J})$ is  less than $ \frac{1}{\poly(\tau)}\cdot {(1/2)}^{\Theta(\ln^2k)}$. 

\thmreftwo{phi-main}{tail-bound} are the tools needed to fit our betting game, discussed next and in \secref{bettinggame}, and thereby argue that the potential is likely to decrease sufficiently across multiple intervals.

\begin{restatable*}[\textbf{\boldmath Tail bound on increase in $\Phi(t)$ over interval $\mathcal{I}$}]{theorem}{tailbound}
 \thmlabel{tail-bound}
Consider an interval $\mathcal{I}$ of length $|\mathcal{I}|=\tau$ starting at time $t$ and ending at time $t'$, where $\tau = (1/\cint)\max\left\{L(t), \sqrt{\first(t)}\right\}$. Let $\mathcal{A}$ be the number of packet arrivals in $\mathcal{I}$. Then 
the probability that $\Phi$ increases by at least $\Theta(\mathcal{A}) + \Theta(k\tau^2)$ 
is at most $2^{-\Theta(c(\ln \tau \cdot \ln k + \ln^2 k))} \leq  (1/\tau^{\Theta(c)})\cdot 2^{-\Theta(\ln^2 k)}$, where $c$ is the constant parameter of the algorithm, and $k\geq 2$ is a constant.  That is, 
\[ \Pr\left[(\Phi(t')-\Phi(t)) \geq  \Theta(\mathcal{A}) + \Theta(k\tau^2)\right] \leq \left(\frac{1}{\tau^{\Theta(1)}}\right) \cdot 2^{-\Theta(\ln^2 k)} \ .
\]
\end{restatable*}

\paragraph{Overview of \secref{bettinggame}: Using $\Phi(t)$ to prove throughput via a betting-game argument.} The analysis so far establishes progress guarantees over sufficiently large intervals in the form of \thmreftwo{phi-main}{tail-bound}. 
Here, we show how to apply these theorems to give upper bounds on the potential over the execution with high probability in the total number of packets and jammed slots. 

Since the adversary is adaptive, we have to be careful in
combining bounds across intervals. 
The adversary can use the results of earlier intervals in choosing new arrivals and jamming, which affects the size of later intervals.  
To reason about this process, we reframe it in a setting that resembles a random walk, which we describe below in a \defn{betting game}. Our analysis of this game then allows us to analyze the implicit throughput (recall Section~\ref{sec:model}).

\paragraph{The Betting Game.} We first summarize the betting game and then later relate it to the backoff process. 
The adversary corresponds to a \defn{bettor} who makes a series of bets.  Each bet has a size equal to the duration $\tau$. The bettor also has some amount of money, which is initially $0$ dollars. When the bettor loses a bet, the bettor loses some money, and when the bettor wins, the bettor wins some money. (The amounts won or lost are specified below as a function of the size of the bet.)  Additionally, at any time, the bettor may choose to receive a \defn{passive income}.  The passive income is added to the bettor's wealth. The total amount of passive income taken, however, means that the bettor must play the game longer.  The game begins when the bettor first takes some passive income, and the game does not end until either the bettor goes broke or the bettor has resolved bets totaling $\Omega(\passive)$ size, where $\passive$ is the passive income received, whichever comes first.  The bettor's goal is to complete the game without going broke. Importantly, although the bettor can always choose to take more passive income, doing so increases the total play time. 

We set the details of the betting game to mirror the backoff process. Each bet corresponds to an interval. Passive income during a bet corresponds to the number of arrivals and jammed slots during the interval.  Money corresponds to potential. 

The bettor loses a size-$\tau$ bet with probability at least $1-\frac{1}{\poly(\tau)}$.
If the bettor  loses the size-$\tau$ bet, it loses $\Theta(\tau)$ dollars.
This loss corresponds to the high-probability event (in $\tau$) of \thmref{phi-main}. 
The bettor wins a size-$\tau$ bet  with probability $O(1/\poly(\tau))$.
If the bettor wins the bet, it gets $\Theta(\tau^2)$ dollars, plus $Y$ \defn{bonus dollars}, where $Y$ is a random variable such that $\Pr[Y \geq k\tau^2] \leq  \frac{1}{\poly(\tau)}\cdot 2^{-\Theta(\ln^2k)}$; these winnings correspond to tail bound of \thmref{tail-bound}.
(Of course, during each bet, the bettor can  also gain passive income for arrivals and jammed slots.)

We allow the bettor the power to choose arbitrary bet sizes 
(subject to a minimum  interval 
size, which itself is determined by $\wmin$), and the bettor is even allowed to place bets whose loss would cause the bettor to end with negative money.  (In the actual backoff process, the interval sizes are dictated by the current state of the system, and not entirely under the control of the adversary.) 

The rules of betting game are set in favor of the bettor, such that when the bettor wins, $\Phi(t)$ increases more slowly than the bettor's wealth increases, and when the bettor loses, $\Phi(t)$ decreases at least as fast as the bettor's wealth decreases. Therefore, this betting game stochastically dominates the potential function. 

\emph{The takeaway is that at any point $t$, the bettor's wealth is an upper bound on $\Phi(t)$.} 
Because $\Phi(t)$ is an upper bound on the number of packets in the system, the bettor going broke corresponds to all packets succeeding. We thus obtain good implicit throughput, because there must either be many jammed slots or packet arrivals, or there must be many packets succeeding, leading to inactive slots.  

\paragraph{Upper bounding the bettor's maximum wealth/potential and showing $\Omega(1)$ implicit throughput.} In \lemref{passive}, we provide a high-probability upper bound on the bettor's maximum wealth and the amount of time until it goes broke, which corresponds to there being no packets in the system. \medskip


{\textsc{Lemma}}~\ref{lem:passive} {\it ({\bf The bettor loses the betting game}). Suppose the bettor receives $\passive$ dollars of passive income.   Then with high probability in $\passive$, the bettor never has more than $O(\passive)$ dollars across the execution.  Moreover, the bettor goes broke within $O(\passive)$ active slots, with high probability in $\passive$. }
\medskip

We briefly explain here how \lemref{passive} implies implicit throughput.  Consider a time horizon $t$, and suppose that the bettor has received $\passive = t/c$ dollars from passive income, for constant $c$ matching the \mbox{big-$O$} of the lemma. Then from \lemref{passive}, with high probability in $t/c$, the bettor goes broke within $c\cdot t/c = t$ time; that is, there are no active packets at time $t$.  We thus obtain the $\Omega(1)$ throughput result of \thmref{infinite_arrival_throughput}. 


\paragraph{Overview of \secref{energy}: Channel access/energy bounds.}  
In this section, we establish energy bounds. Two of the theorem statements, namely \thmreftwo{finite_adaptive_accesses}{infinite_accesses}; the rest appear in \secref{energy}. 
Theorems~\ref{thm:finite_adaptive_accesses}--\ref{thm:infinite_accesses}
are proved via properties of our potential function. (Several additional useful lemmas about the potential, not highlighted above, do appear in \secref{bettinggame}).
So far, we have primarily motivated
$\Phi(t)$ as a tool for proving throughput bounds, but $\Phi(t)$  also enables channel-access bounds.

\thmref{finite_adaptive_accesses} gives energy bounds in the finite case against an adaptive adversary. Specifically, if the stream has $N$ packets and $\jams$ jammed slots, then w.h.p.\
each packet accesses the channel at most $\polylog(N+\jams)$ times.
The proof structure is as follows:
Our upper bound on $\Phi(t)$ immediately gives an upper bound on a packet's maximum window size: $\wmax = O(\poly(\Phi(t)) = O(\poly(N+\jams))$. 
Thus, if a packet accesses the channel too many times, then many of these accesses must have been listening during silent slots, so that the packet window can get smaller.
However, by the structure of \OurAlg, whenever a packet first chooses to listen, there is at least a $1/\polylog(N+\jams)$ probability that it also sends. 
Thus, after  $\polylog(N+\jams)$ channel accesses when all other packets are silent, with high probability that packet has succeeded. 

\begin{restatable*}[{\bf Energy bound for finite case against adaptive adversary}]{theorem}{finiteenergy}\label{thm:finite_adaptive_accesses}
Consider an input stream with $N$ packets and $\jams$ jammed slots. Assume that the adversary is adaptive but not reactive. Then w.h.p.\ in $N+\jams$, every packet accesses the channel at most $O(\ln^4(N+\jams))$ times.
\end{restatable*}

The corresponding proof illustrates one subtle design choice of \OurAlg,  which leads to an easier energy analysis. Specifically, a given packet's sending and listening probabilities are correlated: if a packet sends, then it has already decided to listen (but, of course, a packet can listen without deciding to send). 
We conclude by observing that, with an adaptive adversary, all packets have good channel-access bounds.

\thmref{finite_reactive_accesses} gives an analogous result for an adversary that is both adaptive and reactive.
By the very nature of a reactive adversary, there is no possibility of good per-packet bounds on channel accesses. 
(For example, a reactive adversary could target a specific packet
and reactively jam whenever it sees this packet try to send.)
However, interestingly, the amortized channel-access bounds are still good. This is because the reactive adversary only learns about sending on the channel and can react instantaneously; it does not learn whether a packet is listening in the current slot. 
Thus, a targeted packet can still reduce its window (as the other packets do) and it will succeed unless the adversary does significant jamming. 
For example, consider the special case where the targeted packet is the only packet remaining. Then, unless the adversary (which does not sense when a packet will listen) jams a large number of slots, this packet will correctly back on and succeed.

\thmreftwo{queuing_adversarial_accesses}{queuing_reactive_accesses}  generalize 
\thmreftwo{finite_adaptive_accesses}{finite_reactive_accesses} 
to the  adversarial-queuing setting with granularity $S$ and sufficiently small arrival rate $\lambda$.
The main tool is \lemref{queuing_nearby_inactive}, which allows us to transform the adversarial queuing case into finite instances that are not very large. 
\thmref{infinite_accesses} applies to infinite streams with arbitrary arrivals.

\begin{restatable*}[{\bf Channel access bounds for infinite case against adaptive and reactive adversaries}]{theorem}{infiniteenergy}\label{thm:infinite_accesses}
    Suppose that up until time $t$ there have been $N_t$ packet arrivals and $\jams_t$ jammed slots.  
    \begin{itemize}[noitemsep, leftmargin=10pt]
        \item  Consider an adaptive adversary that is not reactive. Then w.h.p.\ in $\jams_t+N_t$, each packet  makes $O(\ln^4(\jams_t+N_t))$ channel accesses before time $t$. 
        \item Consider and adaptive adversary that is reactive.  Then w.h.p.\ in $\jams_t+N_t$, a particular packet accesses the channel at most $O( (\jams_t+1)\ln^3(N_t+\jams_t) + \ln^4(N_t+\jams_t))$ times. Moreover, the average number of channel accesses is $O((\jams_t/N_t + 1)\ln^4(N_t+\jams_t))$.
    \end{itemize}   
\end{restatable*}


\section{Analysis}\label{sec:analysis}

Our analysis is presented in the same order as discussed in our overview of the proof organization in Section~\ref{sec:proof-organization}.

\subsection{Preliminaries}\label{sec:preliminaries}

We make use of the following inequalities, which appear throughout the previous literature.

\begin{fact}\label{fact:taylor}
The following inequalities hold. 
\begin{enumerate}[label={(\alph*)},leftmargin=20pt]
\item For any $0\leq x<1$, $1 - x \geq e^{-x/(1-x)}$.\label{fact-a}
\item For any $0\leq x<1$, $\ln(1+x) = x - \sum_{j=2}^{\infty} (-x)^{j}/j$.\label{fact-b} 
\item For any $0\leq x<1/2$,  $1+2x \geq e^{x}$.\label{fact-c} 
\item For any $x$, $1 - x \leq e^{-x}$.\label{fact-d}
\end{enumerate}
\end{fact}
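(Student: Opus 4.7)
The plan is to prove each inequality via a short calculus argument or via direct manipulation from known elementary inequalities, treating each part independently. Since these are all standard bounds, the main obstacle is simply choosing the cleanest derivation for each; I do not anticipate any substantive difficulty.

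For part~\ref{fact-d}, I would proceed directly: let $h(x) = e^{-x} - (1-x)$, so $h(0)=0$ and $h'(x) = 1 - e^{-x}$ has the same sign as $x$, so $h$ attains its global minimum $0$ at $x=0$, giving $1-x \leq e^{-x}$ for all real $x$. For part~\ref{fact-a}, the slickest route is to apply~\ref{fact-d} with the substitution $y = x/(1-x)$, so that $1 - y \leq e^{-y}$; since $1 - x/(1-x) = (1-2x)/(1-x)$ this substitution is not quite right, so instead I would apply the equivalent form $1+y \leq e^{y}$ (which is just~\ref{fact-d} with $-y$ in place of $x$) to $y = x/(1-x) \geq 0$. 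This yields $1/(1-x) = 1 + x/(1-x) \leq e^{x/(1-x)}$, and taking reciprocals (both sides are positive since $0 \leq x < 1$) gives $1-x \geq e^{-x/(1-x)}$.

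For part~\ref{fact-c}, I would again use a one-variable calculus argument. Define $g(x) = 1 + 2x - e^{x}$, noting $g(0)=0$ and $g'(x) = 2 - e^{x}$, which is nonnegative precisely when $x \leq \ln 2$. Since $1/2 < \ln 2$, $g$ is nondecreasing on $[0, 1/2]$, so $g(x) \geq g(0) = 0$ throughout this interval, proving the bound.

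For part~\ref{fact-b}, this is simply the Taylor/Maclaurin expansion of $\ln(1+x)$, which converges for $|x| < 1$ (and in particular for $0 \leq x < 1$). The standard series is $\ln(1+x) = \sum_{j=1}^{\infty} (-1)^{j+1} x^{j}/j$; separating the $j=1$ term and rewriting $(-1)^{j+1} x^{j} = -(-x)^{j}$ for $j \geq 2$ gives exactly the claimed identity $\ln(1+x) = x - \sum_{j=2}^{\infty} (-x)^{j}/j$. I would cite this as a standard fact rather than reproving convergence. Overall, the proposal is just to dispatch the four parts in the order \ref{fact-d}, \ref{fact-a}, \ref{fact-c}, \ref{fact-b}, each with a two-line calculus argument or a direct appeal to a textbook power series.
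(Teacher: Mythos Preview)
Your proposal is correct. The paper itself does not prove this fact at all; it simply states these as well-known inequalities in the preliminaries, so your short calculus derivations for parts~\ref{fact-a}, \ref{fact-c}, \ref{fact-d} and the appeal to the standard Maclaurin series for~\ref{fact-b} are entirely adequate and more detailed than what the paper supplies.
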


For any fixed slot $t$, let {\boldmath $\psuccess(t)$} denote the probability that some message is successful in slot $t$. We state the following result relating $\Con$ and  $\psuccess(t)$; this has appeared in similar forms in~\cite{richa:jamming4,BenderFiGi19,DBLP:conf/spaa/AgrawalBFGY20,BenderFiGi16,ChangJiPe19}.

\begin{restatable}[\textbf{Probability of success as a function of contention}]{myLemma}{sucProb}\label{lem:p-suc}
For any unjammed slot $t$, where $w_u(t) \geq 2$ for all packets, the following holds:
$$ \frac{ \Con}{e^{2\Con}} \leq \psuccess(t) \leq \frac{2\Con}{e^{\Con}}.$$
\end{restatable}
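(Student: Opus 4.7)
The plan is to combine the two pieces of the algorithm's randomization into a single effective send probability and then compute the success probability directly, using standard approximations for products of the form $\prod_v (1 - 1/w_v)$.

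First I would observe that by the algorithm in \figref{OurAlg}, packet $u$'s unconditional probability of sending in slot $t$ is
\[
\frac{c\ln^3(w_u(t))}{w_u(t)} \cdot \frac{1}{c\ln^3(w_u(t))} \;=\; \frac{1}{w_u(t)},
\]
and that these send decisions are independent across packets. Therefore
\[
\psuccess(t) \;=\; \sum_u \frac{1}{w_u(t)}\prod_{v \neq u}\!\left(1 - \frac{1}{w_v(t)}\right),
\]
and the expected number of sends is exactly $\Con = \sum_u 1/w_u(t)$.

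For the upper bound, I would apply \Cref{fact:taylor}\ref{fact-d} to each factor in the product: $1 - 1/w_v(t) \leq e^{-1/w_v(t)}$. This gives
\[
\prod_{v\neq u}\!\left(1 - \frac{1}{w_v(t)}\right) \;\leq\; e^{-\sum_{v\neq u} 1/w_v(t)} \;=\; e^{-\Con}\cdot e^{1/w_u(t)}.
\]
Since $w_u(t) \geq \wmin \geq 2$, the factor $e^{1/w_u(t)} \leq e^{1/2} \leq 2$, so summing over $u$ yields $\psuccess(t) \leq 2\Con/e^{\Con}$.

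For the lower bound, I would use \Cref{fact:taylor}\ref{fact-a}: $1 - x \geq e^{-x/(1-x)}$ for $0 \leq x < 1$. Because $1/w_v(t) \leq 1/2$, we have $1/(1 - 1/w_v(t)) \leq 2$, so $1 - 1/w_v(t) \geq e^{-2/w_v(t)}$. Multiplying,
\[
\prod_{v\neq u}\!\left(1 - \frac{1}{w_v(t)}\right) \;\geq\; e^{-2\sum_{v\neq u}1/w_v(t)} \;\geq\; e^{-2\Con},
\]
and summing over $u$ gives $\psuccess(t) \geq \Con / e^{2\Con}$. There is no real obstacle here; the only subtlety is remembering that the product excludes $u$ and correcting for this via the $w_u \geq 2$ hypothesis, which is precisely why the lemma requires $w_u(t) \geq 2$ for every packet.
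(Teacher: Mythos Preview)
Your proof is correct and follows essentially the same approach as the paper: both use Fact~\ref{fact:taylor}\ref{fact-a} for the lower bound and Fact~\ref{fact:taylor}\ref{fact-d} for the upper bound, with the hypothesis $w_u(t)\geq 2$ supplying the factor of~$2$. The only cosmetic difference is that for the upper bound the paper multiplies and divides by $(1-p_u)$ to obtain the full product $\prod_v(1-p_v)$ and then bounds $p_u/(1-p_u)\leq 2p_u$, whereas you keep the product over $v\neq u$ and bound the leftover $e^{1/w_u}\leq e^{1/2}\leq 2$ directly; these are equivalent manipulations.
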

\begin{proof}
To obtain the lower bound, note that the probability that some packet succeeds is at least:
 $$  \sum_{\mbox{\tiny all packets $u$}} \left( p_u(t) \prod_{\mbox{\tiny all packets $v$}}(1-p_v(t))\right)  
  \geq \frac{ C(t)}{e^{2C(t)}}$$  
  \noindent where the inequality follows from the left-hand side of Fact~\ref{fact:taylor}\ref{fact-a} and that $p_v\leq 1/2$ for all $v$, given that each packet always has a window size at least two. 
  
To obtain the upper bound, we have:
 \begin{eqnarray*} 
  && \sum_{\mbox{\tiny all packets $u$}} \left(p_u(t) \prod_{\mbox{\tiny all packets $v\not=u$}}(1-p_v(t))\right)\\
  &=& \sum_{\mbox{\tiny all packets $u$}} \left(p_u(t) \frac{(1-p_u(t))}{(1-p_u(t))} \prod_{\mbox{\tiny all packets $v\not=u$}}(1-p_v(t))\right)\\
  &\leq& e^{-C(t)} \left( \sum_{\mbox{\tiny all packets $u$}} \frac{p_u(t)}{1-p_u(t)}\right) \\
  & \leq & e^{-C(t)} 2C(t), 
  \end{eqnarray*}
  \noindent where the second-last line follows from applying the right-hand side Lemma~\ref{fact:taylor}, and the last line follows since $p_v\leq 1/2$ for all $v$.
\end{proof}

Note that any packet's window sizes is always at least $2$ under \OurAlg, so this lemma is always applicable for our algorithm.

Let \defn{$\pempty(t)$} denote the probability that slot $t$ is empty, and let 
\defn{$\pcollision(t)$}  denote the probability that slot $t$ is  noisy.  We have the following lower bounds on these probabilities.

\begin{restatable}[\textbf{Probability of an empty slot as a function of contention}]{myLemma}{empProb}\label{lem:p-emp}
For any unjammed slot $t$, where $w_u(t) \geq 2$ for all packets, the following holds:
$$  e^{-2\Con} \leq \pempty(t) \leq e^{-\Con}.$$
\end{restatable}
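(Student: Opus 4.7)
The plan is to directly compute $\pempty(t)$ as a product of per-packet non-send probabilities and then squeeze it between the two exponential bounds using the elementary inequalities already catalogued in Fact~\ref{fact:taylor}. Under \OurAlg, a packet $u$ sends in slot $t$ with probability $\frac{c\ln^3 w_u(t)}{w_u(t)} \cdot \frac{1}{c\ln^3 w_u(t)} = \frac{1}{w_u(t)}$, and these events are independent across packets (the coin flips are independent, and the adversary's choices have already determined the $w_u(t)$ values by the start of slot~$t$). Since the slot is unjammed, it is empty exactly when no packet sends, so
\[
\pempty(t) \;=\; \prod_{u}\left(1 - \frac{1}{w_u(t)}\right).
\]

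For the upper bound, I would apply Fact~\ref{fact:taylor}\ref{fact-d} termwise: $1 - 1/w_u(t) \leq e^{-1/w_u(t)}$. Taking the product and collapsing the exponent yields $\pempty(t) \leq e^{-\sum_u 1/w_u(t)} = e^{-\Con}$, which is the right-hand inequality.

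For the lower bound, I would use Fact~\ref{fact:taylor}\ref{fact-a} with $x = 1/w_u(t)$, which applies because $w_u(t) \geq \wmin \geq 2$ guarantees $x \leq 1/2 < 1$. This gives $1 - 1/w_u(t) \geq \exp\!\bigl(-\tfrac{1/w_u(t)}{1 - 1/w_u(t)}\bigr) = \exp\!\bigl(-\tfrac{1}{w_u(t)-1}\bigr)$. The one small calculation to check is the bound $\tfrac{1}{w_u(t)-1} \leq \tfrac{2}{w_u(t)}$, which is equivalent to $w_u(t) \geq 2$ and hence holds by the hypothesis on window sizes. Thus $1 - 1/w_u(t) \geq e^{-2/w_u(t)}$, and taking the product gives $\pempty(t) \geq e^{-2\sum_u 1/w_u(t)} = e^{-2\Con}$.

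There is no real obstacle here; the only care needed is in verifying the hypothesis $w_u(t) \geq 2$ of the lemma (inherited from $\wmin \geq 2$) to justify the per-packet bound $\tfrac{1}{w_u(t)-1} \leq \tfrac{2}{w_u(t)}$ and the applicability of Fact~\ref{fact:taylor}\ref{fact-a}. This same hypothesis is what makes the proof of the companion bound in \lemref{p-suc} go through with matching constants, and it is reassuring that both bounds degrade gracefully to tight exponential envelopes in $\Con$.
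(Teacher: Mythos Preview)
Your proof is correct and essentially identical to the paper's: both express $\pempty(t)$ as the product $\prod_u(1-1/w_u(t))$, apply Fact~\ref{fact:taylor}\ref{fact-d} termwise for the upper bound, and apply Fact~\ref{fact:taylor}\ref{fact-a} together with $w_u(t)\geq 2$ for the lower bound. You actually spell out the intermediate inequality $\tfrac{1}{w_u(t)-1}\leq \tfrac{2}{w_u(t)}$ more explicitly than the paper does, which is a nice touch.
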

\begin{proof}
Slot $t$ is empty when no packet sends in that slot, and this event occurs with probability:
\begin{align*}
\prod_{\mbox{\tiny all packets u}}(1-1/w_u(t)) & \geq e^{-2\sum_{\mbox{\tiny all packets u}} 1/w_u(t)} \\
& = e^{-2\Con}.
\end{align*}
\noindent where the first inequality holds by Fact~\ref{fact:taylor}\ref{fact-a} and because window sizes are always at least $2$. The upper bound is derived as follows:
\begin{align*}
\prod_{\mbox{\tiny all packets u}}(1-1/w_u(t)) & \leq e^{-\sum_{\mbox{\tiny all packets u}} 1/w_u(t)} \\
& = e^{-\Con}.
\end{align*}
where the first inequality holds by Fact~\ref{fact:taylor}\ref{fact-d}.
\end{proof}

\begin{restatable}[\textbf{Probability of a noisy slot as a function of contention}]
{lemma}{noiProb}\label{lem:p-noi}
For any unjammed slot $t$, where $w_u(t) \geq 2$ for all packets, $\pcollision(t) \geq 1 -  \frac{2\Con}{e^{\Con}} - \frac{1}{e^{\Con}}$.
\end{restatable}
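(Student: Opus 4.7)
The plan is to observe that for any fixed slot $t$, the three events ``empty'', ``successful'', and ``noisy'' form a partition of the sample space: every unjammed slot has either zero, one, or at least two sending packets, and these cases correspond exactly to the three categories. Consequently, their probabilities sum to one, so
\[
\pcollision(t) \;=\; 1 - \pempty(t) - \psuccess(t).
\]

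With this identity in hand, the proof reduces to plugging in the upper bounds already established. From \lemref{p-emp} we have $\pempty(t) \leq e^{-\Con} = 1/e^{\Con}$, and from \lemref{p-suc} we have $\psuccess(t) \leq 2\Con/e^{\Con}$. Substituting both into the identity above yields
\[
\pcollision(t) \;\geq\; 1 - \frac{1}{e^{\Con}} - \frac{2\Con}{e^{\Con}},
\]
which is exactly the claimed inequality. Since the hypothesis $w_u(t) \geq 2$ for all packets (and the slot being unjammed) is what was required to invoke \lemref{p-emp} and \lemref{p-suc}, the same hypothesis suffices here.

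There is essentially no obstacle: the argument is a one-line combination of the partition identity with the previously proved upper bounds on $\pempty$ and $\psuccess$. The only subtlety worth stating explicitly in the write-up is that the partition into empty/successful/noisy truly is exhaustive for an unjammed slot, so that no probability mass is lost when we rearrange as $1 - \pempty - \psuccess$. (In a jammed slot the slot is deterministically noisy, and in that case the bound is vacuous since the right-hand side is at most $1$.)
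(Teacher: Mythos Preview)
Your proposal is correct and essentially identical to the paper's proof: the paper writes $\pcollision(t) = 1 - \psuccess(t) - \pempty(t)$ and immediately substitutes the upper bounds from \lemref{p-suc} and \lemref{p-emp}. Your version is simply a bit more explicit about why the three events partition the sample space.
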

\begin{proof}
The probability that slot $t$ contains noise is: 
\begin{align*}
1-\psuccess(t) - \pempty(t) &\geq 1 - \frac{2\Con}{e^{\Con}} -  \frac{1}{e^{\Con}}.
\end{align*}
\end{proof}

Let \defn{$\psuccessgood$} be a lower bound on the probability of success when contention is good in a slot. Let \defn{$\pemptylow$} be a lower bound on the probability of an empty slot when contention is low in a given slot. Let  \defn{$\pcollisionhigh$} be a lower bound on the probability of a noisy slot when contention is high in slot $t$. Finally, let \defn{$\pemptygood$} be a lower bound on the probability of an empty slot given that contention is good (in which case we can establish such a lower bound using the fact that $\Con \leq \chigh$). A constant lower bound can be established for each event, so we set all of these quantities to $\Theta(1)$, which follows directly from Lemmas~\ref{lem:p-suc}, \ref{lem:p-emp}, and~\ref{lem:p-noi}. 

Throughout this paper, we often want to use Chernoff bounds to bound some random process over various intervals. The problem is that the probability distributions vary in each time slot, i.e., there are different probabilities of listening, sending, etc., depending on the contention in each slot as well as the window sizes of the packets that make up that contention. On top of that, we want to apply different analyses depending on whether contention is good, low, or high, and the number of slots within each contention category is something that is also determined by the adversary adaptively. To simplify enormously (and in some cases, enable) these Chernoff-style arguments, we use the following theorems from~\cite{KuszmaulQi21}, which generalize Chernoff bounds using a martingale style analysis, allowing the probability distribution in one time slot to depend on the outcomes of previous time slots.

\begin{theorem}[\textbf{Corollary~11 from~\cite{KuszmaulQi21}}]
Suppose that Alice constructs a sequence of random variables $X_1, \ldots, X_n$, with $X_i \in [0, c]$, $c > 0$, using the following iterative process: once the outcomes of $X_1, \ldots, X_{i-1}$ are determined, Alice then selects the probability distribution $\mathcal{D}_i$ from which $X_i$ will be drawn; $X_i$ is then drawn from distribution $\mathcal{D}_i$. Alice is an adaptive adversary in that she can adapt $\mathcal{D}_i$ to the outcomes of $X_1, \ldots X_{i-1}$. The only constraint on Alice is that $\sum_i \E[X_i \sim \mathcal{D}_i] \leq \mu$, that is, the sum of the means of the probability distributions $\mathcal{D}_1, \ldots , \mathcal{D}_n$ must be at most $\mu$.

Let $X = \sum_i X_i$. For any $\delta > 0$,
\[\Pr[X \geq (1 + \delta) \mu] \leq \exp{\left(- \frac{\delta^2 \mu }{(2 + \delta)c}\right)}.\]

Suppose the value of $\sum_i \E[X_i \sim D_i]$ is not fixed in advance, but is bounded below by $\mu' \geq 1$. Then, a similar result is achieved with a slightly weaker bound:
\begin{align*}
\Pr\left[X \geq 2(1 + \delta) \sum_i \E[X_i \sim D_i]\right] 
&\leq O\left(\exp{\left(- \frac{\delta^2 \mu' }{(2 + \delta)c}\right)}\right).
\end{align*}
\label{thm:adversarial_azumas_upper}
\end{theorem}
\begin{proof}
When $\mu$ is known in advance, this theorem is proved as Corollary~11 in~\cite{KuszmaulQi21}.  When $\mu$ is not known in advance, we can construct a sequence of such games for $\mu = 1, 2, 4, \ldots$. Each of these games uses the same sequence of distributions from Alice. If Alice's choices satisfy the $\mu$ for that specific game, then the above tail bound holds. Since the error probability decreases exponentially with $\mu$, we can take a union bound over all the games for which Alice's choices satisfy the requirement, with the error dominated by the smallest~$\mu \geq \mu'$. 
\end{proof}

\begin{theorem}[\textbf{Corollary~16 from \cite{KuszmaulQi21}}]
Suppose that Alice constructs a sequence of random variables $X_1, \ldots , X_n$, with $X_i \in [0, c], c > 0$, using the following iterative process. Once the outcomes of $X_1, \ldots, X_{i-1}$ are determined, Alice then selects the probability distribution $\mathcal{D}_i$ from which $X_i$ will be drawn; $X_i$ is then drawn from distribution $\mathcal{D}_i$. Alice is an adaptive adversary in that she can adapt $\mathcal{D}_i$ to the outcomes of $X_1, \ldots X_{i-1}$. The only constraint on Alice is that $\sum_i \E[X_i \sim \mathcal{D}_i] \geq \mu$, that is, the sum of the means of the probability distributions $\mathcal{D}_1, \ldots , \mathcal{D}_n$ must be at least~$\mu$.

Let $X = \sum_i X_i$. For any $\delta > 0$,
\[\Pr[X \leq (1 - \delta) \mu] \leq \exp{\left(- \frac{\delta^2 \mu }{2c}\right)}.\]

Suppose the value of $\sum_i \E[X_i \sim D_i]$ is not fixed in advance, but is bounded below by $\mu' \geq 1$. Then, the same result is achieved with a slightly weaker bound:

\[\Pr\left[X \leq (1 - \delta) \sum_i \E[X_i \sim D_i] /2\right] \leq O\left(\exp{\left(- \frac{\delta^2 \mu' }{2c}\right)}\right).\]
\label{thm:adversarial_azumas_lower}
\end{theorem}
\begin{proof}
When $\mu$ is known in advance, this theorem is proved as Corollary~16 in~\cite{KuszmaulQi21}.
The rest of the proof follows identically to that of \thmref{adversarial_azumas_upper}.
\end{proof}


\subsection{ \texorpdfstring{\boldmath{$\first(t)+\second(t)$}}{} over single slots and intervals,  when contention is low, high, and good}\label{sec:second-analysis}
\hfill \\
\paragraph{Analyzing \boldmath $\first(t)$: over single slots and intervals,  when $\Con$ is good.}
We ease into  our analysis by starting with the first term of our potential function,  $\first(t)$, in the special case where contention is constant. Lemma~\ref{lem:A-term} analyses the expected decrease in $\first(t)$ on a per-slot basis when contention is good.

\begin{lemma}\label{lem:A-term}
Consider a  slot where contention is good, i.e., $\Con \in [\clow, \chigh]$, and the number of packet arrivals is $\mathcal{A}$. Then,  $\first(t)$ decreases by at least $\psuccessgood - \mathcal{A}=\Theta(1) - \mathcal{A}$ in expectation.
\end{lemma}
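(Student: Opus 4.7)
The proof plan is essentially a direct unpacking of definitions plus one invocation of \lemref{p-suc}. The term $\first(t)$ just counts packets in the system, and its slot-to-slot change is governed by two effects: packets arriving (adding to $\first(t)$) and packets successfully transmitting (removing one packet per success). No other mechanism changes $\first(t)$.

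First I would write
\[
\first(t+1) - \first(t) \;=\; \mathcal{A} - X_t,
\]
where $X_t \in \{0,1\}$ is the indicator that slot $t$ contains a successful transmission. Here $\mathcal{A}$ is a (possibly adversarial but prescribed) nonrandom count of arrivals in this slot, and $X_t$ is the sole source of randomness relevant to $\first$. Taking expectations gives $\E[\first(t+1)-\first(t)] = \mathcal{A} - \Pr[X_t = 1]$.

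Next I would lower bound $\Pr[X_t = 1]$. By hypothesis $\Con \in [\clow, \chigh]$, and every packet's window satisfies $w_u(t) \geq \wmin \geq 2$, so \lemref{p-suc} applies and yields $\psuccess(t) \geq \Con / e^{2\Con}$. Since $x/e^{2x}$ is bounded below by a positive constant on the compact interval $[\clow,\chigh]$, we get $\psuccess(t) \geq \psuccessgood = \Theta(1)$, which is exactly the definition of $\psuccessgood$ given in \secref{preliminaries}. Plugging this back gives $\E[\first(t) - \first(t+1)] \geq \psuccessgood - \mathcal{A}$, which is the claim.

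The only subtlety worth noting (and which I would address in a one-line remark) is that this bound holds regardless of whether the adversary jams or injects packets adaptively based on past randomness, because $\mathcal{A}$ is stipulated by the lemma and we are conditioning on contention lying in the good range; the probabilistic statement is over the packets' independent coin flips in slot $t$. If the slot happens to be jammed then $X_t = 0$ deterministically, so the lemma implicitly conditions on the slot being unjammed (consistent with the way $\psuccessgood$ is defined via \lemref{p-suc}, which assumes an unjammed slot). There is no real obstacle here; the lemma is a warm-up that fixes notation and isolates the ``good contention'' success probability for later use in \lemref{AB_low_mid}.
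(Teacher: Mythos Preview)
Your proposal is correct and matches the paper's proof essentially line for line: both invoke \lemref{p-suc} to get $\psuccess(t)\geq \Con e^{-2\Con}$, observe this is at least a positive constant (the paper writes it explicitly as $\clow e^{-2\chigh}$) on the good-contention range, and then account for the $\mathcal{A}$ arrivals. Your remark about the implicit unjammed assumption is a reasonable clarification that the paper leaves unstated.
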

\begin{proof}
By Lemma~\ref{lem:p-suc}, $\psuccessgood(t) \geq \Con\,e^{-2\Con} \geq \clow\, e^{-2\chigh} =\Theta(1)$, which is thus the expected amount by which $\first(t)$ decreases. Packet arrivals increase $\first(t)$ by $1$ per packet arrival.
\end{proof}

\paragraph{How $\second(t)$ changes in expectation on a per-slot basis.} Observation~\ref{obs:busyB} shows that, in a noisy slot, $\second(t)$ decreases in expectation by $\Omega(\Con)$. Conversely, Observation~\ref{obs:silentB} shows that, in a
silent slot, $\second(t)$ increases in expectation by $O(\Con)$. These next two observations only offer intuition; however, for completeness, we provide their accompanying analysis.

\begin{restatable}{observation}{noisyexpectation}\label{obs:busyB}
Assume no packet arrivals and no jamming. Then, \[\E[\second(t+1)~|~t \mbox{~is noisy}, \second(t), \Con]  \leq \second(t) -\frac{\Con}{2c}.\]
\end{restatable}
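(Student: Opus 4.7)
The plan is to decompose the analysis into (i) a per-packet bound on the decrease $\Delta_u$ in $u$'s contribution $1/\ln w_u$ to $\second$ when $u$ listens to a noisy slot, and (ii) a bound on the probability that $u$ listens, conditional on the slot being noisy. The two ingredients combine via linearity of conditional expectation.

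For (i), the algorithm's update rule sets $w_u(t+1) = w_u(t)\bigl(1 + 1/(c\ln w_u(t))\bigr)$ whenever $u$ listens to a noisy slot. Writing $a = \ln w_u(t)$, I apply \propref{taylor}(b) (equivalently, the bound $\ln(1+y) \geq y - y^2/2$ for $y \in [0,1)$) to the value $y = 1/(ca)$ to obtain $\ln w_u(t+1) \geq a + 1/(ca) - 1/(2c^2a^2) \geq a + 1/(2ca)$, valid once $ca \geq 1$ (which holds for the chosen constants). A short algebraic simplification then yields
\[
\Delta_u \;=\; \frac{1}{a} - \frac{1}{\ln w_u(t+1)} \;\geq\; \frac{1/(2ca)}{a^2 + 1/(2c)} \;\geq\; \frac{1}{4c\,\ln^3 w_u(t)}.
\]
For (ii), let $B$ be the event that slot $t$ is noisy, let $q_u$ and $r_u$ be the probabilities that, among the packets other than $u$, at least one (resp.\ at least two) send in slot $t$, and recall that $q_u \geq r_u$. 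Because $u$'s sending automatically implies listening while $u$'s listening-without-sending choice is independent of other packets' actions, one computes directly that $\Pr[u\text{ listens}, B] = (1/w_u)\,q_u + (p_u - 1/w_u)\,r_u$ and $\Pr[B] = (1/w_u)\,q_u + (1 - 1/w_u)\,r_u$, where $p_u = c\ln^3(w_u(t))/w_u(t)$ is $u$'s unconditional listening probability. Cross-multiplying, the inequality $\Pr[u\text{ listens} \mid B] \geq p_u$ reduces to $(1-p_u)(q_u - r_u)/w_u \geq 0$, which holds as long as $p_u \leq 1$ (ensured by choosing $\wmin$ large enough relative to $c$).

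Combining (i) and (ii), since the only way $\second$ changes in slot $t$ is through packets listening to it, linearity of expectation gives
\[
\E[\second(t+1) - \second(t) \mid B] \;=\; -\sum_u \Pr[u\text{ listens}\mid B]\cdot\Delta_u \;\leq\; -\sum_u p_u \cdot \frac{1}{4c\ln^3 w_u(t)} \;=\; -\sum_u \frac{1}{4w_u(t)} \;=\; -\frac{\Con}{4},
\]
which is at most $-\Con/(2c)$ whenever $c \geq 2$. The calculation in (i) is routine Taylor manipulation, so the main obstacle is step (ii): because $u$'s listening indicator is positively correlated with $u$'s own send (which is in turn positively correlated with the event $B$), one must resist the temptation to treat listening as independent of $B$ and instead work through the explicit joint distribution, exploiting the fact that $u$'s send/listen choice is independent of the \emph{other} packets' sends.
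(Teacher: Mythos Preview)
Your proof is correct and follows essentially the same skeleton as the paper's---multiply each packet's listening probability by the per-packet drop in $1/\ln w_u$ and sum---but it is more careful in two places where the paper is deliberately informal (the paper flags this observation as offering only intuition). First, the paper plugs in the unconditional listening probability $p_u=c\ln^3(w_u)/w_u$ without addressing the conditioning on the slot being noisy; your step~(ii) fills this gap by computing $\Pr[u\text{ listens}\mid B]\ge p_u$ directly, which is exactly the right fix since listening is positively correlated with the noisy event through the send-implies-listen coupling. Second, your Taylor step in~(i) uses a lower bound $\ln(1+y)\ge y-y^2/2$ on the increment of the logarithm, which is the direction needed to lower-bound $\Delta_u$; the paper's appendix derivation instead applies $\ln(1+y)\le y$, which on its face points the inequality the wrong way. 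Your version thus stands on its own as a complete argument and even delivers the sharper constant $\Con/4$ before relaxing to $\Con/(2c)$.
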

\begin{proof}
$\second(t)$ is fixed (i.e., not a random variable) and each packet $u$ listens with probability $\frac{c\ln^3 (w_u(t))}{w_u(t)}$. If $u$ listens to slot $t$, then $w_u(t+1) \leftarrow w_u(t) \left(1+ \frac{1}{c\ln (w_u(t))}\right)$. Thus, we have:
\begin{align*}
\E[\second(t)-\second(t+1)] &=\sum \frac{c\ln^3 (w_u(t))}{w_u(t)}  \left( \frac{1}{c\ln (w_u(t))} - \frac{1}{c\ln (w_u(t+1))}\right)\\
&= \sum \frac{c\ln^3 (w_u(t))}{w_u(t)}  \left( \frac{1}{c\ln (w_u(t))} - \frac{1}{c\ln\left(w_u(t)\left(1+\frac{1}{c\ln (w_u(t))}\right)\right)} \right)
\end{align*}
\begin{align*}
~~~~~~~~~~~~~~~&\geq \sum \frac{c\ln^3 (w_u(t))}{w_u(t)} \left(  \frac{1}{c\ln (w_u(t))} - \frac{1}{c\ln (w_u(t))+\frac{1}{\ln (w_u(t))}}  \right)\\
&=\sum \frac{c\ln^3 (w_u(t))}{w_u(t)} \left( \frac{1}{c^2\ln^3 (w_u(t)) + c\ln (w_u(t))}\right)\\
& \geq \sum_u \frac{c\ln^3 (w_u(t))}{w_u(t)}  \left(\frac{1}{2c^2\ln^3 (w_u(t))}\right)\\
& = \frac{\Con}{2c}.
\end{align*}
\noindent The third line follows from Fact~\ref{fact:taylor}\ref{fact-b} using $x=\frac{1}{c\ln (w_u(t))}$ and noting that   $c\ln\left(1+\frac{1}{c\ln (w_u(t))}\right) \leq \frac{1}{\ln (w_u(t))}$. By linearity of expectation, we have $\E[\second(t+1) | \second(t)]  \leq \second(t) -\frac{\Con}{2c}$, as claimed. 
\end{proof}

\begin{restatable}{observation}{silentexpectation}
\label{obs:silentB}
Assume no packet arrivals and no jamming. Then, \[\E[\second(t+1)~|~t \mbox{~is silent}, \second(t), \Con ] \leq \second(t) + \frac{2\Con}{c}.\]
\end{restatable}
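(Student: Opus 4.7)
The plan is to expand $\second(t+1)-\second(t) = \sum_u \Delta_u$ with $\Delta_u = 1/\ln w_u(t+1) - 1/\ln w_u(t)$, and to bound $\E[\Delta_u \mid t \text{ silent}]$ for each packet individually by linearity of expectation. The key structural fact driving the argument is that $w_u$ only changes in slots where $u$ itself listens, and on a silent slot any listener necessarily hears silence and shrinks its window by the factor $1+1/(c\ln w_u(t))$ (or stays pinned at $\wmin$, giving $\Delta_u = 0$). Applying \lemref{B_term_increase_inverse_log_cubed}, each listener then contributes $\Delta_u = \Theta\!\left(1/(c\ln^3 w_u)\right)$.

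Next I would compute the conditional listening probability $\Pr[u \text{ listens}\mid t \text{ silent}]$. Since each packet flips its coins independently of the others, the silence event factors across packets and Bayes' rule gives
\[ \Pr[u \text{ listens} \mid t \text{ silent}] \;=\; \frac{p_u(1-q_u)}{1 - p_u q_u}, \]
where $p_u = c\ln^3 w_u / w_u$ is the unconditional listening probability and $p_u q_u = 1/w_u$ is the unconditional sending probability. Because $w_u \geq \wmin \geq 2$, the fraction is at most $2p_u$. Combining with the per-listen bound gives $\E[\Delta_u \mid t \text{ silent}] \leq 2p_u \cdot \Theta(1/(c\ln^3 w_u)) = \Theta(1/w_u)$, and summing over packets yields
\[ \E[\second(t+1)-\second(t)\mid t \text{ silent}] \;=\; O\!\left(\sum_u 1/w_u\right) \;=\; O(\Con), \]
which matches the claimed bound up to the constant factor.

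The step most likely to demand care is justifying the independence used in the Bayes computation: one must check that conditioning on silence does not introduce correlations that inflate the per-packet estimate. This goes through cleanly here because the listen/send coins of distinct packets are mutually independent, so $\Pr[t \text{ silent}]$ factors as a product across packets and the factors for every $v \neq u$ cancel from numerator and denominator. Everything else reduces to a Taylor-expansion-style estimate already packaged in \lemref{B_term_increase_inverse_log_cubed}, so no genuine obstacle remains beyond tracking constants and handling the boundary case $w_u = \wmin$ separately (in which case the update rule caps the window and $\Delta_u = 0$, making all upper bounds above trivially valid).
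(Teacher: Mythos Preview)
Your approach is essentially the same as the paper's: expand $\second(t+1)-\second(t)$ as a sum over packets and bound each term by the listening probability times the Taylor-expansion estimate on $1/\ln w_u$. The paper simply uses the unconditional listening probability $p_u$ (valid since, as your own Bayes formula shows, conditioning on silence can only decrease it: $(1-q_u)/(1-p_uq_u)\le 1$ whenever $p_u\le 1$) and carries the arithmetic through explicitly to recover the stated constant $2\Con/c$, whereas you invoke \lemref{B_term_increase_inverse_log_cubed} and the loose $2p_u$ bound to land at $O(\Con)$.
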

\begin{proof}
$\second(t)$ is fixed and each packet $u$ listens with probability $c\ln^3 (w_u(t))/w_u(t)$. If $u$ hears slot $t$, then $w_u(t+1) \leftarrow 
w_u(t) / \big(1+ \frac{1}{c\ln (w_u(t))}\big)$. 
Thus, we have:
\begin{align*}
\E[\second(t+1)-\second(t)] \hspace{-2pt} &=  \sum \frac{c\ln^3 (w_u(t))}{w_u(t)}  \left( \frac{1}{c\ln (w_u(t+1))} - \frac{1}{c\ln (w_u(t))}\right)\\
&= \sum \frac{c\ln^3 (w_u(t))}{w_u(t)}  \left( \frac{1}{c\ln\left(w_u(t)/\left(1+\frac{1}{c\ln w_u(t)}\right)\right)} - \frac{1}{c\ln w_u(t)}\right)\\
&=\hspace{-1pt} \sum \hspace{-2pt} \frac{c\ln^3 (w_u(t))}{w_u(t)} \hspace{-3pt}  \left(\hspace{-2pt} \frac{1}{c\ln (w_u(t)) \hspace{-2pt} - \hspace{-2pt} c\ln\left(1+\frac{1}{c\ln w_u(t)}\right)} - \frac{1}{c\ln (w_u(t))}\hspace{-2pt}\right)\\ 
&\leq \sum \frac{c\ln^3 (w_u(t))}{w_u(t)} \left(\frac{1}{c\ln (w_u(t))-\frac{1}{\ln (w_u(t))}} - \frac{1}{c\ln (w_u(t))} \right)\\
&=\sum \frac{c\ln^3 (w_u(t))}{w_u(t)} \left( \frac{1}{c^2\ln^3 (w_u(t)) - c\ln (w_u(t))}\right)\\
& \leq \sum_u \frac{c\ln^3 (w_u(t))}{w_u(t)}  \left(\frac{2}{c^2\ln^3 (w_u(t))}\right)\\
& = \frac{2\Con}{c}.
\end{align*}
\noindent The fourth line follows from  Fact~\ref{fact:taylor}\ref{fact-b} using $x=\frac{1}{c\ln (w_u(t))}$. By linearity of expectation, we have $\E[\second(t+1)]  \leq \second(t) + \frac{2\Con}{c}$, as claimed.
\end{proof}

\paragraph{How one packet listening in  slot $t$ changes $\second(t)$.}
Lemma~\ref{lem:B_term_increase_inverse_log_cubed} illustrates how $\second(t)$ increases or decreases due to a specific packet listening to an empty or noisy slot.  Going forward, we simplify our notation by omitting $t$ with respect to $w_u$, where $t$ is always made clear from the context.

\Bterminc

\begin{proof}
Packet $u$'s contribution to $\second(t)$ is $1 / \ln( w_u)$. After listening to a silent slot, $u$'s window size becomes $w_u / (1+ \frac{1}{c\ln w_u})$. Thus the change to its term in $\second(t)$ is
\begin{align*}
\frac{1}{\ln w_u - \ln(1 + \frac{1}{c \ln w_u})} - \frac{1}{\ln w_u} &= 
{\textstyle
\Theta\big(\ln \big(1 + \frac{1}{c \ln w_u} \big) / \ln^2 w_u \, \big)
}\\ 
&= \Theta(1 / (c\ln w_u) / \ln^2 w_u) = \Theta(1 / (c\ln^3w_u))\,.
\end{align*}
Similarly, if the slot is noisy, $u$'s window size becomes $w_u \cdot (1+1/c\ln w_u)$. Thus, the change to $u$'s contribution to $\second(t)$ is
\begin{align*}
    \frac{1}{\ln w_u + \ln (1+1/(c\ln w_u))} - \frac{1}{\ln w_u} &= -\Theta(\ln(1+1/(c\ln w_u) / \ln^2w_u))\\ &= -\Theta(1/(c \ln^3 w_u))\,.
\end{align*}

\end{proof}

\paragraph{Net behavior of $\first(t)+ \second(t)$ during low and good  contention slots.} 
We now bound the behavior of $\first(t)$ and $\second(t)$ during non high contention slots. Specifically, we show that these terms do not increase too much in low-contention slots and that they decrease in good-contention slots, with high probability.

In our arguments, we will speak of \defn{sending attempts} and \defn{listening attempts}, which refer to the probabilistic action of a packet trying to send or listen to the channel at a time slot. If the packet succeeds, then the sending attempt is successful. If the packet actually listens, then we say that the listening attempt is successful.

\ABlowmid
 
\begin{proof}
Since we are only considering slots in the low contention regime and good contention regime, $H(t)$ will generally not be decreasing significantly. Instead, our line of argument will be to show that $\first(t)$ will decrease over slots with good contention, overwhelming any increase from $\second(t)$. And in slots with low contention, $\first(t)$ will decrease sufficiently to counterbalance any increase from $\second(t)$ (up to an additive $O(\ln^2 \tau)$ term).

In this lemma, we are going to analyze the sending attempts and the listening attempts separately.  When a sending attempts are successful, they reduce $\first(t)$; when listen attempts are successful, they (may) increase $\second(t)$ (if the slot is empty).  We will show that in each case, the change in $\first(t)$ or $\second(t)$ is proportional to the sum of contention in the slots being considered, with high probability. By choosing $\alpha_1$ sufficiently large with respect to $\alpha_2$, we can ensure that reduction of $\first(t)$ due to successes dominates the increase of $\second(t)$ due to listening.

The adversary, being adaptive, has some control over which slots packets are listening/sending in, and which slots have non-high contention.  Throughout this proof, we think of the process as a game where in each slot the adversary adaptively dictates: (i)~what the contention is, and (ii)~what the window sizes are that lead to this contention. Thus, we allow the adversary some additional power.\footnote{The adversary can deterministically increase the contention in a particular slot by jamming or injecting new packets. But the adversary may also be able to encourage a decrease in subsequent slots by jamming: if the slot is jammed, then any listening packets will increase their window size and hence decrease contention.  Note that in allowing the adversary to exactly dictate the contention, we are giving it more power than it actually has.\label{fn:contention}}

\paragraph{Upper bound for listening: bounding the increase to $\second$.} First, we focus on listening attempts. In this proof, we are going to assume (the worst case) that every time a packet listens, it hears silence and thus decreases its window size, increasing the $\second(t)$ term. As we have seen in \lemref{B_term_increase_inverse_log_cubed}, if a packet with window size $w$ listens to a silent slot, then its  contribution to $\second(t)$ increases by $\Theta(1 / \ln^3 w)$ additively.

Let $X_1, X_2, \ldots $ be random variables associated with listening attempts in $\mathcal{I}$ during slots in either $\mathcal{L}$ or $\mathcal{G}$.  (The same analysis holds for both of these subsets of slots.) Each $X_j$ is associated with a particular packet at a particular time slot. For random variable $X_j$, if the associated listener has window size $w$ and the listen succeeds (i.e., the packet does actually listen), we define  $X_j = 1 / \ln^3 w$; otherwise, $X_j = 0$.  Notice that each $X_j$ is in the range $[0,1]$, and $X = \sum X_i$ upper bounds the total change, within constant factors, of $\second(t)$ during the slots being analyzed by \lemref{B_term_increase_inverse_log_cubed}.

Let $\mu_x = \expect{X}$.  While the adversary can adaptively determine the contention in each slot, we can still express $\mu_x$ as a function of the selected contention.  Let $X^t$ be the sum of the $X_j$ associated with time slot $t$.  Since $\expect{X_j} = \Theta(1/w)$, if the corresponding packet has window size $w$, we know that $\expect{X^t} = \Theta(\Con)$.  Thus, if we are analyzing the slots in $\mathcal{L}$: $\mu_x = \Theta(\sum_{t \in \mathcal{L}} \Con)$ (i.e., the sum of the contention of low contention slots).
If we are analyzing the slots in $\mathcal{G}$: $\mu_x = \Theta(\sum_{t \in \mathcal{G}} \Con)$ (i.e., the sum of the contention of good contention slots).

We can now apply \thmref{adversarial_azumas_upper} to show that $X$ is close to $\mu_x$. Notice that the precise distribution for each $X_i$ is determined adaptively by the adversary, e.g., by affecting which slots are high or low contention. Thus, the distribution $D_i$ of $X_i$ is determined by the adversary and depends on the outcomes of $X_1, \ldots X_{i-1}$, and we have $\sum_{i} \E[X_i \sim D_i] = \mu_x$. Therefore, we can apply \thmref{adversarial_azumas_upper} to get:
\[\Pr[X \geq 2(1 + \delta) \mu_x] \leq O\left(\exp{\left(- \frac{\delta^2 \mu_x }{(2 + \delta)}\right)}\right).\]
If $\mu_x$ is $O(\ln{\tau})$, we set $\delta = \Theta(\ln{\tau})$, and conclude that with high probability in $\tau$, $X \leq O(\ln^2(\tau))$.  Otherwise, if $\mu_x$ is $\Omega(\ln(\tau))$, we can choose $\delta = O(1)$, and observe that with high probability in $\tau$, $X \leq O(\mu_x)$.  

We conclude, then, that the following two claims hold:
\begin{itemize}
\item The change in $\second(t)$ during $\mathcal{L}$ is at most $O(\sum_{t \in \mathcal{L}} \Con + \ln^2(\tau))$, with high probability in $\tau$.
\item The change in $\second(t)$ during $\mathcal{G}$ is at most $O(\sum_{t \in \mathcal{G}} \Con + \ln^2(\tau))$, with high probability in $\tau$.
\end{itemize}

\paragraph{Lower bound for sending: lower bounding the decrease to $\first$.} Next, we similarly consider sending attempts.  We will temporarily ignore jamming, and determine the number of successful sending attempts if there was no jamming.  Let $Y_1, \ldots Y_n$ be random variables associated with time slots in $\mathcal{L}$ or in $\mathcal{G}$.  (As before, we analyze these two regimes simultaneously.)  Define each $Y_j = 1$ if there is a success in the slot associated with $Y_j$, and $Y_j = 0$ otherwise.  Notice that each $Y_j$ is in the range $[0,1]$, and $Y = \sum Y_i$. Let $\mu_y = \sum_{i} \E[Y_i \sim D_i]$, where $D_i$ denotes the probability distribution (partially determined by the adversary) of a success for $Y_i$ given the outcomes of $Y_1, \ldots, Y_{i-1}$.

Again, we relate $\mu_y$ to the contention values in the relevant slots.  For a given slot $t$, the probability $\Pr(Y_j = 1) = \Theta(\Con / e^{2\Con}) = \Theta(\Con)$---this follows from \lemref{p-suc}, because we are considering only slots where $\Con \leq \chigh = O(1)$.  Thus, when analyzing $\mathcal{L}$: $\mu_y = \Theta(\sum_{t \in \mathcal{L}} \Con)$ (i.e., the sum of the contention of the low contention slots).  When analyzing $\mathcal{G}$: $\mu_y = \Theta(\sum_{t \in \mathcal{G}} \Con)$ (i.e., the sum of the contention of the good contention slots).

We can now apply \thmref{adversarial_azumas_lower} to show that $Y$ is close to its expectation $\mu_y$. As with $X_i$ above, the precise distribution for each $Y_i$ is determined adaptively by the adversary, e.g., by affecting which slots are high or low contention. Again, the distribution $D_i$ of $Y_i$ is determined by the adversary and depends on the outcomes of $Y_1, \ldots Y_{i-1}$, with $\sum_i \E[Y_i \sim D_i] = \mu_y$. Therefore, we can apply \thmref{adversarial_azumas_lower} to get: 
\[\Pr[Y \leq (1 - \delta) \mu_y/2] \leq O\left(\exp{\left(- \frac{\delta^2 \mu_y }{2}\right)}\right).\]
If $\mu_y$ is $O(\ln{\tau})$, then we trivially conclude that $Y \geq \mu_y - O(\ln(\tau))$.  Otherwise, if $\mu_y$ is $\Omega(\ln(\tau))$, we can choose $\delta = \Theta(1)$, and observe that with high probability in $\tau$, $Y \geq \Theta(\mu_y)$.  

So far, we have analyzed the number of successes, ignoring jamming.  In fact, up to $\mathcal{J}_\mathcal{L}$ or $\mathcal{J}_\mathcal{G}$ slots may be jammed, and each jammed message reduces the number of successes by at most one.  We conclude, then, that the following two claims hold:
\begin{itemize}[noitemsep]
\item The change in $\first(t)$ during $\mathcal{L}$ is at most $O(\ln(\tau)) - \Omega(\sum_{t \in \mathcal{L}} \Con) + \mathcal{A}_\mathcal{L} + \mathcal{J}_\mathcal{L}$, with high probability in $\tau$.
\item The change in $\first(t)$ during $\mathcal{G}$ is at most $O(\ln(\tau)) - \Omega(\sum_{t \in \mathcal{G}} \Con) + \mathcal{A}_\mathcal{G} + \mathcal{J}_\mathcal{G}$, with high probability in $\tau$.
\end{itemize}
Taking a union bound over the above four bulleted claims (for each of the two terms, there is a claim for each of the two contention regimes), all of them hold with high probability in $\tau$.

\paragraph{Combining sending and listening.} Combining our previous claims for sending and listen, the net delta over $\mathcal{L}$ is at most: 
$$\alpha_2 O\big(\sum_{t \in \mathcal{L}} \Con\big) - \alpha_1 \Omega\big(\sum_{t \in \mathcal{L}} \Con\big) + (\alpha_1 + \alpha_2) O(\ln^2(\tau)) + \alpha_1( \mathcal{A}_{\mathcal{L}} + \mathcal{J}_{\mathcal{L}}), $$
with high probability in $\tau$.  Analogously, for $\mathcal{G}$, we have shown that the net delta over $\mathcal{G}$ is: 
$$\alpha_2  O\big(\sum_{t \in \mathcal{G}} \Con\big) - \alpha_1 \Omega \big(\sum_{t \in \mathcal{G}} \Con\big) + (\alpha_1 + \alpha_2) O(\ln^2(\tau)) + \alpha_1( \mathcal{A}_{\mathcal{G}} + \mathcal{J}_{\mathcal{G}}), $$
with high probability in $\tau$.

Hence, by choosing $\alpha_1$ to be a constant sufficiently larger multiplicatively than $\alpha_2$, the decrease in $\first$ due to packet successes overwhelms the (possible) increase in $\second$ due to listening.
We conclude that (with high probability in $\tau$), the net delta over $\mathcal{L}$ is at most  $$(\alpha_1 + \alpha_2) O(\ln^2(\tau)) - O\big(\alpha_1 \sum_{t \in \mathcal{L}} \Con)\big) + \alpha_1 ( \mathcal{A}_{\mathcal{L}} + \mathcal{J}_{\mathcal{L}}), $$
implying that the net delta over $\mathcal{L}$ is at most $O(\ln^2 \tau) + \alpha_1( \mathcal{A}_\mathcal{L} + \mathcal{J}_{\mathcal{L}})$
w.h.p.\ in~$\tau$.

For $\mathcal{G}$, we observe something stronger: since $\Con \geq \clow = \Omega(1)$ in each slot, we know that $\sum_{t \in \mathcal{G}} \Con = \Omega(|\mathcal{G}|)$.  Thus, for $\mathcal{G}$, we conclude that (with high probability in $\tau$), the net delta over $\mathcal{G}$ is at most  $$(\alpha_1 + \alpha_2) O(\ln^2(\tau)) - \alpha_1 \Omega(|\mathcal{G}|) + \alpha_1 ( \mathcal{A}_{\mathcal{L}} + \mathcal{J}_{\mathcal{L}}), $$
implying that the net delta over $\mathcal{G}$ is at most $O(\ln^2 \tau) + \alpha_1 (\mathcal{A}_\mathcal{G} + \mathcal{J}_{\mathcal{G}}) - \Omega(|\mathcal{G}|)$
w.h.p.\ in~$\tau$.
\end{proof}


\paragraph{Cumulative behavior of $\first(t)+ \second(t)$ during high-contention slots.} 
The previous lemma considered the slots of the interval where the contention was not high; the next lemma considers the slots of the interval where the contention is high.  In this case, the net delta decreases due to decreases in $\second(t)$.



\ABhigh
\begin{proof}

For our line of argument, we do not require $\first(t)$ to decrease (and indeed packets do not often succeed when contention is high), and so simply bound the increase to $\first(t)$ by $\mathcal{A}_\mathcal{H}$ due to packet arrivals. We focus for the rest of the proof on the change in $\second(t)$ over $\mathcal{H}$.

When the contention is high, a large constant fraction of slots (in expectation) in $\mathcal{H}$ are noisy, and in each noisy slot, $\second(t)$ decreases. Some of the slots (up to a constant fraction in expectation) will be empty, however, and any packets that listen in such slots will decrease their window size, \defn{increasing} $\second(t)$ (rather than decreasing it). We will show that the decrease is sufficiently more than the increase to achieve the desired result.

Since the adversary is adaptive, it has some control over the contention in each slot. 
Throughout this proof, 
we allow the adversary some additional power, namely the ability to adaptively dictate (i)~what the contention is, and (ii)~what the window sizes are that lead to this contention. (See footnote\footref{fn:contention} for a discussion of the adversary's influence over contention.) 
We analyze the change in $\second(t)$ over $O(\ln(\tau))$ contention classes: $\chigh, {\chigh+1}, \ldots, \Theta(\ln(\tau))$.  Fix $C$ to be one of those contention classes, and define $\mathcal{H'}$ to be the slots in $\mathcal{H}$ with contention in the range $(C, C+1)$. (The last class includes all slots with contention $\Theta(\ln(\tau))$ or larger.)  

We separately bound the number of listens in noisy slots (which decrease $\second(t)$) and in empty slots (which increase $\second(t))$.  Let $\mathcal{E}$ be the subset of slots in $\mathcal{H'}$ that are empty; let $\mathcal{F}$ be the subset of slots in $\mathcal{H'}$ that are noisy.


By \lemref{B_term_increase_inverse_log_cubed}, if a packet with window size $w$ listens to a silent slot, then its  contribution to $\second(t)$ increases by $\Theta(1 / \ln^3 w)$ additively; if it listens to a noisy slot, then its contribution to $\second(t)$ decreases by $\Theta(1 / \ln^3 w)$.  Notice that jamming actually increases the likelihood that the slot is noisy (to a certainty), and so only helps to decrease $\second(t)$.

\paragraph{Analyzing noisy slots.} Let $X_1, X_2, \ldots $ be random variables associated with potential listens in $\mathcal{H}$ during slots in $\mathcal{F}$.  Each $X_j$ corresponds to a potential listen, i.e., a particular active packet (which listens probabilistically) at a particular slot. If that packet has window $w$ and decides to listen, then we define $X_j = 1/\ln^3(w)$; otherwise, $X_j = 0$.
Since the slots are (by definition) noisy, the contribution of that particular packet at that time slot to $\second(t)$ will decrease by $\Omega(X_j)$ by \lemref{B_term_increase_inverse_log_cubed}.

There remains one challenge with analyzing the $X_j$s: we have restricted our attention to noisy slots, and in noisy slots it is more likely that a given packet has listened (and then chosen to send).  Luckily, knowing that a slot is noisy only increases the likelihood of listening, so we can conclude that $\Pr(X_j \neq 0) \geq \Theta(\ln^3(w_u)/w_u)$.\footnote{In more detail, we can apply Bayes' Theorem, and we have $\Pr(\text{listen} | \text{noisy}) = \frac{\Pr(\text{noisy} | \text{listen}) \cdot \Pr(\text{listen})}{\Pr(\text{noisy})} \geq \Pr(\text{listen})$ as long as $\Pr(\text{noisy}|\text{listen}) \geq \Pr(\text{noisy})$. And given that packets only choose to send when it also listens, knowing that a packet listens increases the likelihood of noise.} Thus we define a new set of independent $\hat{X_j}$ random variables where  $\hat{X_j} = 1 / \ln^3(w)$ with probability $\Theta(\ln^3(w_u)/w_u)$ which are stochastically dominated by 
the original $X_j$ random variables.


Notice that $\hat{X} = \sum \hat{X_j}$ determines the total change, within constant factors, of $\second(t)$ during the slots being analyzed.  Let $\mu_x = \expect{\hat{X}}$. Since $\expect{\hat{X_j}} = \Theta(1/w)$, if the packet has window size $w$, we know that $\mu_x = \Theta(\sum_{t \in \mathcal{F}} \Con)$ (i.e., the sum of the contention of slots in $\mathcal{F}$).

Since each $\hat{X_j}$ is in the range $[0,1]$ and they are independent, we can now apply \thmref{adversarial_azumas_lower} to show that $\hat{X}$ is close to $\mu_x$. Notice that the precise distribution for each $\hat{X_j}$ is determined adaptively by the adversary, e.g., by affecting which slots are high or low contention. However, the specific coin flips for each listening attempt are independent.

\thmref{adversarial_azumas_lower} shows that: \[\Pr[\hat{X} \leq (1 - \delta) \mu_x/2] \leq O\left(\exp{\left(- \frac{\delta^2 \mu_x }{2}\right)}\right).\]
If $\mu_x$ is  $O(\ln{\tau})$, then we trivially conclude that $\hat{X} \geq \mu_x - O(\ln(\tau))$.  Otherwise, if $\mu_x \geq \Omega(\ln(\tau))$, we can choose $\delta = \Theta(1)$, and observe that with high probability in $\tau$, $\hat{X}$ is $ \Theta(\mu_x)$. 

Since the $\hat{X_j}$ stochastically dominate the $X_j$, we conclude, then, that $\second(t)$ during $\mathcal{F}$ decreases by at least $\Omega(\sum_{t \in \mathcal{F}} C - \ln^2(\tau))$, with high probability in $\tau$.

\paragraph{Analyzing empty slots.} We can now apply the same analysis to the empty slots.  Let $X_1, X_2 \ldots$ be random variables associated with  potential listen, i.e., a particular active  packet at a particular time slot. If that packet has window $w$ and decides to listen, then we define $X_j = 1/\ln^3(w)$; otherwise, $X_j = 0$.  Since the slots are (by definition) empty, the contribution of that particular packet at that time slot to $\second(t)$  will increase by $\Omega(X_j)$ by \lemref{B_term_increase_inverse_log_cubed}.

Knowing that a slot is empty only decreases the likelihood of listening, so we can conclude that $\Pr(X_j \neq 0) \leq \Theta(\ln^3(w_u)/w_u)$.\footnote{The probability of an empty slot, given that a packet is listening, is less than the probability of an (unconditional) empty slot, and so Bayes' law says (via the same argument as in the prior footnote) that the probability of listening in an empty slot is less than the probability of (unconditional) listening.}  Thus we define a new set of independent $\hat{X_j}$ random variables where  $X_j = 1 / \ln^3 w$ with probability $\Theta(\ln^3(w_u)/w_u)$ which stochastically dominate (from above) the original $X_j$ random variables.

Notice that each $\hat{X_j}$ is in the range $[0,1]$, and $\hat{X} = \sum \hat{X_i}$ bounds the total change, within constant factors, of $\second(t)$ during the slots being analyzed.  Let $\mu_x = \expect{\hat{X}}$. Since $\expect{\hat{X_j}} = \Theta(1/w)$, if the packet has window size $w$, we know that $\mu_x = \Theta(\sum_{t \in \mathcal{E}} \Con)$ (i.e., the sum of the contention of slots in $\mathcal{E}$).

We can now apply \thmref{adversarial_azumas_upper} to show that $\hat{X}$ is close to $\mu_x$.  \thmref{adversarial_azumas_upper} shows that: \[\Pr[\hat{X} \geq 2(1 + \delta) \mu_x] \leq O\left(\exp{\left(- \frac{\delta^2 \mu_x }{(2 + \delta)}\right)}\right).\]
If $\mu_x < O(\ln{\tau})$, we set $\delta = \Theta(\ln{\tau})$, and conclude that with high probability in $\tau$, $\hat{X} \leq O(\ln^2(\tau))$.  Otherwise, if $\mu_x \geq \Omega(\ln(\tau))$, we can choose $\delta = O(1)$, and observe that with high probability in $\tau$, $\hat{X} \leq O(\mu_x)$. 

Since the $\hat{X_j}$ stochastically dominate the $X_j$, we conclude, then, that $\second(t)$ during $\mathcal{E}$ increases by at most $O(\sum_{t \in \mathcal{E}} (C+1) + \ln^2(\tau))$, with high probability in $\tau$, assuming that we are not considering the ``last'' contention class containing all the slots with contention $\Theta(\ln \tau)$ or higher (which we will handle separately).

\paragraph{Relating the noisy and empty slots.} We now need to combine our analysis of noisy and empty slots, and then sum across all the different contention classes.  The remaining observation is that a large constant fraction of the slots are noisy.  The probability that a slot $t$ is noisy is at least $(1 - e^{-\Theta(\Con}) \geq 3/4$ (for appropriate choice of $\chigh$), whereas the probability that a lost $t$ is empty is at most $1/4$. So most of the slots in each class will be noisy.  

There are now three cases to deal with: (i) $C < O(\ln(\tau))$ and the number of slots is small, i.e., $\mathcal{H'} < O(\ln(\tau))$; (ii)  $C < O(\ln(\tau))$ and the number of slots is not small, i.e., $\mathcal{H'} > \Omega(\ln(\tau))$; (iii) $C > \Omega(\ln(\tau))$.

In the first case, if $\mathcal{H'} < O(\ln(\tau))$, we will simply assume the worst case, i.e., all the slots are empty.  As we have shown above, this implies that with high probability in $\tau$, the term $\second(t)$ increases by at most $O(C \ln(\tau) + \ln^2(\tau)) = O(\ln^2(\tau))$ during these slots.  Since there are only $O(\ln(\tau))$ different contention classes, this will increase $\second(t)$ by at most $O(\ln^3(\tau))$ throughout the entirety of $\mathcal{H}$.

In the second case, if $\mathcal{H'} > \Omega(\ln(\tau))$: if there are $k$ slots in $\mathcal{H'}$, then we expect $(3/4)k$ of them to be noisy; we can again use \thmref{adversarial_azumas_lower} to show that at least $(2/3)k$ of them are noisy, and at most $(1/3)k$ of them are empty (e.g., choosing $\delta = 1/9$), with high probability in $\tau$.  Thus, with high probability in $\tau$, $\second(t)$ will decrease by at least $\Omega(k((2/3)C - (1/3)(C+1)) - \ln^2(\tau)) = \Omega(|\mathcal{H'}| - \ln^2(\tau))$.  (Notice that jammed slots only increase the number of slots that are noisy in the above analysis.)

In the third case, for the ``last'' contention class where $C > \Omega(\ln \tau)$, we observe that a slot is empty with probability at most $e^{-\Theta(\Con} \leq 1/\tau^{\Theta(1)}$.  Thus, with high probability (by a union bound) every slot in $\mathcal{H'}$ is noisy. Thus, by the analysis above, we know that with high probability in $\tau$, $\second(t)$ decreases by at least $\Omega(|\mathcal{H'} \ln(\tau)| - \ln^2(\tau))$.

Overall, there are at least $|\mathcal{H}| - O(\ln^2(\tau))$ slots in the second and third cases. So summing up over all the cases (and taking a union bound over everything), we conclude that with high probability in $\tau$, $\second(t)$ decreases by at least $\Omega(|\mathcal{H}| - \ln^3(\tau))$.
\end{proof}

\paragraph{Combined analysis of $\first(t)+\second(t)$ over an interval for all slots.}
The next lemma aggregates Lemmas~\ref{lem:AB_low_mid} and~\ref{lem:AB_high} to characterize the behavior of $\first(t) + \second(t)$ over all slots in an arbitrary interval. Specifically, \lemref{AB_not_high_and_decrease} shows that, for an arbitrary interval of length $\tau$, it is either the case that almost all of the slots are low contention slots, or $\first(t) + \second(t)$ decreases by $\Omega(\tau)$ with high probability, blunted by the number of packet injections and jammed slots. This lemma addresses {\it all} slots, rather than being limited to those of a particular contention regime. 





\ABnothigh
\begin{proof}
  Let $\mathcal{L}$ be the slots in $\mathcal{I}$ with contention $< \clow$; let $\mathcal{G}$ be the slots in $\mathcal{I}$ with contention $\geq \clow$ and $\leq \chigh$; let $\mathcal{H}$ be the slots in $\mathcal{I}$ with contention $> \chigh$.  
Combining the results from \lemreftwo{AB_low_mid}{AB_high}, we conclude that, with high probability in $\tau$, the net delta in $\alpha_1\first(t) + \alpha_2\second(t)$ is at most:
$$
\alpha_1( \mathcal{A} + \mathcal{J}) + O(\ln^2(\tau)) + O(\ln^3(\tau)) - \Omega(|\mathcal{G}|) - \Omega(|\mathcal{H}|)
$$
If $|\mathcal{G}| + |\mathcal{H}| > (1/10)\tau$ (and noting that $\tau > \ln^3(\tau)$), then the net delta in $\alpha_1\first(t) + \alpha_2\second(t)$ is at most $O(\mathcal{A} + \mathcal{J}) - \Omega(\tau)$.

For the case that $|\mathcal{G}| + |\mathcal{H}| \leq (1/10)\tau$ (and in fact in both cases), we can omit the subtracted terms, to give an upper bound of $O(\ln^3 \tau + \mathcal{A} + \mathcal{J})$ w.h.p.\ in $\tau$.
\end{proof}


\subsection{Amortized behavior of \texorpdfstring{\boldmath{$\third(t)$}}{}}\label{sec:third-analysis}



Consider an interval $\mathcal{I}$.  Because $|\mathcal{I}| \geq (1/\cint) \wmax / \ln^2 \wmax$, note that $|\mathcal{I}| \geq \wmin/\ln^2\wmin$, where $\wmin$ is a constant of our choice that specifies the smallest window size.

We next provide bounds on how the window size of each packet changes during an interval. Proving these bounds amounts to bounding the number of times a packet listens; the actual traffic on the channel is irrelevant as the proofs are pessimistic, and hence the number of jammed slots does not appear in any of the lemma statements. Naturally, the size $\tau$ of the interval affects the number of times that a packet listens during the interval. We thus reason about a packet's window size relative to a baseline $Z$ that satisfies $Z/\ln^2 Z = \tau$.  In some sense, $Z$ is the window size that ``matches'' the interval size because a packet with window $Z$ should listen $\Theta(c\ln(Z))$ times during the interval, which is the necessary number of times to change the window size by a constant factor. 

The parameter $\cint$ will allow us to later tune how $Z$ relates to $\wmax$---in the case that $\wmax \geq \first(t)^{1/2}$, $Z$ is roughly $\cint$ times smaller than $\wmax$. Packets with bigger windows are less likely to listen and hence less likely to change window size. 

The following two lemmas capture this intuition in two ways. First, packets that start with windows smaller than $Z$ are unlikely to grow to windows that are much larger than $Z$.  Second, packets with large windows (relative to $Z$) are unlikely to have their window sizes change by more than a small constant factor. 

%
\begin{lemma}[\textbf{Tail bound on growth of window size}] Consider any packet during an interval $\mathcal{I}$ with $\tau = |\mathcal{I}|$.  
    Let $Z$ satisfy $Z / \ln^2(Z) = \tau$.  And let $W\geq \wmin$ be the initial size of the packet's window. Let $W^+$ be the biggest window size the packet achieves during the interval. Then for large enough choice of constants $\wmin$ and $c$ and any $k\geq 2$:  if $W \leq kZ$, then 
         $$\Pr\left[W^+ > kZ\right] \leq 2^{-\Theta(c(\ln Z \cdot \ln k + \ln^2 k))}$$
or stated differently
     $$\Pr\left[\frac{W^+}{\ln^2(W^+)} > k\tau\right] \leq 2^{-\Theta(c(\ln \tau \cdot \ln k + \ln^2 k))} \,. $$
\label{lem:tailbound-maxwindow}\end{lemma}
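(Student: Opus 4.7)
The plan is to show that growing the window from $W \leq Z$ up to $\Theta(kZ)$ forces the packet to listen many times while its window is already at least $Z$, and that the per-slot listen probability in that regime is low enough that a Chernoff-style bound rules this out.

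First I would lower-bound the number of listens needed. Fix a large absolute constant $K$ to be hidden in the $\Theta(kZ)$. For the window ever to exceed $KkZ$, let $t^*$ be the first such time and let $t_0$ be the last time before $t^*$ at which the window is below $Z$; on $(t_0,t^*]$ the window lies in $[Z,KkZ]$ and must accumulate log-growth of at least $\ln(Kk)$, and this growth can come only from noisy listens (silent listens only shrink $w$). A noisy listen at window $w$ increases $\ln w$ by at most $\ln(1+1/(c\ln w)) \leq 1/(c\ln w)$. Partition $[Z,KkZ]$ into dyadic sub-ranges $[2^jZ,2^{j+1}Z]$ for $j=0,\dots,\lceil\lg(Kk)\rceil-1$ (a single noisy listen multiplies $w$ by at most $1+1/c<2$, so sub-ranges cannot be skipped). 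Traversing the $j$-th sub-range requires $\Omega(c(\lg Z+j))$ noisy listens, and summing over $j$ gives the lower bound
\[
M \;:=\; \Omega\!\left(c\bigl(\lg Z\cdot\lg(Kk)+\lg^2(Kk)\bigr)\right)\;=\;\Omega\!\left(c(\lg Z\cdot\lg k+\lg^2 k)\right).
\]

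Second I would upper-bound the expected number of listens made while $w\geq Z$. Assume $\wmin$ is large enough that $Z\geq \wmin>e^3$; then $c\ln^3(w)/w$ is decreasing on $[e^3,\infty)$, so for any $w\geq Z$ the per-slot listen probability is at most $c\ln^3(Z)/Z$. Define $Y_t = \mathbbm{1}[\text{packet listens in slot }t\text{ and }w_t\geq Z]$ and $Y=\sum_{t\in\mathcal{I}}Y_t$. The event $W^+>KkZ$ implies $Y\geq M$ by the first step. Each $Y_t\in\{0,1\}$, and regardless of the adaptive adversary's choices of arrivals, jamming, and slot outcomes, $\mathbb{E}[Y_t\mid \text{history}_{<t}]\leq c\ln^3(Z)/Z$; hence $\sum_t \mathbb{E}[Y_t\mid \text{history}_{<t}]\leq \tau\cdot c\ln^3(Z)/Z = c\ln Z =:\mu$, using $\tau=Z/\ln^2 Z$.

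Third I would invoke the adaptive Chernoff bound \thmref{adversarial_azumas_upper} on $Y$ with mean bound $\mu=c\ln Z$ and threshold $M$, writing $M=(1+\delta)\mu$. The exponent $\delta^2\mu/(2+\delta)$ is handled by case analysis: when $\lg k\leq \lg Z$, $M=\Theta(c\lg Z\lg k)$ so $\delta=\Theta(\lg k)\geq 1$ and the exponent is $\Theta(\delta\mu)=\Theta(c\lg Z\lg k)$, which dominates $c\lg^2 k$; when $\lg k>\lg Z$, $M=\Theta(c\lg^2 k)$ so $\delta=\Theta(\lg^2 k/\lg Z)$ and the exponent is $\Theta(\delta\mu)=\Theta(c\lg^2 k)$, which dominates $c\lg Z\lg k$. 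Choosing $K$ a large enough constant (and $c$ large enough) ensures $\delta\geq 1$ in the first case, so in both cases the tail is $2^{-\Theta(c(\lg Z\lg k+\lg^2 k))}$. The alternative formulation follows from $Z=\Theta(\tau\lg^2\tau)$ and hence $\lg Z=\Theta(\lg \tau)$.

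The main obstacle is handling the adaptive adversary cleanly: the adversary controls which slots are noisy, silent, or jammed, and could try to steer the window into the regime where listening is most probable. The argument side-steps this by bounding the sum $Y$, whose per-slot conditional expectation depends only on $w_t$ (not on what happens in slot $t$), together with the monotonicity of $c\ln^3(w)/w$ on $[e^3,\infty)$ to get a uniform per-slot bound. A secondary technical point is verifying that the dyadic decomposition is valid (no sub-range is skipped in one update step), which holds because $1+1/(c\ln w)<2$ for $c\geq 1$.
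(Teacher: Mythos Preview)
Your argument is correct, but it takes a somewhat different route than the paper's. The paper focuses entirely on the \emph{last} doubling: to reach $\Theta(kZ)$ the packet must first reach $\Theta(kZ)/2$ and then listen $\Theta(c\ln(kZ))$ more times, whereas a packet with window at least $\Theta(kZ)/2$ has expected number of listens only $\mu=O(c\ln(kZ)/\sqrt{k})$ over the whole interval; the paper then applies the strong Chernoff form $\Pr[X\geq R]\leq (e\mu/R)^R$ with $R=\Theta(c\ln(kZ))$ and $R/\mu=\Omega(\sqrt{k})$ to obtain $(1/\sqrt{k})^{\Theta(c\ln(kZ))}=2^{-\Theta(c(\lg Z\lg k+\lg^2 k))}$ in one stroke. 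Your approach instead aggregates across \emph{all} dyadic scales from $Z$ to $kZ$, which inflates the required listen count to $M=\Theta(c(\lg Z\lg k+\lg^2 k))$ against the smaller mean $\mu=\Theta(c\lg Z)$; you then get away with the weaker exponent $\delta^2\mu/(2+\delta)\approx\delta\mu$ from the adaptive Azuma inequality because the larger $M$ compensates. The paper's argument is shorter and avoids the dyadic bookkeeping (and the need to verify disjointness of the traversal intervals and that no sub-range is skipped), while your argument makes the adaptive-adversary robustness more explicit by phrasing everything as a conditional-expectation bound on the indicator $Y_t$. Both land on the same tail bound.
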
 
\begin{proof}
    In order for the packet to reach window size $kZ$, it must first reach window size $kZ/2$ and then listen at least $\Theta(c \ln(kZ))$ times. So we next upper-bound the expected number of times a packet with  window size at least $kZ/2$ listens. 
The expected number of times such a packet listens during the interval is 
    \begin{align*} \mu = O\left(c\cdot \frac{\ln^3(kZ)}{kZ}\cdot \tau\right) &= O\left(c\cdot \frac{\ln^3(kZ)}{kZ} \cdot \frac{Z}{\ln^2(Z)}\right)\\
    &= O\left(\frac{\ln^2(k) + \ln^2(Z)}{k\ln^2(Z)} \cdot c\ln(kZ)\right)\\
    &= O\left(\frac{\ln^2(k)}{k} \cdot c \ln(kZ)\right)\\
    &= O\left(\frac{1}{\sqrt{k}} \cdot c\ln(kZ)\right)
    \end{align*} 
    We can now Chernoff bound the number of times that the packet listens, while having window size at least $kZ/2$. Specifically, let $L_i$ be an indicator variable for the packet's $i$-th listen attempt during the interval with window size at least $kZ/2$, and let $L = \sum_i L_i$ be the total number of such listens. Above, we have a bound on $\E[L]$. Note that we can apply a standard Chernoff bound in this instance---indeed, we are concerned only with bounding the number of listens. And the adversary cannot force\footnote{This is in contrast with the bounds used in the proof of \lemref{AB_low_mid}, where Theorems \ref{thm:adversarial_azumas_lower} and \ref{thm:adversarial_azumas_upper} are required in place of a standard Chernoff bound due to the adversary's influence.} the listening probability of this particular packet to be larger than $c\ln^3(kZ/2) / (kZ/2)$ without also making its window smaller, removing it from consideration). Therefore $L$ is stochastically dominated by the sum of indicator variables with probability at most $O(c \ln^3(kZ)/kZ)$. This is precisely the expected value of which we have bounded above. Thus, by a standard Chernoff bound on $L$, the probability that the packet listens $\Omega(c \ln(kZ)) = \Omega(\sqrt{k} \mu)$ times is upper bounded by $(1/\sqrt{k})^{\Theta(c\ln(kZ))} = (1/2)^{\Theta(c\ln(k)\cdot\ln(kZ))}$.  That is,
    \[
        \Pr[W^+ > \Theta(kZ)] \leq 2^{-\Theta(c(\ln Z \cdot \ln k + \ln^2 k)} \ .
    \]
    Observe that $\Theta(kZ) = \Theta(k\tau\ln^2(Z))$ by definition of $Z$.  Moreover, $\ln(Z) = \Theta(\ln \tau)$. Thus, the second statement of the bound is just multiplying both sides of the inequality $W^+ > \Theta(k\tau\ln^2(Z))$ by $1/\ln^2(W^+) = \Theta(1/\ln^2(kZ))$ when talking about reaching window size $W^+ = \Theta(kZ)$, which cancels the $\ln^2(Z)$ on the right-hand side. 
 \end{proof}

%
The following corollary is typically used in our proofs with $S$ being the set of all packets, but sometimes it will be applied on a proper subset.
\begin{corollary}[\textbf{Tail bound on growth of window sizes}]
\label{cor:window-growth-union-bound}
Consider an interval $\mathcal{I}$ with $\tau=|\mathcal{I}|$. Let $S$ be a subset of the packets. Let $n$ be the number of distinct packets that are active at any point of the interval, 
and let $W^+$ denote the maximum window size reached by any packet in $S$ during the interval (and hence also an upper bound on the maximum window size at the end of the interval). 
Then for any $k\geq 2$: 
 $$\Pr\left[\frac{W^+}{\ln^2(W^+)} > k\tau\right] \leq n\cdot 2^{-\Theta(c(\ln \tau \cdot \ln k + \ln^2 k))}$$
\end{corollary}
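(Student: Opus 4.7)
The plan is to obtain the corollary as a straightforward union bound over all $n$ active packets, using Lemma~\ref{lem:tailbound-maxwindow} applied individually to each packet. First I would observe that the event $W^+ > \Theta(k\tau\ln^2(W^+))$ across the interval is the disjunction over packets $p$ of the events $W_p^+ > \Theta(k\tau\ln^2(W_p^+))$, where $W_p^+$ denotes the maximum window size attained by packet $p$ during $\mathcal{I}$. So the corollary reduces to bounding each per-packet event and summing probabilities.

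For each packet $p$, I would verify that Lemma~\ref{lem:tailbound-maxwindow} applies, i.e., that $p$'s initial window $W_p$ (either at the start of the interval, or at its arrival time within the interval) satisfies $W_p \leq Z$. Packets that arrive during the interval start from window $\wmin$, which for the constants chosen certainly satisfies $\wmin \leq Z$ since $Z/\ln^2 Z = \tau \geq \wmin/\ln^2(\wmin)$. For packets already present at the start of the interval, the interval size was chosen as $\tau \geq (1/c_\tau)\,\wmax/\ln^2(\wmax)$, so the monotonicity of $x \mapsto x/\ln^2 x$ (for $x$ above a constant) together with the definition of $Z$ ensures $W_p \leq \wmax \leq Z$ for suitably chosen $c_\tau$ large enough; this is the role of the $c_\tau$ parameter in the interval construction.

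Once applicability is established, Lemma~\ref{lem:tailbound-maxwindow} gives, for each packet $p$,
\[
\Pr\!\left[\frac{W_p^+}{\ln^2(W_p^+)} > \Theta(k\tau)\right] \;\leq\; 2^{-\Theta(c(\lg \tau \cdot \lg k + \lg^2 k))}\,.
\]
A union bound over the at most $n$ distinct packets active at any moment in $\mathcal{I}$ then yields the stated bound. Since the event $W^+ > \Theta(k\tau\ln^2(W^+))$ implies that some single packet attains such a window, the union bound is tight enough.

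The only real subtlety, and thus the main potential obstacle, is the initial-window assumption: formally justifying that every active packet's window at the moment it first becomes relevant in $\mathcal{I}$ is at most $Z$. This is handled by the interval-length choice (through $c_\tau$) together with the fact that arrivals always start at $\wmin$; once that bookkeeping is done, the rest is pure union bound and the restatement of the tail inequality in terms of $\tau$ using $\lg Z = \Theta(\lg \tau)$.
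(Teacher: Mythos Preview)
Your core argument—apply Lemma~\ref{lem:tailbound-maxwindow} to each of the $n$ packets and take a union bound—matches the paper's proof, which is literally the single sentence ``This follows directly from Lemma~\ref{lem:tailbound-maxwindow} and a union bound.''

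However, your treatment of the hypothesis $W_p \leq Z$ is incorrect. You claim that $\tau \geq (1/c_\tau)\,\wmax/\ln^2(\wmax)$ together with $Z/\ln^2 Z = \tau$ gives $\wmax \leq Z$ for $c_\tau$ large enough. The direction is reversed: in the relevant case where $L(t)$ achieves the $\max$, one has $Z/\ln^2 Z = (1/c_\tau)\,\wmax/\ln^2(\wmax) < \wmax/\ln^2(\wmax)$, and monotonicity of $x\mapsto x/\ln^2 x$ then gives $Z < \wmax$; in fact $\wmax \approx c_\tau Z$, and increasing $c_\tau$ only widens this gap. The paper states this explicitly in a footnote to the proof of Lemma~\ref{lem:C-term}: ``$\wmax$ is roughly $c_\tau$ times \emph{larger} than $Z$.'' There is also a structural mismatch: the corollary is stated for an arbitrary interval $\mathcal{I}$, so importing the specific choice $\tau = (1/c_\tau)\max\{L(t),\sqrt{N(t)}\}$ into its proof is out of place; that bookkeeping belongs at the call sites. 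The paper's one-liner simply inherits the $W\leq Z$ hypothesis tacitly from the lemma, and when the corollary is applied in Lemmas~\ref{lem:tail-bound-third} and~\ref{lem:C-term}, the fact that initial windows may be as large as $\Theta(c_\tau Z)$ is absorbed into the $\Theta(k\tau)$ threshold (equivalently, such packets correspond to $k\approx c_\tau$ in the lemma's bound).
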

\begin{proof}
This follows directly from Lemma~\ref{lem:tailbound-maxwindow} and a union bound over the at most $n$ packets in $S$.    
\end{proof}

%
\Wchange
\begin{proof}
    Consider a packet with window size $W = \Theta(kZ)$.  In order for the packet to grow or shrink its window by a factor of $e^{\gamma}$, it must listen to the channel $\Theta(\gamma \cdot c\ln(kZ))$ times.  Following the same argument as \lemref{tailbound-maxwindow},     
    the expected number of times the packet listens during the interval is $\mu = O(c\ln(kZ)/\sqrt{k})$.  Thus, the necessary number of listens to vary the window size by a factor of  $e^\gamma$  is $\Omega(\sqrt{k} \gamma \mu)$, which is $\Omega(\gamma \sqrt{k})$ times the expectation. Thus, from a standard Chernoff bound (made explicit below), we get that the probability of this event occurring is at most $(1/(\gamma \sqrt{k}))^{\Theta(c\gamma\ln(kZ))} \leq  (1/(\gamma \sqrt{k}))^{\Theta(c\gamma\ln \tau)} \leq 
    (1/\tau)^{c\gamma \ln (\gamma k)}$, where the first step follows because $\ln \tau = \Theta(\ln Z)$. Finally, the second centered equation of the lemma statement follows because a packet with window size $O(kZ)$ must first achieve window size $\Theta(kZ)$, at which point we can apply the first bound for constant $k$.
    
    In more detail, to see why a standard Chernoff bound applies, we let $L_i$ be an indicator variable for the packet's $i$-th listen attempt during the interval with window size at least $kZ/2$, and let $L = \sum_i L_i$ be the total number of such listens. Above, we have a bound on $\E[L]$. Note that we can apply a standard Chernoff bound in this instance;  we are concerned only with bounding the number of listens. Again, this is in contrast with the bounds used in the proof of \lemref{AB_low_mid}, where Theorems \ref{thm:adversarial_azumas_lower} and \ref{thm:adversarial_azumas_upper} are required in place of a standard Chernoff bound due to the adversary's influence. 
   $L$ is stochastically dominated above and below by the sum of indicator variables with probability $\Theta(c \ln^3(kZ)/kZ)$. This is precisely the expected value of which we have bounded above.  
\end{proof}

%
\Ltail
\begin{proof}
    Let $n_0=\first(t)$ be the number of packets that are alive at the start of the interval.  There are two cases: 
    
    {\bf Case 1: {\boldmath{$\tau^2 \geq \mathcal{A}$}}.} By definition of the interval size, $\tau > (1/\cint) \sqrt{n_0}$, and the total number of distinct active packets is $n = n_0 +\mathcal{A} = O( \tau^2)$.  Thus, by \corref{window-growth-union-bound}, the probability that any packet's window grows to $k\tau$ is $O(\tau^2)\cdot 2^{-\Theta(c(\ln \tau\cdot \ln(k) + \ln^2 k))} = 2^{-\Theta(c(\ln\tau\cdot \ln k+\ln^2k))}$. 
    
    {\bf Case 2: {\boldmath{$\mathcal{A} > \tau^2$ and hence also $n = \mathcal{A} + n_0 = O(\mathcal{A})$}}.}  Consider $k' = \mathcal{A}/\tau+k$, and note that $\ln(k') = \Theta(\ln(\mathcal{A}+k))$.  Again, applying \corref{window-growth-union-bound}, the probability that any packet's window grows to $k'\tau = \Theta((\mathcal{A}/\tau +k)\cdot \tau) = \Theta(\mathcal{A}+k\tau)$ is at most $O(\mathcal{A})\cdot 2^{-\Theta(c(\ln(k')\ln\tau + \ln^2(k'))))} = 2^{\Theta(\ln \mathcal{A}) - \Theta(c\ln(\mathcal{A}+k)\ln \tau + \ln^2(\mathcal{A}+k))} \leq 2^{-\Theta(c\ln(k)\ln \tau + \ln^2(k))}$.
\end{proof}

%
\Cterm
\begin{proof}
We prove this using a case analysis.

{\bf Case 1: {\boldmath{$\Omega(\tau)$}} packet injections.} 
In this case, we argue that $\third(t_1) = O(\mathcal{A})$, with high probability in $\tau$, which satisfies the first bullet. In fact, by Lemma~\ref{lem:tail-bound-third}, the probability that $\third(t_1) = \Omega(\mathcal{A} + 2\tau) = \Omega(\mathcal{A})$ is no more than $2^{-\Theta(c\ln\tau+1)} = (1/(2\tau))^{\Theta(c)}$, where $c$ is a (large) constant specified in our algorithm.

{\bf Case 2: {\boldmath{$\sqrt{\first(t)} > L(t)$}}, i.e., \boldmath $\tau = (1/\cint) \sqrt{\first(t)})$. }
We will show that with high probability, contention is greater than $\clow$ in \emph{all} slots of the interval, and hence the second bullet is satisfied.  (In fact, contention is a fast-growing function of $\first(t)$, so it is also greater than $\chigh$ as long as $\first(t)$ is large enough, which can be controlled by tuning $\wmin$.) We start by noting that there are more than $\third(t)^2$ active packets (by case assumption that $\sqrt{\first(t)} > \third(t)$). Choose any $\third(t)^2$ packets that are active at the start of the interval.

By \corref{window-growth-union-bound}, with high probability in $\tau$, none of these packets ever reach window size above $O(\tau \ln^2(\tau)) = O(\first(t)^{1/2}\ln^2(\first(t)))$.  Moreover, at most $\tau = O(\sqrt{\first(t)})$ of the packets can succeed during the interval. Thus, the contention in every slot is at least $\Omega(\first(t)^{1/2}/\ln^2(\first(t))) \gg \clow$.

{\bf\boldmath Case 3: $\mathcal{A} = O(\tau)$ and $\first(t)^{1/2} \leq L(t)$, i.e., 
$\tau= (1/\cint) \cdot {\frac{\wmax}{\ln^2( \wmax)}}$.} 
We will argue that either the first or second bullet holds.  

We begin by arguing that packets with small windows initially are irrelevant. Observe that the total number of packets active during the interval is at most $O(\first(t)+\mathcal{A}) = O(\tau^2)$.  Thus, for $\cint$ sufficiently large
, we will see below how we can apply \lemref{window-change}, first for packets with small windows, and  then for packets with large windows, in order to ultimately get high probability bounds.

We apply the second bound of \lemref{window-change} to  the subset of packets $S$ with window at most $O(\wmax)$. In particular, with failure probability at most $1/\tau^{\Theta(1)}$, no packet with window smaller than say $(3/4)\wmax$ grows to window size larger than say $(7/8)\wmax$; here, we use our case assumption that $N(t)^{1/2} \leq \third(t)$.  (The particular choice of constants (3/4) and (7/8) is not important---any constant can be achieved by tuning $\gamma$ in \lemref{window-change}; e.g., $\gamma=1/7$ for the stated constants.)

For the remainder, we focus on any packet that starts the interval with window size at least $(3/4)\wmax$.  By \lemref{window-change}, with high probability such a packet maintains a windows size between say $(1/2)\wmax$ and $(6/5)\wmax$, meaning that the probability of listening does not vary by more than a constant factor. In particular, the packet chooses to listen to each slot with probability at least $(1/2) c\ln^3(\wmax)/\wmax$ and at most $(6/5)c\ln^3(\wmax)/\wmax$ . 

Now suppose that at least a (9/10)-fraction of the slots in $\mathcal{I}$ have low contention and are not jammed.  We next count two things: $x_s$ is the number of times the packet listens and hears silence, and $x_n$ is the number of times the packet listens and hears noise.  As long as $x_s - x_n \geq \delta(c\ln(\wmax))$, for constant $\delta$, then the packet's window shrinks by a factor of roughly $e^\delta$, which corresponds to the constant $d$ in the first bullet of the lemma statement.  We thus need only argue that $x_s - x_n = \Omega(c\ln(\wmax))$ with high probability (and then take a union bound across packets) to justify that such a constant exists. 

There is one subtle detail here: the packet's window has to be large enough that the lower bound we derive on $x_s$ below  would not result in the packet's window dropping below the minimum window size $\wmin$; it would be sufficient for the following if $\wmax \geq 2\wmin$, for example. We can enforce this constraint by simply setting $\clow \leq 1/(2\wmin)$, as even a single packet with window at most $2\wmin$ would contribute at least $\clow$ to the contention. 

We first argue that the packet is likely to observe many empty slots.  
By \lemref{p-emp} and the fact that $\clow\leq 1/\wmin \leq 1/20$, each of the unjammed low-contention slots is empty with probability at least $9/10$. Thus, the expected value of $x_s $ is at least:
\begin{align*}
& \mbox{~~~~~}\left(\frac{9}{10}\right) \left(\frac{1}{2}\right) \left(\frac{c\ln^3(\wmax)}{\wmax}\right)  \left(\frac{9}{10}\right)\left(\frac{1}{\cint}\right)\left(\frac{\wmax}{\ln^2(\wmax)}\right)\\
& = \left(\frac{81}{200}\right)\left(\frac{c}{\cint}\right) \ln(\wmax).
\end{align*}

Now we can apply an adaptive Chernoff bound to conclude that $x_s \geq (1/3)\cdot(c/\cint)\ln(\wmax)$ with high probability in $\wmax \geq \tau$.

We next consider the number of noisy slots, which has two components.  First, for slots where the contention exceeds $\clow$ or the slot is jammed, the worst case is that all such slots are noisy.  The expected number of listens to such slots is at most:
\begin{align*}
& \mbox{~~~~~}\left(\frac{6}{5}\right)\left(\frac{c\ln^3(\wmax)}{\wmax}\right) \left(\frac{1}{10}\right)\left(\frac{1}{c_{\tau}}\right)\left(\frac{\wmax}{\ln^2(\wmax)}\right)\\
    &= \left(\frac{6}{50}\right) \left(\frac{c}{c_{\tau}}\right) \ln(\wmax),
\end{align*}
\noindent which is at most $(1/8)(c/\cint)\ln(\wmax)$ with high probability.  There are also the slots where the contention is low but that happen to be noisy.  For low contention unjammed slots, there may also be noise. But the probability of noise is at most $(1/10)$.  The expected number of noisy unjammed low-contention slots that the packet listens to is at most:
\begin{align*}
& \mbox{~~~~~}\left(\frac{1}{10}\right)\left(\frac{6}{5}\right)\left(\frac{c\ln^3 (\wmax)}{\wmax}\right) \left(\frac{1}{c_{\tau}}\right)\left(\frac{\wmax}{\ln^2(\wmax)}\right)\\
   &= \left(\frac{3}{25}\right)\left(\frac{c}{c_{\tau}}\right)\ln(\wmax).
\end{align*}

Thus, by a Chernoff bound, with high probability, the number of such slots is at most $(1/8)(c/\cint)\ln(\wmax)$.  Adding these two together, we get that $x_n \leq (1/4)\cdot(c/\cint)\ln(\wmax)$, with high probability.  

In conclusion, with high probability, $x_s -x_n \geq (1/3 - 1/4) \cdot (c/\cint)\ln(\wmax) = (1/(12\cint)) \cdot c\ln(\wmax)$. 
\end{proof}

\subsection{Bounding the decrease in  \texorpdfstring{\boldmath{$\Phi(t)$}}{}  over intervals by combining   \texorpdfstring{$\first(t)$}{}, \texorpdfstring{\boldmath{$\second(t)$}}{}, and \texorpdfstring{\boldmath{$\third(t)$}}{}}
\label{sec:combining-analysis}




We now consider the change to $\Phi$ over an interval by combining the bounds for $\first$, $\second$, and $\third$.  The first lemma here states that with high probability, if there are not too many packet arrivals or jammed slots, then $\Phi$ is very likely to decrease by an amount that is proportional to the size of the interval. 
There is a low probability failure event in which the desired decrease to potential does not occur.  The second lemma focuses on such failures, providing a tail bound for the likelihood that $\Phi$ increases by a large amount. Together, these lemmas match the conditions necessary for the betting game. 

\phimain
\begin{proof}
    First, consider the high probability bounds on how much the terms in the potential increases during $\mathcal{I}$. 
    \begin{itemize}[leftmargin=13pt]
    \item In the worst case, $\first(t)$ increases by at most $\mathcal{A}$.
    \item By \lemref{AB_not_high_and_decrease},  $\alpha_1\first(t) + \alpha_2\second(t)$ increases by at most $O(\ln^3 \tau + \mathcal{A} + \mathcal{J})$, with high probability in~$\tau$.  
    \item By \lemref{tailbound-maxwindow}, $\third(t)$ increases by at most $O(\mathcal{A} + \tau)$, with high probability in~$\tau$. 
\end{itemize}   

We next consider two cases to conclude that the potential decreases overall. \\
{\bf Case 1:} at least $(9/10)$ fraction of the slots have low contention. Then, \lemref{C-term} states that with high probability in $\tau$ $\third(t)$ decreases by $\Omega(\tau) - O(\mathcal{A} + \mathcal{J})$.  Adding the increases summarized above to the other terms, we get a net decrease of $\Omega(\tau) - O(\mathcal{A} + \mathcal{J} + \ln^3\tau) = \Omega(\tau) - O(\mathcal{A} + \mathcal{J})$.\\
{\bf Case 2:} at most a $(9/10)$ fraction of the slots have low contention. Then, \lemref{AB_not_high_and_decrease} states that with high probability in $\tau$$\alpha_1\first(t) + \alpha_2\second(t)$ decreases by at least $\Omega(\tau) - O(\mathcal{A} + \mathcal{J})$.  Choosing $\alpha_1$ and $\alpha_2$ much larger than $\alpha_3$
the $\Omega(\tau)$ decrease to the first two terms dominates the $O(\tau)$ increase to the third term.  
\end{proof}

 Our next lemma provides a tail bound on the amount by which $\Phi(t)$ increases, which is relevant if the high probability bound in the preceding lemma fails.  At first glance, the reader may be surprised to see that the amount of jamming does not occur in the lemma statement. This omission is in part due to the fact that any jamming is subsumed by the $\Theta(k\tau^2)$ term because there can be at most $\tau$ jammed slots. 

%
\tailbound
\begin{proof}
    In fact, we will upper-bound $\Phi(t')$, which implies an upper bound on $\Phi(t')-\Phi(t)$ because $\Phi$ is always nonnegative.  We begin by providing worst-case upper bounds on $\first(t')$ and $\second(t')$.  We complete the proof by incorporating the tail bound on $\third(t')$.  

    Observe that $\first(t') \leq \first(t) + \mathcal{A}$.  Because $\tau \geq (1/\cint)\sqrt{\first(t)}$, we have $\first(t') \leq (c_{\tau}\tau)^2 + \mathcal{A} = O(\tau^2) + \mathcal{A}$.   
    To bound $\second$, notice that each packet active at time $t'$ contributes at most $1/\ln(\wmin) < 1$ to $\second(t')$. Thus, $\second(t') < \first(t') = O(\tau^2) + \mathcal{A}$.    

    Finally, by \lemref{tail-bound-third},  the probability that $\third(t')$ exceeds $\Theta(\mathcal{A} + k\tau)$ is upper bounded by $2^{-\Theta(c(\ln \tau \cdot \ln k + \ln^2k))}$.   And when $\third(t')$ falls below the stated threshold, we have $\Phi(t') \leq \alpha_1 \cdot O(\mathcal{A}+\tau^2) + \alpha_2 \cdot O(\mathcal{A}+\tau^2) + \alpha_3\cdot O(\mathcal{A} + k\tau) = O(\mathcal{A} + \tau^2 + k\tau) = O(\mathcal{A} + k\tau^2)$.
\end{proof}

\subsection{Using the analysis of \texorpdfstring{\boldmath{$\Phi(t)$}}{} to prove throughput via a betting-game argument}
\seclabel{bettinggame}

The preceding section establishes progress guarantees over sufficiently large intervals in the form of \thmreftwo{phi-main}{tail-bound}. 
Specifically, consider an interval $\mathcal{I}$ starting at slot $t$ with length  
$|\mathcal{I}|= \tau = (1/\cint) \cdot \max\big\{\frac{\wmax}{\ln^2( \wmax)}, \first(t)^{1/2}\big\}$.
Then, with high probability in $|\mathcal{I}|$,  $\Phi(t)$ decreases over $\mathcal{I}$ by at least $\Omega(\tau) - O(\mathcal{A}+\mathcal{J})$, where $\mathcal{A}$ is the 
the number of packet arrivals and $\mathcal{J}$ is the number of jammed slots in $\mathcal{I}$.
Critically, these bounds hold with high probability in  $\tau$.  In this section we show how to apply these lemmas to achieve  results for the execution with high probability in $N$, the total number of packets.

Since the adversary is adaptive, we have to be careful in
combining bounds across intervals. 
The adversary can use the results of earlier intervals in choosing new arrivals and jamming, which affects the size of later intervals.  This may lead to a more advantageous interval length for the adversary, which might increase its probability of success.


To reason about this process, we reframe it in a setting that resembles a random walk, which we describe below in a \defn{betting game}. Our analysis of this game then allows us to analyze the implicit throughput (recall Section~\ref{sec:model}). We first summarize the betting game and then show how it corresponds to the backoff process. 

\paragraph{The Betting Game.} There is a \defn{bettor} who makes a series of bets.  Each bet has a size, chosen arbitrarily by the bettor to any value $\geq \wmin$. At any given time, the bettor has some amount of money, which is initially 0 dollars. When the bettor loses a bet, the bettor loses some money, and when the bettor wins, the bettor wins some money. (The amounts won or lost are specified below as a function of the size of the bet.)  

Additionally, at any time, the bettor may choose to receive a \defn{passive income}---and in fact must do this at time zero (in order to have some money to bet).  The passive income is (immediately) added to the bettor's wealth. The total amount of passive income taken, however, means that the bettor must play the game longer.  The game begins when the bettor first takes some passive income, and the game does not end until either: (i) the bettor goes broke, or (ii) the bettor has resolved bets totaling $\Omega(\passive)$ size, where $\passive$ is the passive income received, whichever comes first.  The bettor's goal is to complete the game without going broke. Importantly, although the bettor can always choose to take more passive income, doing so increases the total play time. 

The bettor loses a size-$s$ bet with probability at least $1-\frac{1}{\poly(s)}$, and hence wins a size $s$ bet with probability $O(1/\poly(s))$.  If the bettor loses the size-$s$ bet, it loses $\Theta(s)$ dollars. If the bettor wins the bet, it gets $\Theta(s^2)$ dollars, plus $Y$ \defn{bonus dollars}, where $Y$ is a random variable such that $\Pr[Y \geq k s^2] \leq  \frac{1}{\poly(s)}\cdot 2^{-\Theta(\ln^2k)}$.


\paragraph{ Correspondence to the Backoff Process.} The betting game mirrors the backoff process. The adversary corresponds to the bettor. Each bet corresponds to an interval, i.e., each bet is of size $\tau$ corresponding to an interval's length.  (In the actual backoff process, the interval sizes are dictated by the current state of the system, and not entirely under the control of the adversary.). Money corresponds to potential.  Passive income during a bet corresponds to the potential caused by arrivals and jammed slots during the interval.   


The parameters set above for the betting game correspond exactly to the bounds we have set on the potential function in our contention resolution setting.  Recall that an interval of length $\tau$ is successful with probability $1 - \frac{1}{\poly(\tau)}$, corresponding to the bettor losing (see \thmref{phi-main}).  When an interval of length $\tau$ is successful (i.e., the bettor loses), the potential (i.e., money) decreases by $\Omega(\tau) - O(\mathcal{A} + \mathcal{J})$, with the latter term corresponding to passive income taken during the interval.  When an interval of length $\tau$ is unsuccessful (i.e., the bettor wins), the potential (i.e., money) increases by $\Theta(\tau^2)$, plus $Y$ \defn{bonus dollars}, where $Y$ is a random variable such that $\Pr[Y \geq k\tau^2] \leq  \frac{1}{\poly(\tau)}\cdot 2^{-\Theta(\ln^2k)}$; these bonus dollars correspond exactly to the tail bound of \thmref{tail-bound}.




The rules of betting game are set in favor of the bettor such that when the bettor wins, the potential $\Phi(t)$ increases more slowly than the bettor's wealth increases, and when the bettor loses, $\Phi(t)$ decreases at least as fast as the bettor's wealth decreases. Therefore, this betting game stochastically dominates the potential function. 

\emph{The takeaway is that at any point $t$, the bettor's wealth is an upper bound on $\Phi(t)$.} 
Because $\Phi(t)$ is an upper bound on the number of packets in the system, the bettor going broke corresponds to all packets succeeding. We thus obtain good implicit throughput, because there must either be many jammed slots or packet arrivals, or there must be many packets succeeding, leading to inactive slots. 

\paragraph{Showing $\Omega(1)$ implicit throughput.}  We will use bounds on the bettor's maximum wealth to imply good implicit throughput.  In \lemref{passive}, we provide a high-probability upper bound on the bettor's maximum wealth and the amount of time until they go broke; the time at which they go broke corresponds to there being no packets in the system.  We use these conclusions about the betting game to establish Corollary \ref{cor:implicit-throughput}, which proves $\Omega(1)$ implicit throughput. Finally, Corollary \ref{cor:phi-bound-corollary} bounds the potential function in terms of the number of packet arrivals and jammed slots (which, in terms of the betting game, correspond to passive income). 

%
 \begin{lemma}\label{lem:passive}
 ({\bf The bettor loses the betting game}). Suppose the bettor receives $\passive$ dollars of passive income.   Then with high probability in $\passive$, the bettor never has more than $O(\passive)$ dollars during the game.  Moreover, the bettor goes broke within $O(\passive)$ time, with high probability in $\passive$. 
 \end{lemma}
\begin{proof}
Group the bets into \defn{bet classes}, where bet class $j$ contains all bets with size in $[2^{j}, 2^{j+1})$. 

\paragraph{Passive income.} For the purpose of determining how much passive income the bettor takes (and hence how long to play the game), we will let the adversary have a limited view of the future: for each bet class, the bettor can see the sequence of outcomes for bets in that bet class; with this information the bettor can plan how many bets to allocate to each class and how much passive income to take.

Let $\passive$ be the total passive income that the adversary chooses to take as a result of this information; we will generously allow the adversary to take that passive income at the very beginning of the game.

We will show w.h.p.\ in $\passive$, that the bettor goes broke over the first $O(\passive)$ slots, and never has more than $O(\passive)$ dollars.


\paragraph{Large bets.} Before proceeding further, we will dispose of the large bets.  A bet of size $\passive^{1/4}$ or larger is \defn{big}, and  otherwise the bet is \defn{small}.  With high probability in $\passive$, the bettor loses \emph{every} big bet: If the bet size is at least $\passive^{1/4}$, then the probability of winning the bet is $O(1/\poly(\passive^{1/4})) = O(1/\poly(\passive))$ (by the rules of the betting game, where the degree in the $\poly$ is our choice.) Since there can only be  $O(P^{3/4})$ lost bets of this size before going broke, by a  union bound, none of them are successes with high probability.

\paragraph{Analyzing a bet class.} We now consider the bets in a single bet class.  We will imagine that the bets in that bet class are an infinite stream of possible bets, and that the adversary can see the outcomes in advance when choosing whether or not to make the next bet.  (This is to account for the adaptive nature of the adversary that it might choose to make bets of a different size based on past history.)  We will show that every prefix of this stream of bets is ``good'' with high probability.

\paragraph{Bonus dollars.}  First, we will examine the bonus dollars.  We will show that, with high probability, for any prefix of the stream of bets: if the bettor wins $X$ regular dollars, then it also wins $O(X)$ bonus dollars, with high probability in $\passive$.  Recall that for any single bet in bet class $[2^{j}, 2^{j+1})$, the bonus dollars $Y$ are drawn from a distribution where $\Pr[Y > k \cdot 2^{j+1}] \leq 1/2^{\Theta(j+\ln^2 k)}$, by the rules of the game.

We consider the first $O(\ln(\passive))$ successful bets in each bet class of size at most $\passive^{1/4}$.  The probability of winning at least $\Theta(\passive^{1/2})$ bonus dollars is polynomially small in $\passive$, and there are at most $O(\ln(\passive))$ such winning bets.  So we conclude that the bonus dollars from all such bets is at most $O(\passive)$ with high probability in $\passive$.

Next, we fix a specific bet class $j$ and consider some fixed prefix of the bet sequence for this bet class.  Let $B$ be the set of bets that the bettor wins; we assume $|B| > \Theta(\ln{\passive})$, as we have already accounted above for the first few successful bets.  We want to examine each possible value of $k$ for the bettor's ``bonus multiplier,'' hence for each $k \in \mathbb{Z}^+$ define the following indicator random variables: for the $i$th bet in $B$, let $x_i^{(k)} = 1$ iff the $i$th bet in $B$ yields at least $k 2^{j+1}$ bonus dollars. (Otherwise, $x_i^{(k)} = 0$.) 

The total number of bonus dollars won from bets within this bet class is at most
$\sum_{i=1}^{|B|} \sum_{k=1}^\infty  2^{j+1} x_i^{(k)}$ (note that the inner summation acts as a $k$ multiplier). 
As we only want to consider values of $k$ that are a power of 2, we can round up the bonus dollar winnings and conclude that the bonus dollars are bounded by:
$2^{j+2} \sum_{k \in \{2^i \mid i \in \mathbb{Z}_{\geq 0}\}} \sum_{i=1}^{|B|} kx_i^{(k)}$.



Given the probability of winning bonus dollars, we know that:
$$\expect{x_i^{(k)}} \leq 2^{-\Theta(j+\ln^2{k})} \leq 2^{-\Theta(\ln^2{k})}.$$   

We conclude that: 
\begin{align*}
   \expect{ \sum_{i=1}^{|B|} k x_i^{(k)}} & \leq k|B|2^{-\Theta(\ln^2{k})}\\ & \leq |B|2^{-\Theta(\ln^2{k})}.  
\end{align*}
\noindent where the last line follows by wrapping up $k$ with the  $\Theta(\ln^2 k)$ term in the exponent. Thus, the expected total bonus dollars for the bet class is: 
$$2^{j+2} \sum_{k \in \{2^i \mid i \in \mathbb{Z}_{\geq 0}\}} |B|2^{-\Theta(\ln^2{k})} = O(2^{j+2}|B|).$$

By the second bound given in Theorem~\ref{thm:adversarial_azumas_upper}, we conclude that the total bonus dollars for the bet class is $O(2^{j+2}|B|)$ with  high probability in $\passive$ (as $|B| = \Omega(\ln \passive)$). 
Since this holds with high probability in $\passive = \Omega(|B|)$, it also holds with high probability in $\passive$ for all prefixes of the bet sequence via a union bound.

Since $|B|$ bets in bet class $j$ make at least $\Theta(2^j |B|)$, and with high probability in $\passive$ there are at most $O(2^j |B|)$ bonus dollars, we conclude that with high probability we can bound the total bonus dollars in terms of: (i) $O(\passive)$ bonus dollars (for the first few bets), plus (ii) $O(X)$ bonus dollars for a bet class that that wins $X$ dollars.

\paragraph{Wins versus losses.}  Now we analyze the total winnings over any prefix of the bets in a given bet class.  

The first observation is that the first $O(\ln(\passive))$ bets result in at most $O(\ln(\passive))$ wins, and hence at most $O(\passive^{1/4}\ln(\passive)) = O(\passive)$ income.  We will see that the bettor never makes more money than $O(\passive)$ income with high probability.

Next, we will show that once $\Omega(\ln(\passive))$ bets have been made, w.h.p.\ in $\passive$: at all subsequent points in the betting sequence (for that bet class), all but at most an $\alpha$-fraction of bets are losses, where $\alpha$ is a constant that we can choose by changing the polynomial. 

More precisely, recall that the bettor wins a bet in this bet class with probability $1/\poly(2^j) \leq \alpha/(2(1 +\delta))$ (where $\delta > 0$ is a parameter used in \thmref{adversarial_azumas_upper}).  Thus in the prefix containing $|B|$ bets, the expected number of wins by the bettor is at most $|B|\alpha/(2(1 + \delta))$.  We conclude by the second bound of \thmref{adversarial_azumas_upper} that the bettor wins at most $\alpha |B|$ bets with 
high probability in $\passive$.
Thus with high probability in $\passive$, it holds that the bettor wins at most $\alpha |B|$ bets for every prefix of bets (of size at most $\passive$) in the bet class that is longer than $\Omega(\ln(\passive))$. 

Imagine, for example, that the bettor makes $|B|$ bets in bet class of sizes $[2^{j}, 2^{j+1})$.  If $|B| = \Omega(\ln(\passive))$, then we conclude that (with high probability in $\passive$) the bettor loses $\Theta(2^j)(1-\alpha)|B| - \Theta(2^{j+1})\alpha |B|$---remembering that we have already shown that if the bettor wins $X$ regular dollars in this prefix, then it also wins only $O(X)$ bonus dollars.  Thus for a proper choice of $\alpha$, the bettor loses money for every choice of $|B|$.  

Specifically, if the bettor chooses to bet $X$ in a small bet class and those bets consists of more than $O(\ln(\passive))$ bets, then with high probability the bettor loses $\Theta(X)$ dollars.

\paragraph{Wrapping up the proof.}  Thus we conclude that after bets totaling $\Theta(\passive)$ either $\Theta(\passive)$ of the bets are large---and the bettor goes bankrupt---or $\Theta(\passive)$ of the bets are small.  In the latter case, at least $\Theta(\passive)$ of those bets must not consist of ``the first $O(\ln(\passive))$ bets in a bet class'', since there are only $\ln^2(\passive)$ such bets.  Thus with high probability the bettor loses $\Theta(\passive)$ dollars and goes bankrupt.
\end{proof}

\begin{corollary}[{\bf Constant implicit throughput}]\label{cor:implicit-throughput}
At the $t$th active slot (in the execution of contention resolution), the implicit throughput is $\Omega(1)$ with high probability in~$t$.
\end{corollary}
\begin{proof}
We explain here how \lemref{passive} implies implicit throughput.  At time $t$ in the betting game, with high probability in $t$, the bettor must have taken a passive income of  $\Omega(t)$ dollars---otherwise the bettor would have gone broke, as per \lemref{passive}.

Consider a time $t$, and suppose that the bettor has received $\passive$ dollars from passive income. 
Then from \lemref{passive}, with high probability in $P$, the bettor goes broke within $\eta \cdot P$ time, for a constant $\eta$ matching the \mbox{big-$O$} of \lemref{passive}.  This corresponds to there being no active packets at time $\eta P$. Setting $P = t/\eta$, we see that at least $t/\eta$ packet arrivals/jammed slots are necessary to obtain $t$ active slots. We thus obtain the $\Omega(1)$ implicit throughput result.
\end{proof}
 
\noindent We conclude by noting that the above corollary yields \thmref{infinite_arrival_throughput}.  Moreover, Corollary~\ref{cor:finite_arrival_throughput} is a special case of Theorem \ref{thm:infinite_arrival_throughput}. 

Our next corollary just restates part of \lemref{passive} with respect to the potential.  Due to \thmreftwo{phi-main}{tail-bound}, the potential increases less than the bettor's net worth does, and it decreases at least as fast.  That is, the betting game stochastically dominates the  potential.  
\begin{corollary}[{\bf Bound on potential in terms of packet arrivals and jammed slots}]
Suppose that we have had $N_t$ packet arrivals and $\mathcal{J}_t$ jammed slots up until slot $t$. Then  w.h.p.\ in 
$N_t+\mathcal{J}_t$, for all slots $t'=1,\ldots, t$, the potential $\Phi(t')=O(
N_t+\mathcal{J}_t)$.
\label{cor:phi-bound-corollary}
\end{corollary}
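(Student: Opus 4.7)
The plan is to derive the corollary as a direct consequence of Lemma~\ref{lem:passive} combined with the stochastic dominance between the betting game and the potential function. Concretely, I would first make explicit the coupling that was only informally described in Section~\ref{sec:bettinggame}: partition the execution into intervals of length $\tau = (1/c_\tau)\max\{\wmax/\ln^2(\wmax), \sqrt{\first(t)}\}$, have the bettor place a size-$\tau$ bet for each interval, treat each arrival and each jammed slot in that interval as one unit of passive income, and fix the win/loss amounts so that (i)~\thmref{phi-main} certifies that losses occur with probability at least $1 - 1/\poly(\tau)$ and that the corresponding decrease in $\Phi$ is at least as large as the $\Theta(\tau)$ loss the bettor incurs, and (ii)~\thmref{tail-bound} certifies that the distribution of wins (including bonus dollars) stochastically dominates the distribution of any one-interval increase in $\Phi$ beyond what is directly attributable to the $\mathcal{A}+\mathcal{J}$ passive-income units.

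With this coupling in place, I would argue that at every active slot $t'\le t$ the bettor's wealth $W(t')$ is pointwise at least $\Phi(t')$. The additive contributions to $\Phi$ from arrivals and jamming are each at most a constant (for $\first(t)$ via packet injections, and for $\third(t)$ through \lemref{tail-bound-third} which is already folded into the bet's tail), so by choosing the constants in the bet rewards sufficiently large we can ensure the passive income in the betting game upper bounds these. The $\alpha_3 \third(t)$ term needs a short additional check since $\third(t)$ does not move on a per-slot basis: here I would lean on the fact that \thmref{tail-bound} already packages $\third$ into the wins-plus-bonus part of each bet, so no separate accounting per arrival is needed.

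Next, set $\passive = N_t + \mathcal{J}_t$, the total passive income the bettor accrues through slot $t$. Lemma~\ref{lem:passive} gives that with high probability in $\passive$ the bettor's wealth never exceeds $O(\passive) = O(N_t + \mathcal{J}_t)$ across the entire execution. By the stochastic dominance established above, $\Phi(t') \le W(t') = O(N_t + \mathcal{J}_t)$ for all $t'\le t$, which is exactly the claim.

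The main obstacle is the coupling step: I need to verify carefully that the interval structure used in Section~\ref{sec:bettinggame} matches what the bettor sees, in particular that the adversary's adaptive choice of arrivals and jamming across intervals really is no more powerful than the bettor's freedom to choose bet sizes and passive income. I expect this to boil down to observing that $\tau$ depends only on the current state of the system (and hence is known to the adversary before the interval begins), and that the lower-bound constraint $\tau \ge \wmin/\ln^2\wmin$ is compatible with the minimum bet size assumed in Lemma~\ref{lem:passive}. Once this bookkeeping is in place, the corollary follows immediately.
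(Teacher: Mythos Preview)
Your proposal is correct and follows essentially the same approach as the paper: the corollary is obtained directly from \lemref{passive} together with the stochastic-dominance coupling of the betting game over the potential, which the paper justifies via \thmreftwo{phi-main}{tail-bound}. If anything, you are more explicit than the paper about the coupling mechanics; the paper simply asserts the dominance and restates the wealth bound from \lemref{passive} as a bound on $\Phi$.
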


\paragraph{Bounds on throughput for infinite and finite streams.} We next address the case of adversarial queuing arrivals with granularity $S$, and where the arrival rate $\lambda$ is a sufficiently small constant. Recall that the ``arrival rate'' $\lambda$ limits the number of packet arrivals and jammed slots to at most $\lambda S$ in every interval of $S$ consecutive time slots.

We start with \lemref{queuing_nearby_inactive}, showing that, in the case of adversarial queuing arrivals with a sufficiently small arrival rate, there is an inactive slot within any interval of length $S$. This is used to prove Corollary \ref{cor:backlog}, a high-probability bound on the number of packets in the system under an adversarial queuing arrival model, which is the same as \corref{backlog-intro}. 




\begin{lemma}[{\bf Existence of inactive slot}]\lemlabel{queuing_nearby_inactive}
    Consider an input stream of adversarial queuing arrivals with granularity $S$ and sufficiently small arrival rate $\lambda=O(1)$.  Consider also any particular time slot~$t$. Then with high probability in $S$:
    \begin{itemize}[noitemsep]
        \item 
        there is an inactive slot at some time $t'$ with $t-\Theta(S) \leq t' < t$, and 
        \item 
    there is an inactive slot at some time $t'$ with $t < t' \leq t+\Theta(S)$.
    \end{itemize} 
\end{lemma}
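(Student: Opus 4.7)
\medskip
\noindent\textbf{Proof plan for \lemref{queuing_nearby_inactive}.}
The plan is to leverage the betting-game bound of \lemref{passive} by ``restarting'' the analysis from the most recent inactive slot. Fix $t$ and let $t_0 \le t$ denote the most recent inactive slot at or before $t$ (if none exists, take $t_0 = 0$ and treat the pre-execution state as empty). Since the system has no active packets at $t_0$, we have $\Phi(t_0) = 0$, and because the betting game stochastically dominates the potential function, we may pessimistically assume the bettor has $0$ wealth at time $t_0$ and then run the betting-game analysis forward from $t_0$ against the (still adaptive) adversary.

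Next, I bound the passive income over a forward window of length $\Delta$. By the adversarial queuing assumption, any interval of length $\Delta$ receives at most $\lceil \Delta/S\rceil \cdot \lambda S \le \lambda \Delta + \lambda S$ packet arrivals plus jammed slots, so the passive income received in $[t_0+1, t_0+\Delta]$ satisfies $\passive \le \lambda(\Delta + S)$. By \lemref{passive}, the bettor goes broke within $c_0 \passive$ active slots, w.h.p.\ in~$\passive$. Choose $\Delta = c_1 S$ for a sufficiently large constant $c_1$; then $\passive \le \lambda(c_1+1)S$ and, for $\lambda$ small enough that $c_0 \lambda (c_1+1) < 1$, the bettor goes broke within strictly fewer than $\Delta$ active slots. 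If every slot in $[t_0+1, t_0+\Delta]$ were active, the number of active slots in this window would equal $\Delta$, contradicting this bound. So, w.h.p.\ in $S$, there exists an inactive slot $t_1 \in [t_0+1, t_0+c_1 S]$.

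Finally, both conclusions follow from the existence and placement of $t_1$. By the maximality of $t_0$, no inactive slot lies in $(t_0, t]$, so $t_1 > t$; combined with $t_1 \le t_0 + c_1 S \le t + c_1 S$, this yields an inactive slot in $(t, t + c_1 S]$, proving the forward claim. For the backward claim, $t_1 > t$ and $t_1 \le t_0 + c_1 S$ together give $t_0 \ge t_1 - c_1 S > t - c_1 S$, so $t_0$ is itself an inactive slot in $(t - c_1 S,\, t-1]$. Both statements hold w.h.p.\ in $S$, as required.

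\medskip
\noindent\textbf{Main obstacle.} The subtle step is justifying the ``restart'' of the betting game from $t_0$ against an adaptive adversary: \lemref{passive} is phrased for a global execution with total passive income $\passive$, whereas we want to apply it to a windowed suffix starting at $t_0$. The justification rests on two observations: (i) the betting game stochastically dominates $\Phi$ at \emph{every} slot, so at $t_0$ we may pessimistically take the bettor's wealth to be $0$; and (ii) the adversarial queuing constraint caps the passive income in any forward window of size $\Theta(S)$ at $\Theta(\lambda S)$, making $\passive$ effectively $O(S)$ and turning the ``w.h.p.\ in $\passive$'' guarantee into w.h.p.\ in $S$. The constants $c_1$ and the bound on $\lambda$ must then be tuned so that $c_0 \lambda(c_1+1) < 1$ to force the broke-within window to be strictly shorter than $\Delta$.
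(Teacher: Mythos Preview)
Your approach is essentially the same as the paper's: both restart the betting-game/implicit-throughput analysis from the most recent inactive slot and use the adversarial-queuing cap on passive income to force an inactive slot within $\Theta(S)$ steps. The paper phrases the contradiction via \corref{implicit-throughput} (if $t-t'\ge S$ then implicit throughput would be $O(\lambda)$, contradicting the $\Omega(1)$ bound w.h.p.\ in $t-t'\ge S$), whereas you invoke \lemref{passive} directly; these are equivalent since the corollary is derived from the lemma. Your simultaneous derivation of both bullets from the single pair $(t_0,t_1)$ is a nice touch.

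There is one genuine slip in your probability accounting. You write that $\passive = O(S)$ ``turns the `w.h.p.\ in $\passive$' guarantee into w.h.p.\ in $S$,'' but that is the wrong direction: an \emph{upper} bound on $\passive$ can only weaken a tail bound that is polynomial in $\passive$ (if the adversary injects almost nothing, your $\passive$ is tiny and ``w.h.p.\ in $\passive$'' is vacuous). What you actually need is that, under the assumption that all of $[t_0{+}1,t_0{+}\Delta]$ is active, the number of \emph{active} slots is $\Delta=c_1 S$, so the contrapositive of \lemref{passive} (equivalently, \corref{implicit-throughput}) applies w.h.p.\ in $\Delta=\Theta(S)$: surviving $\Delta$ active slots forces $\passive=\Omega(\Delta)$, which contradicts $\passive\le\lambda(c_1{+}1)S$ for small~$\lambda$. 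Alternatively, pad the bettor's passive income up to exactly $\Theta(S)$ and use monotonicity (extra income can only delay bankruptcy). With this fix your argument goes through and recovers both bullets just as the paper does.
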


\begin{proof}
    Consider a time $t$, and let $t' < t$ be the last inactive slot before time $t$.  Our goal is to prove that $t - t' = O(S)$, with high probability in~$S$

    Let us first consider the implication of large $t-t' \geq S$ on implicit throughput.  Then we have a time period where at most $\lambda(t-t'+S) \leq 2\lambda(t-t')$ packet arrivals or jamming occur. Since there are packets active throughout this entire interval from $t'$ to $t$ by construction, the implicit throughput is at most $2\lambda (t-t') / (t-t') = O(\lambda)$ at time $t$.  If $\lambda$ is small enough, this level of throughput does not meet the $\Omega(1)$ implicit throughput promised by \corref{implicit-throughput}, which contradicts the assumption that $t-t'$ is large.  We can thus apply \corref{implicit-throughput} at time $t$ to conclude that with high probability in $S$, the $\Omega(1)$ implicit throughput is met at $t$, and hence $t-t' = O(S)$.  

    To achieve the second claim, we apply the same argument at time $t+\Theta(S)$, which implies an inactive slot between $t$ and $t+\Theta(S)$, w.h.p.
\end{proof}

We can leverage  the first bullet point of \lemref{queuing_nearby_inactive} to obtain the following bound on an adversary with granularity $S$. (And we use the second bullet later in Theorem~\ref{thm:queuing_adversarial_accesses}; although it immediately follows from the first bullet point, it is still useful to note explicitly.)

\begin{corollary}[\bf Bounded backlog for adversarial-queuing arrivals]
  Consider an input stream of adversarial queuing arrivals with granularity $S$ and sufficiently small arrival rate $\lambda=O(1)$.  Consider any particular slot $t$.  Then with high probability in $S$, the number of packets in the system at time $t$ is $O(S)$.
  \label{cor:backlog}
\end{corollary}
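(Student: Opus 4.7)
The plan is to chain together Lemma~\ref{lem:queuing_nearby_inactive} with the definition of adversarial-queuing arrivals. Roughly, the lemma tells us there is a ``clean slate'' recently before $t$, and then only a bounded number of packets can have accumulated since that clean slate.

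First I would invoke Lemma~\ref{lem:queuing_nearby_inactive} at time $t$ to conclude that, with high probability in $S$, there exists an inactive slot $t'$ with $t - \Theta(S) \le t' < t$. By definition of inactive, there are zero packets in the system at time~$t'$.

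Next, I would observe that any packet present at time~$t$ must have arrived during the interval $(t', t]$, since no packet was present at~$t'$. This interval has length at most $\Theta(S)$. By the adversarial-queuing assumption with granularity $S$ and arrival rate $\lambda$, in any window of $S$ consecutive slots the total number of packet arrivals plus jammed slots is at most $\lambda S$; covering an interval of length $\Theta(S)$ by a constant number of such length-$S$ windows gives a bound of $\lambda \cdot \Theta(S) = O(S)$ on the number of arrivals in $(t', t]$. Hence the backlog at time~$t$ is $O(S)$ w.h.p.\ in $S$, as claimed.

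The main (and essentially only) subtlety is the high-probability chaining: the $O(S)$ backlog bound is conditional on the good event from Lemma~\ref{lem:queuing_nearby_inactive}, whose failure probability is polynomially small in~$S$. All other steps are deterministic consequences of the adversarial-queuing arrival model, so no additional tail bound or union bound is required. I do not anticipate any real technical obstacle here; the lemma already packages the hard work (translating implicit throughput into proximity of an inactive slot).
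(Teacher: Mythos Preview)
Your proposal is correct and matches the paper's own proof essentially line for line: invoke \lemref{queuing_nearby_inactive} to obtain an inactive slot within $O(S)$ steps before $t$, then bound the number of arrivals since that slot using the adversarial-queuing constraint. Your write-up is in fact slightly more explicit than the paper's about the window-covering step, but the argument is the same.
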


\begin{proof}
    By \lemref{queuing_nearby_inactive}, there is an inactive slot at time $t-O(S)$ with high probability in $S$.  If that is the case, there can be at most $O(S)$ packet arrivals plus jammed slots since the last inactive slot. 
\end{proof}

This completes the proof Corollary~\ref{cor:backlog-intro}. Again, since Corollary \ref{cor:finite_arrival_throughput} is a special case of Theorem \ref{thm:infinite_arrival_throughput}, we have proved our central results on throughput.

\subsection{Channel access/energy bounds}\seclabel{energy}

\paragraph{Energy bounds for finite streams.}
Our first theorems apply for finite instances.

%
\finiteenergy

\begin{proof}
From Corollary \ref{cor:finite_arrival_throughput}, the number of active slots is $O(N+\jams)$. 
We also know from Corollary~\ref{cor:phi-bound-corollary} that the potential function is $O(N+\jams)$ throughout the entire execution w.h.p.\ in $N+\jams$.

One consequence is that for all $t$, we can bound the largest window size ever reached, denoted by $W^*$. In particular, at each time $\wmax = O(\poly(N+\jams))$ w.h.p., and hence $W^* = O(\poly(N+\jams))$ w.h.p.\ (In fact, for all $t$, $\wmax = O( (N+\jams) \ln^2(N+\jams))$ w.h.p., but we do not need to be that exact.)  

First, let us count the number of times a packet can back off before it reaches a window size of $W^*$, assuming no backons.  Suppose a packet has window size $w$.  Then each time it backs off, its window increases in size by factor of $(1+1/\Theta(\ln w))$.  Thus, there are $\Theta(\ln w)$ backoffs before the window size doubles. Rounding $w$ up to the maximum $W^*$, we get $O(\ln(W^*))$ backoffs to double the window size and $O(\ln^2(W^*))$ backoffs to reach maximum window size. 

In order for a packet to perform additional backoffs, it must also back on as otherwise its window would increase beyond $W^*$.  In particular, for each additional backoff, the packet must also perform $\Omega(1)$ backons.

Next, we shall bound $k$, the number of backons that the packet performs, with high probability. Since each backon allows for $O(1)$ additional backoffs, the total number of backoffs is now given by $O(\ln^2(W^*)+k)$.  And the number of listening attempts is the number of backoffs plus backons, and also $O(\ln^2(W^*) +k)$. 

The key point is that each time a backon occurs, the slot is empty. So the only question is how many times the packet can listen to empty slots before it chooses to also sends.  Every time a packet listens, there is an $\Omega(1/\ln^3 W^*)$ probability that it also makes a sending attempt.  The probability that the packet never sends on these empty slots is $(1-\Omega(1/\ln^3W^*))^k$, which is polynomially small in $W^*$ for $k=\Theta(\ln^4 W^*)$.  Thus with high probability in $W^*$, we have $k = O(\ln^4 W^*$).

To conclude the proof, we substitute in that $W^*=O(\poly(N+\jams))$ w.h.p.
\end{proof}

\begin{theorem}[{\bf Energy bounds for the finite case against an adaptive and reactive adversary}]\label{thm:finite_reactive_accesses}
Consider an input stream with $N$ packets and $\mathcal{J}$ jammed slots. Assume that the adversary is adaptive and reactive. Then, w.h.p.\ in $N+\jams$
\begin{itemize}
\item  a packet accesses the channel at most $O( (\jams+1) \ln^3(N+\jams) + \ln^4(N+\jams))$ times. 
\item the average number of channel accesses is $O((\jams/N + 1) \ln^4(N+\jams))$ times.
\end{itemize}
\end{theorem}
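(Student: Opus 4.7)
The plan is to adapt the proof of Theorem~\ref{thm:finite_adaptive_accesses} to accommodate the reactive adversary. As before, Corollary~\ref{cor:phi-bound-corollary} yields a w.h.p.\ upper bound $W^* = O(\poly(N+\jams))$ on the largest window size ever reached by any packet. Mirroring the same counting, the total channel accesses of any packet $p$ are $O(\log^2 W^* + k_p)$, where $k_p$ is the number of backons $p$ performs (equivalently, the number of silent slots that $p$ listens to).

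For the per-packet bound, the only new element is that in the reactive setting a silent-slot listen accompanied by a send attempt is no longer guaranteed to remove the packet: the reactive adversary can jam that slot in real time. Let $J_p \le \jams$ denote the number of slots in which $p$'s send attempt is jammed. Then $p$ must make only $J_p+1$ sends during otherwise-empty slots before one succeeds. Since conditional on listening the packet sends with probability $\Omega(1/\log^3 W^*)$, a standard negative-binomial / Chernoff argument shows that obtaining $J_p+1$ such sends requires at most $O((J_p+1)\log^3 W^* + \log^4 W^*)$ listens w.h.p.\ in $W^*$. Substituting $W^* = O(\poly(N+\jams))$ and $J_p \le \jams$ gives the stated per-packet bound $O((\jams+1)\log^3(N+\jams) + \log^4(N+\jams))$.

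For the average bound, summing the per-packet bound over the $N$ packets yields $O((\sum_p J_p + N)\log^3 W^* + N \log^4 W^*)$ total channel accesses. It therefore suffices to show $\sum_p J_p = O(\jams)$ w.h.p., i.e., that any single jammed slot hosts only $O(1)$ simultaneous senders on average. This follows because Corollary~\ref{cor:phi-bound-corollary} together with the $\Omega(1)$ implicit throughput of Corollary~\ref{cor:implicit-throughput} implies that the average contention $\Con$ across the $O(N+\jams)$ active slots is $O(1)$; in particular, the expected number of senders in any one slot (whether or not it is jammed) is $O(1)$. A concentration argument via Theorem~\ref{thm:adversarial_azumas_upper} then gives $\sum_p J_p = O(\jams + \log(N+\jams)) = O(\jams+1)$ w.h.p., so the total channel accesses are $O((N+\jams)\log^4(N+\jams))$, i.e., an average of $O((\jams/N + 1)\log^4(N+\jams))$ per packet.

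The main obstacle will be establishing $\sum_p J_p = O(\jams)$ rigorously: the reactive adversary may attempt to concentrate its jamming on high-contention slots in order to ``ambush'' several packets at once, and simultaneously adapt to the algorithm's coin flips, so a plain Chernoff bound does not apply. This is exactly the regime handled by the adaptive-distribution Azuma generalization of Theorem~\ref{thm:adversarial_azumas_upper}, treating each slot's number of senders as a bounded random variable whose distribution is chosen by the adversary subject to the average-contention constraint. A minor technical subtlety is that the per-listen sending probability depends on the current $w_u(t)$, but this is handled uniformly using $w_u(t) \in [\wmin, W^*]$ whenever concentration is applied.
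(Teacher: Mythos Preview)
Your per-packet argument is correct and mirrors the paper's: bound the number of backons by the number of ``candidate'' slots (those where $p$ listens and no other packet sends), and note that $p$ needs at most $J_p+1$ sends in such slots before one is unjammed.

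The gap is in the average bound. Your claim $\sum_p J_p = O(\jams)$, where $J_p$ counts \emph{all} jammed sends of $p$, is false, and the proposed justification does not go through. Concretely: let $N$ packets arrive together at slot~$0$ and take $\jams=1$. The contention in slot~$0$ is $\Theta(N/\wmin)$, so $\Theta(N)$ packets send there; if the adversary jams slot~$0$ then already $\sum_p J_p = \Theta(N)$ while $\jams=1$. The cited corollaries do not yield ``average contention $O(1)$'' (Corollary~\ref{cor:phi-bound-corollary} bounds $\Phi(t)$ and Corollary~\ref{cor:implicit-throughput} bounds the number of active slots, neither of which controls $\Con$), and even if average contention were $O(1)$, a reactive adversary decides whether to jam \emph{after} seeing which packets send, so it can target exactly the high-sender slots. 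That places you outside the hypotheses of Theorem~\ref{thm:adversarial_azumas_upper}: the selection of which $X_t$ to include is made after the realization of $X_t$, not before.

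The paper's fix is a one-line change of definition. Instead of all jammed sends of $p$, count only the ``would-be successes'': let $J_p'$ be the number of slots in which $p$ is the \emph{unique} sender and the slot is jammed. Your per-packet argument already works with $J_p'$ in place of $J_p$ (you only need $J_p'+1$ sends in otherwise-empty slots, and $J_p'\le J_p$). Now $\sum_p J_p' \le \jams$ holds \emph{deterministically}, since each jammed slot contributes at most one unique-sender event. Summing the per-packet bound with $J_p'$ gives total accesses $O\big((\jams+N)\log^3 W^* + N\log^4 W^*\big)$ immediately, with no contention argument needed; this is exactly the global ``there must be $N+\jams$ would-be successes'' count the paper uses.
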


\begin{proof}
We first explain the worst-case bound. Recall that with an adaptive adversary, the jammer commits at the start of a slot whether that slot will be jammed. In contrast, the reactive adversary does not decide to jam a slot until after it sees which packets have chosen to send during that slot.  That is, the reactive adversary may reserve all of its jamming for slots when a packet would have otherwise succeeded.  Recall also that the reactive adversary cannot react to listening---it can only react to sending. 

We revisit the proof of \thmref{finite_adaptive_accesses} at the point that reactivity has an impact---bounding $k$, the number of backons, before the packet succeeds.  Each time a packet observes an empty slot, it also chooses to send with probability $\Omega(1/\ln^3 W^*)$. Notice that if the packet observes $\rho\ln^3 W^*$ empty slots, then it would choose to send in $\Omega(\rho)$ of them in expectation. But the adversary can only block $\jams$ of these, so choosing $\rho=\Omega(\jams)$ is large enough to get more than $\jams$ sends, and hence a send that is not jammed, in expectation. 
To achieve a high probability result, we apply a Chernoff bound: with failure probability exponentially small in $\rho$, the packet chooses to send $\Omega(\rho)$ times.
Choosing $\rho = \Theta(\jams + \ln W^*)$ for some sufficiently large constant, we get that with high probability $W^*$ the number of sends is more than $\jams$.  Thus, with high probability in $W^*$, $k \leq \rho\ln^3 W^* = O((\jams+\ln W^*) \ln^3W^*)$. 

It is an additional step to compute the average bound. Now, collectively, there must be $N$ successes. But since the jammer can block up to $\jams$ transmissions that otherwise would succeed, there need to be enough listening attempts so that there would be $N+\jams$ successes without the jammer. The bounds follow immediately by a similar argument.
\end{proof}

\paragraph{Infinite streams: with adversarial queuing and in general.}
We first turn to infinite streams with adversarial-queuing arrivals.  
The main tool is \lemref{queuing_nearby_inactive}, which allows us to transform the adversarial queuing case into finite instances that are not very large.

\begin{theorem}[{\bf Energy bound for adversarial-queuing arrivals against an adaptive adversary}]\label{thm:queuing_adversarial_accesses}
    Consider an input stream of adversarial queuing arrivals with granularity $S$ and sufficiently small arrival rate $\lambda=O(1)$. Consider an adaptive but not reactive adversary. Then w.h.p.\ in $S$, a particular packet accesses the channel at most $O(\ln^4 S)$ times.
\end{theorem}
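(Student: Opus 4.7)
The plan is to reduce this infinite-stream adversarial-queuing setting to the finite-stream bound of Theorem~\ref{thm:finite_adaptive_accesses} by localizing the analysis to a window of $\Theta(S)$ slots that contains the packet's entire lifespan. Fix a packet $p$ with arrival time $t_a$. By the first bullet of Lemma~\ref{lem:queuing_nearby_inactive}, with high probability in $S$ there is an inactive slot $t_b$ with $t_a - \Theta(S) \leq t_b < t_a$. Since no packets are active at~$t_b$, we have $\Phi(t_b)=0$, and the execution from $t_b$ onwards is a fresh start whose arrival/jamming stream is a suffix of the original stream.

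First I would bound the lifespan of $p$. View the sub-execution starting at $t_b$ as its own (eventually finite) instance. Within any $\Theta(S)$-slot window after $t_b$, the adversarial-queuing constraint caps arrivals plus jammed slots at $O(S)$. Applying Corollary~\ref{cor:implicit-throughput} to this sub-execution (which is valid because $t_b$ carries zero potential and zero active packets) shows that within $\Theta(S)$ active slots an inactive slot must occur again; combined with the queuing constraint, this forces $p$ to succeed by some time $t_c \leq t_b + \Theta(S)$, w.h.p.\ in~$S$. So $p$'s entire lifespan is contained in a window $\mathcal{W}$ of length $O(S)$ beginning at the quiescent slot $t_b$.

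Next I would bound the largest window size $W^*$ reached by $p$ during $\mathcal{W}$. The sub-execution rooted at $t_b$ has at most $O(S)$ arrivals plus jams within $\mathcal{W}$, so Corollary~\ref{cor:phi-bound-corollary} gives $\Phi(t) = O(S)$ at every slot $t \in \mathcal{W}$ w.h.p.\ in $S$. Because $\Phi(t) \geq \alpha_3 \wmax/\ln^2 \wmax$, this forces $\wmax = O(S\polylog S)$ at every such slot, and in particular $W^* = O(\poly(S))$. At this point the counting argument from the proof of Theorem~\ref{thm:finite_adaptive_accesses} applies verbatim: to reach window size $W^*$ from $\wmin$, $p$ performs $O(\log^2 W^*)$ backoffs before needing to balance any backons, and each additional backoff requires a prior backon (i.e., a listen that heard silence). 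Each listen is paired with an independent send attempt with probability $\Omega(1/\log^3 W^*)$, so after $O(\log^4 W^*) = O(\log^4 S)$ total channel accesses, $p$ has sent in some silent slot and left the system, w.h.p.\ in $S$.

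The main technical point is simply that one is \emph{entitled} to re-apply Corollaries~\ref{cor:implicit-throughput} and~\ref{cor:phi-bound-corollary} to the sub-execution rooted at an inactive slot $t_b$; this is legitimate because the state at $t_b$ is indistinguishable from the initial state of an execution whose inputs are exactly the arrivals and jamming occurring after $t_b$, and both corollaries are statements about the evolution of $\Phi$ from zero against an adaptive adversary. The adaptive adversary's full history prior to $t_b$ does not matter, because at $t_b$ the system is memoryless (no packets, no internal state).
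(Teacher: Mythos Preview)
Your proposal is correct and takes essentially the same approach as the paper: localize the packet's lifespan between two inactive slots that are $O(S)$ apart (via Lemma~\ref{lem:queuing_nearby_inactive}), and then invoke the finite-instance energy bound. The paper's proof is shorter because it uses both bullets of Lemma~\ref{lem:queuing_nearby_inactive} directly and then cites Theorem~\ref{thm:finite_adaptive_accesses} as a black box, whereas you re-derive the second inactive slot via Corollary~\ref{cor:implicit-throughput} and re-run the backoff/backon counting argument; but these are presentational choices, not substantive differences.
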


\begin{proof}
    Consider a particular packet $p$ injected at time $t$. Now let us consider the latest time slot $t_0<t$ with no active packets and the earliest time slot $t_1>t$ with no active packets (and hence packet $p$ has also finished).  We thus have a finite instance with $N + \jams \leq t_1-t_0$.  By \lemref{queuing_nearby_inactive}, $t_1-t_0 = O(S)$, with high probability in $S$.  The current claim then follows from \thmref{finite_adaptive_accesses}.
\end{proof}

We next consider an adversary that is both adaptive and reactive for adversarial queuing.  The argument is similar, except that now we apply the finite case of \thmref{finite_reactive_accesses}.

\begin{theorem}[{\bf Energy bounds for adversarial-queuing arrivals against an adaptive and reactive adversary}]\label{thm:queuing_reactive_accesses}
Consider an input stream of adversarial-queuing arrivals with granularity $S$ and sufficiently small arrival rate $\lambda=O(1)$. Consider a reactive and adaptive adversary. 
\begin{itemize}[noitemsep]
\item W.h.p.\ in $S$ a packet accesses the channel at most $O(S)$ times. 
\item W.h.p.\ in $S$, the average number of channel accesses per slot is $O(\ln^4 S)$, averaging over all the slots. 
\end{itemize}
\end{theorem}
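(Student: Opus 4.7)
} The plan is to reduce to the finite reactive case (\thmref{finite_reactive_accesses}) by carving time into sub-intervals bracketed by inactive slots, using \lemref{queuing_nearby_inactive} as the bridge between the adversarial-queuing and finite settings, exactly as in the proof of \thmref{queuing_adversarial_accesses}.

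For the per-packet bound, I would proceed as follows. Fix an arbitrary packet $p$ injected at some slot $t$. By the second bullet of \lemref{queuing_nearby_inactive}, w.h.p.\ in $S$ there is an inactive slot $t'$ with $t < t' \leq t + \Theta(S)$. Since $p$ must have departed by any inactive slot after its injection, $p$ is active for at most $O(S)$ slots. Because the algorithm performs at most one channel access per slot per packet, $p$ accesses the channel at most $O(S)$ times w.h.p.\ in $S$. (One could instead appeal to \thmref{finite_reactive_accesses} applied to the finite instance defined by the inactive slots bracketing $p$; this yields the slightly weaker but also acceptable bound $O(S\polylog S)$. The trivial ``at most one access per slot'' argument gives the cleaner $O(S)$ stated in the theorem.)

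For the amortized bound, I would partition any sufficiently long time horizon $T$ into consecutive sub-intervals of length $\Theta(S)$, each starting and ending at inactive slots produced by \lemref{queuing_nearby_inactive}. Each sub-interval is then a \emph{finite} instance to which \thmref{finite_reactive_accesses} applies, with parameters $N_i \leq \lambda S = O(S)$ arrivals and $\jams_i \leq \lambda S = O(S)$ jammed slots, since the adversarial-queuing constraint bounds arrivals plus jams by $\lambda S$ in any window of $S$ consecutive slots (extended to $\Theta(S)$ windows by a constant factor). Applying the \emph{average} bound of \thmref{finite_reactive_accesses} to sub-interval $i$, the total number of channel accesses by packets living in this sub-interval is
\[
 N_i \cdot O\!\left(\left(\tfrac{\jams_i}{N_i} + 1\right)\log^4(N_i + \jams_i)\right) \;=\; O\!\big((N_i+\jams_i)\log^4 S\big) \;=\; O(S\log^4 S).
\]
Dividing by the $\Theta(S)$ slots in the sub-interval gives an average of $O(\log^4 S)$ accesses per slot; summing across the $\Theta(T/S)$ sub-intervals preserves this average.

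The main obstacle is handling the union bound over many sub-intervals so that the high-probability statements of \lemref{queuing_nearby_inactive} and \thmref{finite_reactive_accesses} all hold simultaneously. Since each event fails with probability $S^{-\Theta(1)}$, I would tune the polynomial (by choosing the constant $c$ in the high-probability guarantees large enough) so that a union bound over $\poly(S)$ sub-intervals within any horizon polynomial in $S$ still yields an $S^{-\Theta(1)}$ overall failure probability. A secondary subtlety is that the bound of \thmref{finite_reactive_accesses} is stated w.h.p.\ in $N+\jams$, not in $S$; but within each sub-interval we have $N_i + \jams_i = \Theta(S)$ w.h.p., because the presence of an active packet at the sub-interval's start (by construction, just before the leading inactive slot) forces $N_i \geq 1$ and successful coverage of the sub-interval's length requires either packets or jams to account for the active slots, so the w.h.p.-in-$(N_i+\jams_i)$ bound transfers to a w.h.p.-in-$S$ bound without loss.
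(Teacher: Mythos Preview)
Your proof is correct and takes essentially the same approach as the paper: both bullets reduce to the finite reactive case via \lemref{queuing_nearby_inactive} and \thmref{finite_reactive_accesses}, with the paper using a fixed grid $t_i = i\cdot\Theta(S)$ and extending each cell outward to bracketing inactive slots rather than your iterative disjoint partition. Your closing justification that ``$N_i+\jams_i=\Theta(S)$ w.h.p.'' is incorrect (a sub-interval starting at an inactive slot may well have arbitrarily few arrivals and jams, with most slots inactive), but the paper itself does not address this subtlety either---it simply invokes \thmref{finite_reactive_accesses} with the upper bound $N+\jams=O(S)$---so your treatment is no less rigorous than the original on this point.
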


\begin{proof}
For the first claim, we leverage the second part of \lemref{queuing_nearby_inactive}.  That is, consider a packet injected at time~$t$.  Then, with high probability in $S$, there is an inactive slot before time $t+\Theta(S)$, and hence the packet finishes by time $t+O(S)$.  Even if it accesses the channel in every slot during which it is active, the packet only makes $O(S)$ accesses.

For the second claim, we reduce to the finite case, but the way we do this is slightly different as we want to be able to average the total number of accesses across a reasonably large interval. Consider times $t_i = i\cdot\Theta(S)$, and consider a particular interval from time $t_i$ to time $t_{i+1}$. Let $t_i'<t_i$ be the latest inactive slot before $t_i$ and $t_{i+1}'>t_{i+1}$ be the earliest inactive slot after $t_{i+1}$.  Then we consider the interval from $t_i'$ to $t_{i+1}'$ as a finite instance. By \lemref{queuing_nearby_inactive}, the total size of this interval is $\Theta(S)$ with high probability in $S$.   There are thus $O(S)$ packets and $O(S)$ jamming.  We can then apply \thmref{finite_reactive_accesses} to this finite instance to conclude that the total number of channel accesses is thus at most $O(S \ln^4(S))$.  Even if all of these accesses fall in the subinterval from $t_i$ to $t_{i+1}$, each slot in that subinterval receives $O(\ln^4(S))$ accesses on average.
\end{proof}

We now turn to general infinite streams.  
%
\infiniteenergy

\begin{proof}
    Since we are only looking at events occurring before time $t$, we essentially have a finite instance. That is, consider an extension to the execution after time $t$ where the adversary performs no further jams or injections. Then \thmref{finite_adaptive_accesses} and \thmref{finite_reactive_accesses} apply, and they achieve the bounds claimed here except across the entire extended execution. Truncating the execution at time $t$ only reduces the number of channel accesses.
\end{proof}

\section{Conclusion and Future Work}

We have provided a simple contention-resolution algorithm that achieves constant throughput with full energy efficiency (i.e., low sending {\it and} listening complexity), despite a powerful jamming adversary. This resolves in the affirmative two open questions about whether full-energy efficiency is possible {\it at all} in the popular ternary-feedback model, and whether it remains possible in the presence of jamming.

There are some natural directions for future work. The ability to prioritize network traffic is increasingly important to modern systems. For example, some real-time applications must guarantee message delivery within a bounded amount of time~\cite{WirelessHART,RoweMaRa06}; additionally, assigning levels of priority to packets enhances fairness by preventing starvation~\cite{4411881,4509751,1286161}. Agrawal et al.~\cite{DBLP:conf/spaa/AgrawalBFGY20} examine contention resolution when each packet has a deadline by which it must succeed. However, their result is not listening efficient and  handles only a form of random jamming. It may be interesting to explore whether jamming by a stronger adversary can be tolerated in a fully energy-efficient manner, where packets may be late, but only as a (slow-growing) function of the amount of jamming.

Another open problem is achieving similar guarantees in a multi-hop setting. Here, a major challenge is that packets listening in the same slot may receive different feedback. This aspect corresponds to the properties of signal propagation; for example, network participants may perceive different channel states depending on their distance from a sender. Richa et al.~\cite{richa:jamming2,richa:competitive-j} derive results for contention resolution in a multi-hop setting with jamming; however, their approach is not fully energy-efficient. The methods we employ here to achieve full energy efficiency would need revision in order to cope with the issue of packets receiving different channel feedback in the same slot.

We note that \OurAlg is not guaranteed to be fair; that is, it is possible for some packets to succeed quickly, while others linger in the system for longer; this is especially pertinent for the infinite case.  Can \OurAlg be revised to provide a guarantee on fairness?

Finally, it may be interesting to consider variable transmission strengths. For example, in certain cases where device hardware permits, jamming may be overcome by transmitting with sufficient energy. This would lead to a more complex cost model, where sending at a higher energy imposes a higher cost on the sender. 



\end{document}